\documentclass[12pt]{article}

%%% should we add the author names in each citation?
%%% explanation of Proposition 4: from each size to average size?
%%% check page 31 again
\usepackage[normalem]{ulem}
\usepackage{amssymb,amsmath,bm}
\usepackage{mathrsfs}
\usepackage{stackengine}
\usepackage{constants,color}
\usepackage{enumerate}
\usepackage{appendix}
\usepackage{relsize}
\usepackage{tikz}
\topmargin -.75in \textwidth 6.5in \oddsidemargin -.05in \textheight
9.3in
\def\qed{\hfill \mbox{\rule{0.5em}{0.5em}}}
\usepackage{subcaption}

\usepackage{graphicx}
%\graphicspath{ {c:/image/} }
\usepackage{amssymb}
\usepackage{bookmark}
\usepackage{indentfirst}
\setlength{\parindent}{1.5em}

\usepackage{cleveref}

\newcommand{\be}{\begin{equation}}
\newcommand{\ee}{\end{equation}}
\newcommand{\bes}{\begin{equation*}}
\newcommand{\ees}{\end{equation*}}
\newcommand{\ba}{\begin{aligned}}
\newcommand{\ea}{\end{aligned}}
\newcommand{\bi}{\begin{itemize}}
\newcommand{\ei}{\end{itemize}}

\newcommand{\EE}{\mathbb{E}}
\newcommand{\PP}{\mathbb{P}}

\newcommand{\var}{\mbox{Var}}

\allowdisplaybreaks

\newcommand{\va}{\mbox{Var}}

% perl texref.pl -unref oco924.tex > labelrefs

\numberwithin{equation}{section}

\newtheorem{theorem}{Theorem}%[section]
\newtheorem{corollary}{Corollary}%[section]

\newtheorem{proposition}{Proposition}%[section]
\newtheorem{lemma}{Lemma}%[section]

\newtheorem{remark}{Remark}[section]

% \title{Single sample estimation from cancer recurrence data}
\title{Parameter Estimation from Single Patient, Single Time-Point Sequencing Data of Recurrent Tumors}
\author{Kevin Leder$^{1}$ \and\hspace*{-6pt} Ruping Sun$^{2}$ \and\hspace*{-6pt}  Zicheng Wang$^{3}$ \and\hspace*{-6pt}Xuanming Zhang$^{1}$}
\date{%
    \footnotesize $^1$Department of Industrial and Systems Engineering, University of Minnesota, Twin Cities, MN 55455, USA. \\[2pt]
     $^2$Department of Laboratory Medicine \& Pathology Masonic Cancer Center, University of Minnesota, Twin Cities, MN 55455, USA. \\[2pt]
    $^3$School of Data Science, The Chinese University of Hong Kong, Shenzhen (CUHK-Shenzhen), China
}

\begin{document}
\maketitle
\begin{abstract}
% In this study, we aim to develop consistent estimators for crucial parameters that regulate the dynamics of tumor cell populations under drug treatment. While drug therapy initially reduces the number of drug-sensitive cells, a population of drug-resistant cells eventually emerges, leading to a resurgence of the tumor burden. To accomplish this, we will derive several large number limit theorems, specifically focusing on the metrics that quantify clonal diversity at the point of cancer recurrence. These theorems will serve as the foundation for constructing consistent estimators.
In this study, we develop consistent estimators for key parameters that govern the dynamics of tumor cell populations when subjected to pharmacological treatments. While these treatments often lead to an initial reduction in the abundance of drug-sensitive cells, a population of drug-resistant cells frequently emerges over time, resulting in cancer recurrence. Samples from recurrent tumors present as an invaluable data source that can offer crucial insights into the ability of cancer cells to adapt and withstand treatment interventions. To effectively utilize the data obtained from recurrent tumors, we derive several large number limit theorems, specifically focusing on the metrics that quantify the clonal diversity of cancer cell populations at the time of cancer recurrence. These theorems then serve as the foundation for constructing our estimators. A distinguishing feature of our approach is that our estimators only require a single time-point sequencing data from a single tumor, thereby enhancing the practicality of our approach and enabling the understanding of cancer recurrence at the individual level.
\\
{\bf Keywords:} Cancer recurrence; Branching process; Parameter estimation.\\
\end{abstract}
%\tableofcontents

\section{Introduction}
A significant obstacle in the effective treatment of cancer is the ability of cancer cells to develop resistance to pharmacological treatments. Drug resistance in cancer often stems from genetic mutations (single nucleotide variants or copy number alternations), which can modify the structure of proteins targeted by anticancer drugs, activate alternate signaling pathways circumventing the drug's action, or enhance the cancer cells' internal repair mechanisms \cite{Housman2014}. Mutated cells, endowed with such drug-resistant traits, gain a selective advantage over their drug-sensitive counterparts under treatment pressures. This clonal selection fosters the proliferation of drug-resistant cancer cells bearing diverse mutations, resulting in significant heterogeneity within recurrent tumors \cite{ding2012clonal, Hunter2006}. While such intratumor heterogeneity within recurrent tumors presents many challenges for subsequent therapy, it also offers an unique opportunity to decode the dynamics of cancer recurrence by examining the timing of occurrence and the tumor's clonal structure at cancer recurrence \cite{CD2021}.

% Cancer remains a formidable challenge in medical treatment due to its ability to develop drug resistance and sustain proliferation. This resilience often stems from genetic mutations, which can modify the structure of proteins targeted by anticancer drugs, activate alternate signaling pathways circumventing the drug's action, or enhance the cancer cells' internal repair mechanisms. As a consequence, cancer cells resistant to existing drug therapies survive and proliferate, eventually dominating the tumor and leading to cancer recurrence. This recurrent behavior can be attributed to the accumulation of genetic alterations. Recent studies have highlighted significant heterogeneity within recurrent tumors \cite{ding2012clonal, Hunter2006}. Given that cancer recurrence marks the re-emergence of detectable tumors, it offers insights into the tumor's clonal structure \cite{CD2021}. Foo et al. further substantiated this by demonstrating that the timing of recurrence can predict the tumor's diversity or aggressiveness based on its mutation rate \cite{foo2013cancer}. In this study, we harness the wealth of information available at the point of cancer recurrence to decipher the underlying attributes of the disease.

High-throughput sequencing has enabled the reconstruction of the subclonal architecture of tumor samples. Initially, genomic sequencing profiles allow the accurate identification of somatic mutations in cancer cells, such as single nucleotide variants (SNVs) \cite{Cibulskis2013} and copy number alterations (CNAs) \cite{Ha2014}.  Subsequently, by clustering the cancer cell fractions associated with these mutations, indicating the proportion of cancer cells carrying each mutation, one can deduce the subclones present in the corresponding tumor sample \cite{Dentro2017}. Subclonal analysis methods utilize SNVs, CNAs, or a combination of both (cf. \cite{Roth2014} and \cite{VanLoo2010}). The ability to detect smaller subclones is influenced by sequencing depth and sampling strategy. For instance, deep sequencing ($\ge{500X}$) unveils finer subclones compared to shallower sequencing ($\le{50X}$) \cite{Stead2013}. Additionally, employing multiple tumor samples (or even single cells) for sampling provides a higher resolution than relying on a single sample \cite{Miller2014}. 
The methods we propose would require knowledge of the cancer cell fractions at the time of cancer recurrence.

%The rapid advancements in genomic sequencing technology have enabled the acquisition of extensive data on cancer cell mutations, including single nucleotide variants (SNVs) and copy number alterations (CNAs), through high-throughput sequencing of tumor samples. This data is crucial for reconstructing the subclonal architecture of tumors, enabling the estimation of cancer cell fractions - the proportion of tumor cells that carry specific mutations. A comprehensive review of the methodologies for reconstructing the subclonal architecture of cancers can be found in \cite{Dentro2017}. Broadly, these methodologies encompass a range of approaches, leveraging either SNVs, CNAs, or a combination of both (cf. \cite{Roth2014} and \cite{VanLoo2010}). The mutational information can also be derived from various sources, including bulk sequencing of a single sample or multiple samples, as well as from single-cell sequencing. The methods we propose would require knowledge of the cancer cell fractions at the time of cancer recurrence.

Recent research has delved deeply into the mathematical modeling and analysis of cancer recurrence. Foo and Leder \cite{JK2013} presented a model that encompasses two types of cancer cells, investigating the dynamics of a subcritical branching process that avoids extinction. They identified two pivotal stochastic moments in cancer recurrence: the `turn-around time' where the total cell population begins to grow again, and the `crossover time' when resistant cells become dominant. Building on this foundation, Foo and colleagues \cite{foo2013cancer} expanded the model to account for heterogeneous escape populations, taking into account the variable fitness of resistant groups sourced from a mutational fitness landscape. This led to insights into the regrowth kinetics and the influence of distribution shape on recurrent tumor composition. Another study \cite{jasmine2012rare} delved deeper, focusing on rare events that precipitate early tumor recurrence, particularly the likelihood that the resistant cell population exceeds its average at the crossover time. Further contributions to this discourse include a study \cite{JKJ2014} that evaluates both random and deterministic fitness of resistant cells, providing evidence for a weak limit on the crossover time in both scenarios. Li and colleagues \cite{li2023comparison} investigated resistance mechanisms, both amplification-driven and mutation-driven. They established a law of large numbers that accounts for the convergence of the stochastic recurrence time across both mechanisms, defining this time as when the cancer cell population exceeds its original size. Avanzini and Antal \cite{avanzini2019cancer} delved into the issue of cancer relapse due to potential metastases. They introduced the concept of `recurrence time' as the stochastic moment when a metastasis first becomes detectable and derived the probability distribution for this occurrence. Taking the discourse further, Leder and Wang \cite{hanagal2022large,CD2021} enriched the understanding of cancer recurrence. Their focus was on examining the large deviations property of recurrence time and presenting an accurate depiction of the expected clonal diversity at recurrence, with a particular emphasis on early recurrence scenarios. Notably, while these studies have offered significant insights into recurrence time and the associated clonal heterogeneity, there remains an uncharted territory. There is a lack of strong convergence results vital for establishing consistent estimators related to metrics that detail clonal diversity, such as the number of clones or the Simpson's Index. This paper sets out to address this gap.

%\textcolor{red}{Foo et al. further substantiated this by demonstrating that the timing of recurrence can predict the tumor's diversity or aggressiveness based on its mutation rate \cite{foo2013cancer}.}

% With the rapid advancements in genomic sequencing technology, researchers can now obtain abundant micro-level data from tumor samples. 

% This abundance of information holds the promise to facilitate accurate parameter estimation in oncological studies. 

In the past decades, a growing body of research has been committed to developing estimators for parameters crucial in understanding tumor dynamics. Liang and Sha~\cite{liang2004modeling} utilized a deterministic, biexponential non-linear mixed-effects model to capture the potential tumor recurrence under treatment. They used tumor volume data from a study on mouse xenograft tumors to estimate the decay rates of the tumor's response to treatment. 
% \textcolor{red}{ Liang and Sha~\cite{liang2004modeling} investigated the deterministic long-term behavior of tumor growth patterns, specifically opting for the Gompertz model instead of exponential or branching process models. Their approach involved the utilization of a biexponential nonlinear mixed-effects model to estimate biphasic tumor decay rates based on tumor volume data. Their focus remained on the deterministic aspects of tumor growth rather than delving into stochastic variations in tumor growth.} 
Lee and colleagues \cite{lee2022inferring} utilized a two-type branching process model to investigate the scenario where one or more initial wild-type cancer cells acquire an additional driver mutation as the tumor evolves. In their approach, they developed a novel methodology that leverages bulk sequencing data obtained at two separate time points, without any intervening treatment, to derive estimates for critical parameters in chronic lymphocytic leukemia. These include the net growth rate of the cancer cells and the timing of the appearance of the additional driver mutation. 
% Lee et al. \cite{lee2022inferring} utilized a two-type branching process model and assumed tumor initiation by cells that had already developed resistance. They introduced a method leveraging bulk sequencing data from two distinct time points with no treatment intervening to estimate several parameters for chronic lymphocytic leukemia which include the net growth rate of cancer cells, the timing of the appearance of driver mutations, the duration between mutation and observation, and the mutation rate. 
Werner and colleagues \cite{werner2020measuring} investigated a tumor population model characterized by exponential growth, employing coalescence theory to estimate the expected distribution of mutational distances from multi-sample sequencing data. In this context, mutational distances represent the mutation count disparities between two ancestral cells of different samples. Following this, they implemented the Metropolis-Hastings algorithm in conjunction with sequencing data from multiple samples at a single time point, enabling the estimation of both the mutation rate and cell survival rate throughout the expansion of the tumor population. 
%%%%%%
Additional research efforts are currently delving into the estimation of parameters by leveraging data on mutation frequencies.
% Werner et al. \cite{werner2020measuring} employed coalescence theory and foundational probability tools to present a precise expected distribution for mutational distances in a cancer cell population exhibiting exponential growth. Here, mutational distances signify the mutation count disparities between two ancestral cells. Subsequently, they applied the Metropolis-Hastings algorithm and multi-sample sequencing data from a single instance in time to determine both the mutation rate and cell survival rate during cellular expansion. Further research streams have also been exploring parameter estimation utilizing mutation frequency data.
Salichos and colleagues \cite{salichos2020estimating} examined the variant-allele frequency of generational hitchhikers neutral mutations that arise prior to driver mutations. They used this data to infer the presence of subclonal drivers and estimate the fitness advantage conferred by these drivers.
% Salichos et al. \cite{salichos2020estimating} devised estimators for the tumor growth rate by examining the variant-allele frequency of generational hitchhikers neutral mutations that arise prior to driver mutations. 
In a series of works \cite{gunnarsson2021exact,gunnarsson2023limit}, Gunnarsson and colleagues adopted the Galton-Watson branching process to precisely describe the law of large numbers for the site frequency spectrum of neutral mutations, considering both large time and size limits. Utilizing these results, they formulated estimators for the extinction probability and mutation rate in a birth-death process. 
% Despite the multitude of research efforts devoted to estimating pivotal parameters in cancer evolution, many studies primarily provide estimations for the net growth rate. They often do not disambiguate the individual birth and death rates. Furthermore, our study offers theoretical assurances regarding the consistency of our estimators, an aspect that is frequently overlooked in existing literature.
Our study differs from existing literature in that, unlike most previous research which primarily focuses on the net growth rate of tumor cells, our approach allows for the estimation of both the birth and death rates of tumor cells. Additionally, our study is distinct in providing theoretical guarantees regarding the consistency and convergence rates of the estimators, a crucial aspect often neglected in the current body of literature.

Lastly, this work is related to the stream of literature that studies the clonal structure of tumors (see for examples \cite{Cheek2020genetic}, \cite{Durrett2011intratumor}, \cite{Nicholson2023sequential}, \cite{Brouard2023genetic} and references therein). This stream of literature primarily seeks to characterize the sizes of all the mutant sub-populations in the large population limit. A focal point of these studies pertains to the mutation rates, wherein certain studies (e.g., \cite{Cheek2020genetic}) consider the rare mutation limit such that the product of the mutation rate $\mu_n$ and the targeted population size $n$ converges to a finite limit (i.e., $\lim\limits_{n\rightarrow \infty}n\mu_n=c<\infty$), while other studies (e.g., \cite{Brouard2023genetic}) consider the power law mutation rates limit such that $\mu_n\propto n^{-\alpha}$ with $\alpha\in (0,1)$. Our study adopts the latter perspective that mutation rates at the individual cell level are negligible ($\mu_n \ll 1$), but at the population level, the cumulative mutation rate is substantial ($n\mu_n \gg 1$), resulting in a high probability of the emergence of drug-resistant cells.

% In our study, we implement a two-type branching process model. This model starts with $n$ drug-sensitive tumor cells, evolving over time as the tumor develops resistance through mutation with associated fitness changes. We define the recurrence time as the moment when the population of resistant cells equals the initial tumor burden, denoted as $n$. Following this, we determine the probability convergence for several key variables related to the recurrence time. These include the recurrence time itself, symbolized as $\gamma_n$, the number of clones present at recurrence, represented as $I_n(\gamma_n)$, and the Simpson Index at recurrence, indicated as $R_n(\gamma_n)$.

In this work, we adopt a two-type branching process model. The process starts with $n$ drug-sensitive tumor cells under pharmacological treatments, evolving over time as the tumor develops resistance through mutation with associated fitness changes. We define the recurrence time as the first time that the population of resistant cells reaches the initial tumor burden, $n$. We provide a set of convergence in probability results for various quantities observed at cancer recurrence. These include the recurrence time itself, denoted by $\gamma_n$, the number of mutant clones present at recurrence, denoted by $I_n(\gamma_n)$, and the Simpson's Index of mutant clones at recurrence, denoted by $R_n(\gamma_n)$.
% In this work, we adopt a two-type branching process model. The process commences with $n$ sensitive tumor cells in a drug environment, and over time, the tumor acquires resistance through mutation with fitness. We introduce the recurrence time as the instance when the population of resistant cells matches the initial tumor burden, $n$. Subsequently, we ascertain the convergence in probability results for various quantities concerning the recurrence time, namely the recurrence time itself $\gamma_n$, the number of clones at recurrence $I_n(\gamma_n)$, and the Simpson Index $R_n(\gamma_n)$. 
The challenge in establishing these results arises from the intricate nature of the underlying stochastic process and the randomness of the recurrence time. Leveraging the convergence in probability results of these quantities, we construct estimators for the net growth rate of both sensitive and resistant cells, the birth rate of resistant cells, and the mutation rate. Compared to our previous work \cite{CD2021} and other related studies in the field, a distinguishing feature of our approach is that our estimators only require a single time-point sequencing data from a single tumor. Additionally, our estimators exhibit statistical consistency, thereby enhancing their reliability in characterizing individual tumors, which aids in personalized medical recommendations.
% The challenge in establishing these results arises from the intricate nature of the underlying stochastic process and the inherent unpredictability associated with the recurrence time. Leveraging the limits of these quantities, we devise estimators for the net growth rate of both sensitive and resistant cells, the birth rate of resistant cells, and the mutation rate. When contrasted with our prior research \cite{CD2021} and related studies in the field, the novelty of this paper primarily stems from our ability to estimate parameters utilizing data from a single sample. Furthermore, our constructed estimators exhibit statistical consistency, enhancing their reliability in characterizing individual tumors and subsequently aiding in personalized medical recommendations.

The organization of this paper is as follows. In Section 2, we introduce our model and previous results. In Section 3, we present the theoretical results, including the convergence in probability results for various quantities at cancer recurrence, and the consistency of our proposed estimators. In Section 4, we evaluate the performance of the estimators through simulation experiments. Lastly, we discuss the limitations of our approach and possible directions for further research in Section 5.

\section{Model and Previous Results}

\subsection{Basic Model underlying the Recurrence Process}\label{Sec:basic_model}

% We assume there are two types of cancer cells, sensitive cells and resistant cells. At time $t = 0$, only sensitive cells exist and its initial population is an extremely large integer $n$ (a tumor with a volume of $1\text{ cm}^3$ is commonly assumed to contain $10^9$ cells). We will assume the initial burden $n$ is a known parameter. At the same time, the treatment will be initiated and the quantity of sensitive cells will continuously decrease due to its sensitivity to therapy. We use $Z_0^n(t)$ to denote the amount of sensitive cells at time $t$ and assume $\left( Z_0^n(t) \right)_{t\geq 0}$ is a sub-critical birth-death process with birth rate $r_0$, death rate $d_0$ and growth rate $\lambda_0 = r_0 - d_0 < 0$. For the resistant cells, we assume its initial population to be 0, but will be continuously generated from the mutation of sensitive cells at rate $ n^{-\alpha}$ for $\alpha \in (0,1].$ Each resistant cell will follow a similar branching process with birth rate $r_1$, death rate $d_1$ and growth rate $\lambda_1 = r_1 - d_1 > 0$ due to its resistance to drugs. We denote its population at time $t$ to be $Z_1^n(t)$. 

We assume there are two types of cancer cells: sensitive cells and resistant cells. At the time $t = 0$, the tumor is comprised solely of sensitive cells with an initial population size denoted by $n$, which we assume to be a known parameter. Typically, $n$ is an extremely large integer (a tumor with a volume of $1\text{ cm}^3$ is commonly assumed to contain $10^9$ cells). A treatment is applied at the time $t = 0$, which causes the number of sensitive cells to decrease over time. 

\begin{remark}
    It is important to note that the emergence of drug-resistant cells can occur before the initiation of pharmacological treatment, which suggests an alternative model that assumes a non-zero initial population of resistant cells \cite{hanagal2022large}. In a following work we investigate this alternative model.
\end{remark}

% \textcolor{red}{In a comprehensive and generalized framework, it is essential to acknowledge the potential existence of drug-resistant mutants prior to the initiation of treatment, thereby suggesting a non-zero initial population of resistant cellular entities \cite{hanagal2022large}. The consideration of such preexisting drug-resistant mutants is pivotal and will be addressed in the scope of our future research endeavors.}

Here, we denote by $Z_0^n(t)$ the number of sensitive cells at time $t$ and assume $\left( Z_0^n(t) \right)_{t\geq 0}$ is a sub-critical birth-death process with birth rate $r_0$, death rate $d_0$, and net growth rate $\lambda_0 = r_0 - d_0 < 0$. We assume that at time $t$, sensitive cells also give birth to a resistant cell and a sensitive cell at rate $Z_0^n\left(t\right) n^{-\alpha}$ for $\alpha \in \left(0,1\right)$ (i.e., a mutation occurs). Each of these birth events results in the creation of a distinct clone of resistant cells which is modeled as a super-critical birth-death process with birth rate $r_1$, death rate $d_1$, and net growth rate $\lambda_1 = r_1 - d_1 > 0$. We denote the population of all resistant clones at time $t$ by $\left(Z_1^n\left(t\right)\right)_{t\ge 0}$ with $Z_1^n(0) = 0$.

% For the cancer recurrence time, we define it as the first time the population of resistant cells recovers to the initial burden level of cancer $n$. We use $\gamma_n$ to denote it and we have 

The ability of resistant cells to proliferate under the treatment allows cancer cells to escape the extinction, resulting in the cancer recurrence. We define the cancer recurrence time as 
\begin{align}
    \label{eq:RecTimeDef}
    \gamma_n=\inf\{t\geq 0: Z_1^n(t)\geq \beta n\}, \text{ with }\beta>1.
\end{align}
In this paper, we adopt a simplified approach by setting $\beta =1$, thereby defining the cancer recurrence time as the first time at which the population of resistant cells reaches the initial population size of sensitive cells. We note that, in practice, the diagnosis of cancer recurrence might not coincide with the precise moment when the population of resistant cells reaches the initial tumor burden. However, our simplification does not substantially affect our subsequent analysis (see \cite{hanagal2022large} for more details).
% \textcolor{red}{In this paper, we will simply set $\beta =1$ which means the recurrence to be the first time that the resistant cell population reaches the initial population size of sensitive cells.} Occasionally, tumor recurrence does not coincide precisely with the point at which the population of resistant cells returns to the initial tumor burden. \sout{To capture this, we define the recurrence time \(\gamma_n\) as the earliest time \(t\) such that \(Z_1^n(t)\) is at least \(\beta n\), i.e., \(\gamma_n = \inf\{t \geq 0 : Z_1^n(t) \geq \beta n\}\).} \textcolor{red}{But mathematically this setting does not make any significance effect to our subsequent analysis, see \cite{hanagal2022large} for further details.}
\begin{remark}
An alternative definition of $\gamma_n$ is 
$$
 \gamma_n=\inf\{t> 0: Z_0^n(t)+Z_1^n(t)\geq n\},
$$ 
which represents the first time the total number of cancer cells exceeds the initial tumor size. However, this definition could lead to a very small value for $\gamma_n$ because of the randomness of the underlying process and the fact that the initial population size is already $n$. Nevertheless, if we ignore the initial fluctuations in population size and focus on the cancer recurrence driven by the growth of resistant cells, the difference between the two definitions becomes negligible, as resistant cells will dominate the population with high probability at the time of cancer recurrence.
% which means the time that total population of cancer cells recover to its initial level. However, this definition might cause $\gamma_n$ extremely small due to the randomness of growth process and the initial population already being $n$. Besides, the difference between the two definitions is slight, because resistant cells will always dominate at the time of cancer recurrence.
\end{remark}
The model consists of $5$ unknown parameters which we denote by $\theta = \left(\alpha, r_0, d_0, r_1, d_1 \right)$. The objective of this paper is to develop estimators for the aforementioned parameters, using observable data during the cancer recurrence process.

% Now we have 5 parameters in the model unknown to us $\theta = \left( \alpha, r_0, d_0, r_1, d_1 \right)$. Our purpose is to develop estimators as many as possible for these parameters based on the observable quantities during the cancer recurrence process.

% \subsection{Definitions of Random Variables in Cancer Process}

\subsection{Observable Data during the Cancer Recurrence Process}
In Section \ref{Sec:basic_model}, we have defined the cancer recurrence time $\gamma_n$, which we assume to be an observable quantity clinically. Subsequently, we will present the definitions of several additional important quantities.

% We have defined the recurrence time $\gamma_n$ of cancer above, which is assumed to be a quantity can be observed experimentally. Next, we will give the definitions of several other important quantities.

\paragraph{Number of Mutant Clones}
Each cell that has acquired resistance through mutation in the event of cell birth of sensitive cells gives rise to a distinct clone. Over time, some clones may die out while others continue to proliferate. For ease of exposition, we use ``mutant clones'' to denote these clones of resistant cells and use $I_n(t)$ to denote the number of mutant clones generated in the time period $\left(0,t\right)$ that has an infinite lineage (i.e., clones that do not go extinct).
% For each resistant cell mutated from sensitive cells, it will grow and divide following a birth-death process. Therefore, it will generate a distinct clone if it did not die out at some time. We use $I_n(t)$ to denote the number of mutant clones generated in the time period $\left(0,t\right)$ which has an infinite lineage.
\begin{remark}
\label{ref:number of mutant clones}
    To put it precisely, in a clinical or laboratory context, what we can observe is the number of mutant clones generated in the time period $\left(0,t\right)$ that survived until time $t$, which we denote by $\hat{I}_n(t)$. We opt for the current definition because it is easier to analyze and the difference between the two quantities ($I_n(t)$ and $\hat{I}_n(t)$) is negligible. In Section \ref{Sec:Convergence_RV}, we will make this statement precise and subsequently provide its proof.
    % To be accurate, what we can observe in a laboratory setting should be $\hat{I}_n(t)$ which is the number of mutant clones generated in the time period $\left(0,t\right)$ and did not die out before time $t$. The reason we use the definition above is it is easier to be represented mathematically and the difference between the two definitions is negligible. We will show the negligible difference in the following. This satisfies our intuition that if we observe the existence of a clone then it will have low probability to go extinct in the future. We will also show this via simulation later. 
\end{remark} 

\paragraph{Simpson's Index of Mutant Clones}
% Simpson's Index is a quantity used to mark the diversity in size among clones of cancer. To be precise, it represents the probability that two cells randomly chosen come from the same clone. If we use $X_{i,n}(t)$ to denote the number of mutants in the $i_{th}$ clone at time $t$, then Simpson's Index at time $t, R_n(t)$ is defined as 
Simpson's Index measures the diversity in size among mutant clones. Specifically, it represents the probability that two resistant cells chosen at random come from the same clone. If we use $X_{i,n}(t)$ to denote the number of resistant cells in the $i_{th}$ clone at time $t$, then the Simpson's Index of mutant clones at time $t$, denoted by $R_n(t)$, is defined as 
\begin{align}
    \label{eq:SimIndDef}
    R_n(t)=\sum_{i=1}^{\hat{I}_n\left(t\right)}\left(\frac{X_{i,n}(t)}{Z_1^n\left(t\right)}\right)^2.
\end{align} 
For completeness, when $Z_1^n(t) = 0$, we define $R_n(t) = 0$.
\begin{remark}
Some of these metrics, such as Simpson's Index, might be observed imprecisely due to potential low-quality data using current experimental techniques and technology. Nonetheless, we expect that with technological advancements, it will be possible to measure the quantities more accurately.
% Some of these quantities might not be able to be observed experimentally by current methods and technology, such as Simpson's Index, but we believe they would be observable with technology development in the future.
\end{remark}

\begin{remark}
Define $\omega_n = \left\{ \gamma_n = \infty \right\}$, i.e., the event that the tumor never recurs. Note that when $\omega_n$ occurs, the observable quantities discussed in this section are ill-defined. However, it does not affect our results as 
\begin{align*}
    \lim_{n\to\infty}\PP\left(\gamma_n = \infty\right) = 0.
 \end{align*}
To show this, we first observe that
 \begin{align*}
     \PP\left(\gamma_n = \infty\right)& = \PP\left(Z_1^n(t) < n \text{\;for any \;} t>0 \right),
 \end{align*}
and the desired result follows directly from Corollary \ref{cor:Z1_larger_n}, which states that 
 \begin{align*}
      \lim_{n\to\infty}\PP\left(Z_1^n(k\zeta_n)<n\right) = 0
 \end{align*}
for $k>1$.     
\end{remark}
\subsection{Previous Results}
Let $\Phi^n_i(t) = \EE\left[ Z_i^n(t) \right]$, for $i = 0,1$. We have (cf. \cite{bp})
\begin{align}
    \Phi_0^n(t) &= ne^{\lambda_0t}, \nonumber\\
    \Phi_1^n(t) &=\frac{1}{\lambda_1-\lambda_0}n^{1-\alpha}e^{\lambda_1 t}\left(1-e^{\left(\lambda_0-\lambda_1\right)t}\right). \label{eq:expec_Z1}
\end{align}
% \begin{align}
% \label{eq:expec_Z1}
%     \Phi_1^n(t) &=\frac{1}{\lambda_1-\lambda_0}n^{1-\alpha}e^{\lambda_1 t}\left(1-e^{\left(\lambda_0-\lambda_1\right)t}\right).
% \end{align}
It is not hard to find that $\Phi_1^n(t)$ is strictly increasing with $t$. Therefore there is a unique solution $\zeta_n$ for the equation
\begin{align*}
    \Phi_1^n(t) = n,
\end{align*}
and we can obtain that $\zeta_n \sim \frac{\alpha}{\lambda_1}\log{n}$.
% Via some simple calculations, we know $\zeta_n \sim \frac{\alpha}{\lambda_1}\log{n}$ and 
From our previous work \cite{JK2013} \cite{CD2021}, we have
\begin{align}
& \lim_{n\to\infty}\PP\left(\left|\gamma_n - \frac{\alpha}{\lambda_1}\log{n}\right|>\epsilon\right) = 0, \label{cip_gamma}\\
& \lim\limits_{n\rightarrow \infty}\frac{1}{n^{1-\alpha}}\EE\left[I_n\left(\zeta_n\right)\right]=-\frac{\lambda_1}{\lambda_0 r_1}, \text{ and} \label{ciE_numofclone}\\
& \lim_{n\to\infty}n^{1-\alpha}\EE\left[R_n\left(\zeta_n\right)\right]= \frac{2r_1\left(\lambda_1-\lambda_0\right)^2}{ \lambda_1 \left(2\lambda_1-\lambda_0\right)}. \label{ciE_Simp}
\end{align}
% \begin{align}
% \label{ciE_numofclone}
%     \lim\limits_{n\rightarrow \infty}\frac{1}{n^{1-\alpha}}\EE\left[I_n\left(\zeta_n\right)\right]=-\frac{\lambda_1}{\lambda_0 r_1}.
% \end{align} 
% To be accurate, we have
In addition, we have
\begin{align}
\label{eq: exact_In}
    \EE\left[I_n\left(\zeta_n\right)\right] = -\frac{\lambda_1}{\lambda_0r_1}n^{1-\alpha}\left(1  - e^{\lambda_0\zeta_n}\right).
\end{align}

% and
% \begin{align}
% \label{ciE_Simp}
%     \lim_{n\to\infty}n^{1-\alpha}\EE\left[R_n\left(\zeta_n\right)\right]= \frac{2r_1\left(\lambda_1-\lambda_0\right)^2}{ \lambda_1 \left(2\lambda_1-\lambda_0\right)}.
% \end{align}
% Note in \eqref{cip_gamma} $\gamma_n$ is  one of the quantities that can be observed experimentally, $\alpha$ and $\lambda_1$ are parameters of the model we want to estimate. Therefore, we have built a bridge between the `unobservable' parameters of model and one of the `observable' quantities. Now we hope to find some more connections between them so that we can construct the consistent estimators. \eqref{ciE_numofclone} and \eqref{ciE_Simp} show that under proper scaling, number of mutant clones and Simpson's Index would converge in expectation at a deterministic time $\zeta_n$. These two results are not helpful enough for constructing estimators mainly out of two reasons: First, $\zeta_n$ is a deterministic time we can't observe in practice. The information of quantities are always gathered at the time of cancer recurrence. Next, the convergence in expectation requires the average of a bunch of data to achieve accuracy which will be rather costly, and it cannot guarantee the estimators to be consistent. Therefore, we need to generalize these results to the random time $\gamma_n$ and prove for stronger convergence results.

In equation \eqref{cip_gamma}, $\gamma_n$ is an experimentally observable quantity, whereas $\alpha$ and $\lambda_1$ are parameters in our model that we aim to estimate. This establishes a link between the model's `unobservable' parameters and an `observable' quantity. Our goal is to further explore these connections to develop consistent estimators for our model parameters. Equations \eqref{ciE_numofclone} and \eqref{ciE_Simp} are two other results that connect observable quantities (the number of mutant clones and Simpson's Index) with model parameters. However, these results have limited utility for constructing estimators in our context for two main reasons: Firstly, $\zeta_n$ is a deterministic approximation for $\gamma_n$ and it is not observable in practice. Secondly, estimators constructed based on the convergence of expectation require a large dataset (see \cite{CD2021}), which is not only costly but also does not ensure the consistency of the estimators. Furthermore, expectation based estimators do not allow for patient specific parameters and would instead estimate a single parameter set for the population. Consequently, we will extend these results (\eqref{ciE_numofclone} and \eqref{ciE_Simp}) to the random time $\gamma_n$ and establish the corresponding convergence in probability results.

\section{Theoretical Results}

In this section, we first present our main results on the convergence of `observable' data during the cancer recurrence processes (Section \ref{Sec:Convergence_RV}). Subsequently, leveraging these findings, we develop consistent estimators for the parameters of our model (Section \ref{Sec:Estimator}).

% \subsection{Convergence Results of Random Variables}\label{Sec:Convergence_RV}
\subsection{Convergence Results of Observable Data}\label{Sec:Convergence_RV}
\label{results_of_1st_model}
We first establish a stronger convergence result for the cancer recurrence time $\gamma_n$. In \cite{JK2013}, the authors showed that $\gamma_n-\zeta_n$ converges to 0 in probability. In Theorem \ref{cip_gamma_faster}, we show that this result can be strengthened by specifying the convergence rate.
\begin{theorem}[Convergence rate of $\gamma_n$]\label{cip_gamma_faster}
    For any $\epsilon> 0, u<(1-\alpha)/2$, we have
    \begin{align}
        \lim_{n\to\infty}\PP\left(n^u\left|\gamma_n - \zeta_n\right|>\epsilon\right) = 0.
    \end{align}
\end{theorem}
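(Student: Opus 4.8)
The plan is to reduce the statement to two-sided deviation estimates at the deterministic sandwiching times $t_\pm = \zeta_n \pm \epsilon n^{-u}$. Since $\Phi_1^n$ is smooth and strictly increasing with $\Phi_1^n(\zeta_n)=n$, and a short computation gives $(\Phi_1^n)'(\zeta_n) \sim \lambda_1 n$ (because $e^{\lambda_1\zeta_n}\sim(\lambda_1-\lambda_0)n^{\alpha}$ while $e^{\lambda_0\zeta_n}$ is a negligible power of $n$), a first-order Taylor expansion yields $\Phi_1^n(t_\pm) = n \pm \lambda_1\epsilon\, n^{1-u} + o(n^{1-u})$. Thus the deterministic mean sits a macroscopic distance $\Theta(n^{1-u})$ above (resp. below) the recurrence level $n$ at $t_+$ (resp. $t_-$). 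Writing the target event as a subset of $\{\gamma_n > t_+\}\cup\{\gamma_n < t_-\}$, I would use the inclusions $\{\gamma_n>t_+\}\subseteq\{Z_1^n(t_+)<n\}$ and $\{\gamma_n<t_-\}=\{\sup_{s\le t_-}Z_1^n(s)\ge n\}$ to split the argument into an easy pointwise upper bound and a harder maximal lower bound.

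The quantitative engine is a sharp variance estimate. Representing $Z_1^n(t)=\sum_i Y_i(t-s_i)$ as a sum over resistant clones seeded at the jump times $s_i$ of an inhomogeneous point process of intensity $n^{-\alpha}Z_0^n(s)$, with the $Y_i$ independent supercritical birth-death processes started from one cell, I would apply the law of total variance conditionally on the sensitive trajectory $Z_0^n$. The within-clone contribution $\int_0^t n^{-\alpha}\Phi_0^n(s)\,\EE[Y(t-s)^2]\,ds$ is $O(n^{1-\alpha}e^{2\lambda_1 t})$ since $\EE[Y(u)^2]=O(e^{2\lambda_1 u})$ and the $s$-integral converges ($\lambda_0-2\lambda_1<0$); the intensity-fluctuation contribution, governed by $\Cov(Z_0^n(s),Z_0^n(s'))=O\big(n\, e^{\lambda_0(s\wedge s')}e^{\lambda_0|s-s'|}\big)$, is $O(n^{1-2\alpha}e^{2\lambda_1 t})$ and hence subdominant because $\alpha>0$. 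This gives $\Var(Z_1^n(t))=O(n^{1-\alpha}e^{2\lambda_1 t})$, which at $t\approx\zeta_n$ equals $O(n^{1+\alpha})$. This is exactly the estimate that fixes the exponent: the recurrence-scale fluctuations of $Z_1^n$ are of order $n^{(1+\alpha)/2}$, negligible against the $\Theta(n^{1-u})$ gap precisely when $u<(1-\alpha)/2$.

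For the upper bound, Chebyshev's inequality gives $\PP(\gamma_n>t_+)\le\PP(Z_1^n(t_+)<n)\le \Var(Z_1^n(t_+))/(\Phi_1^n(t_+)-n)^2 = O(n^{1+\alpha}/n^{2(1-u)})=O(n^{2u-(1-\alpha)})\to 0$. The lower bound is the main obstacle, because $\gamma_n$ is a first-passage time of a non-monotone process, so controlling $Z_1^n$ at the single time $t_-$ does not suffice and a maximal inequality is needed. Here I would use that $M_n(t):=e^{-\lambda_1 t}Z_1^n(t)$ is, by its generator, a nonnegative submartingale with explicit Doob decomposition $M_n=N_n+A_n$, where $N_n$ is a martingale with $N_n(0)=0$ and $A_n(t)=\int_0^t e^{-\lambda_1 s}n^{-\alpha}Z_0^n(s)\,ds$ is the compensator, satisfying $\EE[A_n(t)]=\EE[M_n(t)]=e^{-\lambda_1 t}\Phi_1^n(t)$. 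Bounding $\sup_{s\le t_-}Z_1^n(s)\le e^{\lambda_1 t_-}\big(\sup_{s\le t_-}N_n(s)+A_n(t_-)\big)$ reduces $\{\sup_{s\le t_-}Z_1^n(s)\ge n\}$ to the event that $\sup_{s\le t_-}N_n(s)+A_n(t_-)$ exceeds $ne^{-\lambda_1 t_-}$, a level that overshoots $\EE[A_n(t_-)]$ by $\Theta(n^{1-\alpha-u})$.

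It then remains to control the two fluctuation terms at that scale. Doob's $L^2$ maximal inequality gives $\EE[\sup_{s\le t_-}N_n(s)^2]\le 4\Var(N_n(t_-))=O(n^{1-\alpha})$ (using $\Var(N_n)\le 2\Var(M_n)+2\Var(A_n)$ and $\Var(M_n(t_-))=e^{-2\lambda_1 t_-}\Var(Z_1^n(t_-))=O(n^{1-\alpha})$), so the martingale exceeds $\Theta(n^{1-\alpha-u})$ with probability $O(n^{2u-(1-\alpha)})$; and $\Var(A_n(t_-))=O(n^{1-2\alpha})$ makes the compensator deviate by that much with probability $O(n^{2u-1})$. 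Both vanish for $u<(1-\alpha)/2$, completing the lower bound and hence the proof. The crux throughout is the variance bound of the second step; once it is in place, both directions are matched Chebyshev/Doob estimates, and the non-monotonicity of $Z_1^n$ is the only genuine structural difficulty, handled by the submartingale maximal inequality.
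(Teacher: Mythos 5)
Your proposal is correct, but it follows a genuinely different route from the paper. The paper's own proof is a two-step reduction to prior work: it first extends Lemma 2 of \cite{JK2013} to a uniform-in-rescaled-time law (Lemma \ref{lemma: Z1_converge_to_mean}), controlling $\sup_{0\le k\le b}n^{\lambda_1 k/\lambda_0-u}\left(Z_1^n(kt_n)-\Phi_1^n(kt_n)\right)$ with $t_n=-\log n/\lambda_0$, and then invokes the hitting-time argument of Theorem 1 in \cite{hanagal2022large} with the sharpened scaling. You instead localize: sandwich $\gamma_n$ between $t_\pm=\zeta_n\pm\epsilon n^{-u}$, dispatch $\{\gamma_n>t_+\}\subseteq\{Z_1^n(t_+)<n\}$ by pointwise Chebyshev, and handle the genuinely first-passage side $\{\gamma_n\le t_-\}=\{\sup_{s\le t_-}Z_1^n(s)\ge n\}$ via the discounted submartingale $e^{-\lambda_1 t}Z_1^n(t)$ with explicit compensator $A_n(t)=\int_0^t e^{-\lambda_1 s}n^{-\alpha}Z_0^n(s)\,ds$, Doob's $L^2$ maximal inequality for the martingale part, and Chebyshev for the compensator fluctuations. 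Both arguments are powered by the same quantitative engine — your law-of-total-variance computation recovers exactly the paper's \eqref{Order: second moment of Z1}, $\Var\left[Z_1^n(\zeta_n)\right]\sim n^{1+\alpha}$, and the exponent $(1-\alpha)/2$ arises identically from fluctuations of order $n^{(1+\alpha)/2}$ against the mean gap $\Theta(n^{1-u})$ — so it is plausible the cited lemma's proof uses similar martingale technology internally. What each approach buys: yours is self-contained, needs no uniform control of the trajectory away from $\zeta_n$, and makes the threshold exponent transparent; the paper's yields a stronger uniform statement over the whole time range $[0,bt_n]$ (potentially reusable elsewhere) at the cost of deferring the details to two external proofs. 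Only routine points remain for you to tighten: the correct identity is $\{\gamma_n\le t_-\}=\{\sup_{s\le t_-}Z_1^n(s)\ge n\}$ (the strict-versus-weak inequality is immaterial for your upper bound); your Taylor step $\Phi_1^n(t_\pm)=n\pm\lambda_1\epsilon n^{1-u}+o(n^{1-u})$ implicitly assumes $u>0$, but monotonicity of the event in $u$ covers $u\le 0$; and you should note that finiteness of second moments upgrades $N_n$ from local martingale to true martingale before applying Doob — all of which are immediate.
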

\begin{proof}
    See Section \ref{sec:proof of cip_gamma_faster}.
\end{proof}

% Then, we will show that the number of mutant clones would converge in probability.
Next, we show a convergence in probability result for the number of mutant clones, $I_n(\gamma_n)$, observed at the time of cancer recurrence.
\begin{theorem}[Convergence of $I_n(\gamma_n)$]
    \label{cip_In}
    For any $\epsilon > 0, u<\min\{(1-\alpha)/2,  -\lambda_0\alpha/2\lambda_1 \}$,  we have
\begin{align}
    \label{eq:cip_In}
    \lim_{n\to\infty}\PP\left(n^u\left| \frac{1}{n^{1-\alpha}}I_n\left(\gamma_n\right)+\frac{\lambda_1}{\lambda_0r_1}\right| > \epsilon \right) = 0.
\end{align}
\end{theorem}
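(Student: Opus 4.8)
The plan is to transfer the limit for the \emph{expectation} of $I_n(\zeta_n)$ in \eqref{ciE_numofclone}--\eqref{eq: exact_In} into a convergence in probability for $I_n(\gamma_n)$ at the \emph{random} recurrence time, with an explicit rate. I would work in three stages: (i) set up a tractable probabilistic representation of $I_n$ at the deterministic time $\zeta_n$; (ii) separately control the bias and the fluctuation of $I_n(\zeta_n)$ around $c\,n^{1-\alpha}$ with $c=-\lambda_1/(\lambda_0 r_1)$; and (iii) replace $\zeta_n$ by $\gamma_n$ using the sharp window from Theorem \ref{cip_gamma_faster}.

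First I would exploit the clone-creation mechanism. Conditional on the whole sensitive trajectory $\left(Z_0^n(s)\right)_{s\le t}$, mutations form an inhomogeneous Poisson process with intensity $Z_0^n(s)n^{-\alpha}$, and each resulting clone has an infinite lineage independently with the survival probability $p_1=\lambda_1/r_1$. By Poisson thinning, $I_n(t)$ is then, conditional on the trajectory, Poisson with mean $p_1 n^{-\alpha}\int_0^t Z_0^n(s)\,ds$. Taking expectations with $\EE[Z_0^n(s)]=ne^{\lambda_0 s}$ recovers \eqref{eq: exact_In}, and the conditional-Poisson law yields
\[
\Var\left(I_n(t)\right)=\EE\left[I_n(t)\right]+p_1^2 n^{-2\alpha}\Var\left(\int_0^t Z_0^n(s)\,ds\right).
\]
Because $Z_0^n$ is a sum of $n$ i.i.d.\ subcritical birth--death processes started from one cell, $\Var\left(\int_0^\infty Z_0^n\right)=O(n)$, so the second term is $O(n^{1-2\alpha})$ and is dominated by $\EE[I_n(\zeta_n)]=O(n^{1-\alpha})$; hence $\Var(I_n(\zeta_n))=O(n^{1-\alpha})$.

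With these estimates I would bound the error at the deterministic time. From \eqref{eq: exact_In} the bias is $n^{-(1-\alpha)}\EE[I_n(\zeta_n)]-c=\tfrac{\lambda_1}{\lambda_0 r_1}e^{\lambda_0\zeta_n}$, and since $\zeta_n\sim(\alpha/\lambda_1)\log n$ we get $e^{\lambda_0\zeta_n}=O\!\left(n^{\lambda_0\alpha/\lambda_1}\right)$, which is the source of the second threshold. The fluctuation $n^{-(1-\alpha)}\bigl(I_n(\zeta_n)-\EE I_n(\zeta_n)\bigr)$ has standard deviation $O\!\left(n^{-(1-\alpha)/2}\right)$, so Chebyshev's inequality forces $u<(1-\alpha)/2$, the first threshold. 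I would then pass from $\zeta_n$ to $\gamma_n$ by monotonicity: since $I_n$ is nondecreasing, on $\{|\gamma_n-\zeta_n|\le\delta_n\}$ one has $I_n(\zeta_n-\delta_n)\le I_n(\gamma_n)\le I_n(\zeta_n+\delta_n)$, and by Theorem \ref{cip_gamma_faster} this event has probability tending to one for $\delta_n$ decaying slightly slower than $n^{-(1-\alpha)/2}$. It remains to show the window increment $I_n(\zeta_n+\delta_n)-I_n(\zeta_n-\delta_n)$, after dividing by $n^{1-\alpha}$ and multiplying by $n^u$, vanishes; its conditional mean is $p_1 n^{-\alpha}\int_{\zeta_n-\delta_n}^{\zeta_n+\delta_n}\EE[Z_0^n(s)]\,ds=O\!\left(n^{1-\alpha}e^{\lambda_0\zeta_n}\delta_n\right)$, and the same conditional-Poisson variance bound applies to the window.

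The main obstacle is the coupling between the random stopping time $\gamma_n$ (which is determined by $Z_1^n$ and is therefore correlated with $I_n$) and the clone counter: one cannot simply substitute $\gamma_n=\zeta_n$, and the monotonicity sandwich together with the sharp window of Theorem \ref{cip_gamma_faster} is the device that decouples them. The delicate quantitative point is to render the window increment negligible after the $n^u$ magnification while the clone-creation rate near $\zeta_n$ is already of order $n^{1-\alpha+\lambda_0\alpha/\lambda_1}$; I expect that controlling this increment through a second-moment (Chebyshev) estimate is precisely what produces the stated restriction $u<-\lambda_0\alpha/(2\lambda_1)$, with the factor $1/2$ reflecting the $n^{2u}$ in Chebyshev's inequality rather than the bare bias scale $n^{\lambda_0\alpha/\lambda_1}$.
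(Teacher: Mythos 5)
Your proof is correct and follows, in essence, the paper's own two-stage argument: the paper likewise proves convergence at the deterministic time $\zeta_n$ using the conditional (thinned) Poisson structure and a Chebyshev-type bound (Lemma \ref{In_zeta_converge_to_limit}), and then transfers to $\gamma_n$ through the monotonicity sandwich $I_n(\zeta_n-\delta_n)\le I_n(\gamma_n)\le I_n(\zeta_n+\delta_n)$ on $\{|\gamma_n-\zeta_n|<\delta_n\}$ with a first-moment bound on the window increment (Lemma \ref{In_gamma_converge_to_In_zeta}). Two minor differences are worth recording. First, where the paper evaluates $\EE\left[\left(\int_0^{\zeta_n}Z_0^n(t)\,dt\right)^2\right]$ by explicit double integrals of $\EE\left[Z_0^n(t)Z_0^n(s)\right]$, you obtain $\Var\left(\int_0^{\zeta_n}Z_0^n(t)\,dt\right)=O(n)$ directly from the i.i.d.\ single-cell decomposition; this is cleaner and yields the same conclusion $\Var\left(I_n(\zeta_n)\right)=O(n^{1-\alpha})$. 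Second, by separating the bias $\frac{\lambda_1}{\lambda_0 r_1}e^{\lambda_0\zeta_n}=\Theta\left(n^{\alpha\lambda_0/\lambda_1}\right)$ from the fluctuation rather than bounding the bundled second moment of the deviation, your argument in fact establishes the conclusion for all $u<\min\{(1-\alpha)/2,\,-\lambda_0\alpha/\lambda_1\}$, a strictly larger range than the theorem states, so it covers the claimed range a fortiori. Your closing speculation about the origin of the factor $1/2$ is, however, misplaced: in the paper the constraint $u<-\lambda_0\alpha/2\lambda_1$ arises in the deterministic-time step, where the bound on $\EE\left[\left(n^{\alpha-1}I_n(\zeta_n)+\frac{\lambda_1}{\lambda_0 r_1}\right)^2\right]$ carries a term of order $n^{\alpha\lambda_0/\lambda_1}$ (first power, so requiring $n^{2u}n^{\alpha\lambda_0/\lambda_1}\to 0$ halves the exponent), and not from the window increment; the window step in Lemma \ref{In_gamma_converge_to_In_zeta} uses a first-moment Markov bound and yields only the non-binding condition $u<(1-\alpha)/2-\alpha\lambda_0/\lambda_1$. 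This misattribution is tangential and does not affect the validity of your proof.
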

\begin{proof}
    See Section \ref{sec:proof of cip_In}.
\end{proof}

% As noted in Remark \ref{ref:number of mutant clones} above, the $I_n(t)$ used here represents the number of clones that will not go extinct. Actually, we have no way to know if a clone will go extinction or not at a finite time $t$, but we can show the real observation $\hat{I}_n(t)$, which is defined as the number of clones that are generated before time $t$ and stay alive at time $t$, has a negligible difference with $I_n(t)$. 

In the aforementioned Remark \ref{ref:number of mutant clones}, the term $I_n(t)$ represents the number of clones generated before time $t$ that do not go extinct. It is important to note that at any given time $t$, it is impossible to know whether a clone will go extinct or not. However, it can be demonstrated that the actual observation $\hat{I}_n(t)$, defined as the number of clones that emerge prior to time $t$ and remain extant at time $t$, has a negligible difference with $I_n(t)$.
% \begin{proposition}
% \label{prop:diff of In and hat_In}
%     For some $c_1, c_2>0$, we have
%     \begin{align*}
%         \frac{1}{n^{1-\alpha}}\EE\left[ \hat{I}_n(t)- I_n(t) \right]\leq \frac{c_1d_1}{r_1(c_2+\lambda_0)}\left( e^{\lambda_0 t}-e^{-c_2t} \right).
%     \end{align*}
% \end{proposition}
\begin{proposition}
\label{prop:diff of In and hat_In}
    There exists $c_1>0$, such that for any $t>0$ 
    \begin{align*}
        \frac{1}{n^{1-\alpha}}\EE\left[ \hat{I}_n(t)- I_n(t) \right]\leq \frac{c_1d_1}{r_1(\lambda_0+\lambda_1)}\left( e^{\lambda_0 t}-e^{-\lambda_1 t} \right)
    \end{align*}
    if $\lambda_0+\lambda_1\neq 0$; and 
    \begin{align*}
        \frac{1}{n^{1-\alpha}}\EE\left[ \hat{I}_n(t)- I_n(t) \right]\leq \frac{c_1d_1}{r_1}te^{-\lambda_1 t}
    \end{align*} 
    otherwise.
\end{proposition}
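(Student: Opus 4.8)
The plan is to give $\hat{I}_n(t)-I_n(t)$ a clean probabilistic interpretation and then reduce its expectation to a single integral against the clone--creation intensity. Since a clone with infinite lineage never dies out and is therefore alive at every finite time, every clone counted by $I_n(t)$ is also counted by $\hat{I}_n(t)$; hence $\hat{I}_n(t)-I_n(t)\ge 0$ and this difference counts exactly those clones that are born in $(0,t)$, are \emph{extant at time $t$}, yet are \emph{destined to go extinct}. This reframing is the conceptual heart of the argument.

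Next I would analyze a single clone. A clone created at time $s$ has age $\tau=t-s$ at time $t$ and evolves as a supercritical birth--death process started from one cell, with extinction-by-age-$\tau$ probability given by the standard linear birth--death formula
\begin{align*}
P_0(\tau)=\frac{d_1\left(e^{\lambda_1\tau}-1\right)}{r_1 e^{\lambda_1\tau}-d_1},
\end{align*}
and ultimate extinction probability $d_1/r_1$. Since a clone with infinite lineage is alive at every finite age, the probability that a clone is alive at age $\tau$ but eventually goes extinct is
\begin{align*}
q(\tau)=\frac{d_1}{r_1}-P_0(\tau)=\frac{d_1\lambda_1}{r_1\left(r_1 e^{\lambda_1\tau}-d_1\right)}\le \frac{d_1}{r_1}\,e^{-\lambda_1\tau},
\end{align*}
the last inequality following from $r_1 e^{\lambda_1\tau}-d_1\ge \left(r_1-d_1\right)e^{\lambda_1\tau}=\lambda_1 e^{\lambda_1\tau}$ for $\tau\ge 0$.

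I would then aggregate over all clones. Clones are created according to a point process of intensity $Z_0^n(s)\,n^{-\alpha}$ at time $s$, and conditionally on their birth times they evolve independently of one another and of the sensitive population $Z_0^n$. A Campbell-type identity, together with Fubini's theorem (all integrands being nonnegative) and the mean formula $\EE\left[Z_0^n(s)\right]=n e^{\lambda_0 s}$, then yields
\begin{align*}
\frac{1}{n^{1-\alpha}}\EE\left[\hat{I}_n(t)-I_n(t)\right]=\int_0^t q(t-s)\,e^{\lambda_0 s}\,ds\le \frac{d_1}{r_1}\,e^{-\lambda_1 t}\int_0^t e^{\left(\lambda_0+\lambda_1\right)s}\,ds.
\end{align*}
Evaluating the elementary integral in the two cases $\lambda_0+\lambda_1\ne 0$ and $\lambda_0+\lambda_1=0$ produces precisely the two claimed bounds (in fact with constant $c_1=1$, so any $c_1\ge 1$ certainly works).

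The main obstacle is the rigorous justification of the Campbell-type identity: one must argue that the expected number of clones that are born before $t$, alive at $t$, and doomed equals the integral of the single-clone probability $q(t-s)$ against the creation intensity $Z_0^n(s)\,n^{-\alpha}$. This rests on the branching property — resistant clones evolve as independent birth--death processes given their birth times — and on the independence of these clone dynamics from the driving sensitive process $Z_0^n$, which permits conditioning on the path of $Z_0^n$ and then taking expectations. Once this decomposition and the interchange of expectation and integration are in place, the remaining steps are routine computations.
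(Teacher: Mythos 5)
Your proposal is correct, and its skeleton is the same as the paper's: both interpret $\hat{I}_n(t)-I_n(t)$ as the count of clones born in $(0,t)$ that are extant at $t$ but doomed to extinction, both treat this count (conditionally on the path of $Z_0^n$) as a non-homogeneous Poisson functional with intensity $Z_0^n(s)n^{-\alpha}$ times the single-clone probability $\PP(t-s<\tau_0<\infty)$, and both finish with the same Fubini step and elementary integral $\int_0^t e^{(\lambda_0+\lambda_1)s}\,ds$. Where you genuinely diverge is the single-clone ingredient: the paper writes $\PP(t-s<\tau_0<\infty)=\PP(\tau_0<\infty)\PP(\tau_0>t-s\mid\tau_0<\infty)$, invokes the duality that a supercritical birth--death process conditioned on extinction is the subcritical process with rates $r_1$ and $d_1$ interchanged, and then cites a tail bound $\PP(\tilde{\tau}_0>t)\le c_1 e^{-\lambda_1 t}$ from the literature with an unspecified constant $c_1$; you instead compute the probability exactly from the classical extinction-time formula, obtaining $q(\tau)=\frac{d_1\lambda_1}{r_1\left(r_1e^{\lambda_1\tau}-d_1\right)}\le \frac{d_1}{r_1}e^{-\lambda_1\tau}$ (your algebra checks out, including the inequality $r_1e^{\lambda_1\tau}-d_1\ge \lambda_1 e^{\lambda_1\tau}$). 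This buys you a self-contained argument with the explicit constant $c_1=1$, which is strictly sharper than the paper's statement; the paper's route, by contrast, does not rely on having a closed-form transition formula and therefore transfers more readily to settings where only a tail estimate for the dual subcritical extinction time is available. Your acknowledged Campbell-identity step is exactly the paper's ``conditional non-homogeneous Poisson process'' observation, so no gap remains there.
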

\begin{proof}
We use $A(t)$ to denote the arrival process of mutant clones that are generated within the interval $(0,t)$ and go extinct eventually, but survive until time $t$. $A(t)$ is clearly a conditional non-homogeneous Poisson process with mean
% We use $A(t)$ to denote the generation process of the clones created in $(0,t)$ which die out eventually, but survive before time $t$. We can find $A(t)$ is a conditional non-homogeneous Poisson process with mean
\begin{align*}
    \EE\left[ A(t) \right] &= \EE\left[ \EE\left[ A(t)|Z_0(s), s\leq t \right] \right]\\
    & = \EE\left[ \int_0^tZ_0(s)n^{-\alpha}\PP(t-s<\tau_0<\infty)ds \right]\\
    & = \PP(\tau_0<\infty)\EE\left[ \int_0^tZ_0(s)n^{-\alpha}\PP(t-s<\tau_0|\tau_0<\infty)ds \right]\\
    & \overset{\text{(a)}}{=} \PP(\tau_0<\infty)n^{1-\alpha} \int_0^t e^{\lambda_0s} \PP(\tau_0>t-s|\tau_0<\infty)ds,
\end{align*}
where $\tau_0$ represents the extinction time of a mutant clone starting from a single resistant cell, and we use Fubini's theorem in equality (a). It is established in the literature (see the proof of Lemma 2 in \cite{durrett2016spatial}) when conditioning on the event of eventual extinction, i.e. $\tau_0< \infty$, a supercritical birth-death process becomes a subcritical process with birth and death rates interchanged. It implies that 
% By the proof of Lemma 2 in \cite{durrett2016spatial}, we know if we condition a supercritical birth death process on the event of extinction, i.e. $\tau_0< \infty$, then the birth death process will become a sub-critical process with its previous birth and death rates flipped, which implies 
\begin{align*}
    \PP(\tau_0>t-s|\tau_0<\infty) = \PP(\tilde{\tau}_0 > t-s),
\end{align*}
where $\tilde{\tau}_0$ represents the extinction time for a subcritical birth-death process starting from a single cell with birth rate $d_1$ and death rate $r_1$. From Proposition 1 of \cite{jagers2007markovian}, for some $c_1>0$, the tail probability of the extinction time for the aforementioned subcritical birth-death process is bounded by
% From Proposition 1 of \cite{jagers2007markovian}, for some $c_1, c_2>0$, the tail probabilities of the extinction time of a subcritical branching process can be bounded by
% where $\tilde{\tau}_0$ represents the time of extinction for a death-birth process starting from a single cell with birth rate to be $d_1$ and death rate to be $r_1$. Also, from Proposition 1 of \cite{jagers2007markovian}, for some $c_1, c_2>0$, the tail probabilities of the extinction time of a sub-critical branching process can be bounded by
% \begin{align*}
%     \PP(\tilde{\tau}_0 > t)\leq c_1e^{-c_2t}.
% \end{align*}
\begin{align*}
    \PP(\tilde{\tau}_0 > t)\leq c_1e^{-\lambda_1t}.
\end{align*}

Therefore, if $\lambda_0+\lambda_1\neq 0$, we have 
\begin{align*}
    \EE\left[ A(t) \right] & = \PP(\tau_0<\infty)n^{1-\alpha} \int_0^t e^{\lambda_0 s} \PP(\tau_0>t-s|\tau_0<\infty)ds \\
    & \leq \PP(\tau_0<\infty)n^{1-\alpha}e^{-\lambda_1 t} \int_0^t c_1e^{(\lambda_0+\lambda_1)s} ds \\
    & = \frac{c_1d_1}{r_1(\lambda_0+\lambda_1 )}n^{1-\alpha}\left( e^{\lambda_0 t}-e^{-\lambda_1 t} \right),
\end{align*}

% \begin{align*}
%     \EE\left[ A(t) \right] & = \PP(\tau_0<\infty)n^{1-\alpha} \int_0^t e^{\lambda_0 s} \PP(\tau_0>t-s|\tau_0<\infty)ds \\
%     & \leq \PP(\tau_0<\infty)n^{1-\alpha}e^{\lambda_0 t} \int_0^t c_1e^{(\lambda_0-c_2 )s} ds \\
%     & = \frac{c_1d_1}{r_1(\lambda_0-c_2 )}n^{1-\alpha}\left( e^{\lambda_0 t}-e^{-c_2t} \right)
% \end{align*}
which implies that
\begin{align*}
    \frac{1}{n^{1-\alpha}}\EE\left[ \hat{I}_n(t)- I_n(t) \right] = \frac{1}{n^{1-\alpha}}\EE\left[ A(t) \right]\leq \frac{c_1d_1}{r_1(\lambda_0+\lambda_1)}\left( e^{\lambda_0 t}-e^{-\lambda_1t} \right).
\end{align*}
The case where $\lambda_0+\lambda_1=0$ can be analyzed in a similar way and thus we omit the details here. \qed
% \begin{align*}
%     \frac{1}{n^{1-\alpha}}\EE\left[ \hat{I}_n(t)- I_n(t) \right] = \frac{1}{n^{1-\alpha}}\EE\left[ A(t) \right]\leq \frac{c_1d_1}{r_1(c_2+\lambda_0 )}\left( e^{\lambda_0 t}-e^{-c_2t} \right).
% \end{align*} \qed
\end{proof}

Utilizing Proposition~\ref{prop:diff of In and hat_In} and Theorem~\ref{cip_In}, we demonstrate that the actual observable number of clones also converges in probability.
\begin{corollary}
   \label{coro: In_hat}
    For any $\epsilon > 0, u<\min\{(1-\alpha)/2,  -\lambda_0\alpha/2\lambda_1\}$,  we have
\begin{align}
    \label{eq:cip_In}
    \lim_{n\to\infty}\PP\left(n^u\left| \frac{1}{n^{1-\alpha}}\hat{I}_n\left(\gamma_n\right)+\frac{\lambda_1}{\lambda_0 r_1}\right| > \epsilon \right) = 0.
\end{align}    
\end{corollary}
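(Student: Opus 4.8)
The plan is to combine the convergence in probability of $I_n(\gamma_n)$ from Theorem~\ref{cip_In} with the expectation bound on the difference $\hat{I}_n(t) - I_n(t)$ from Proposition~\ref{prop:diff of In and hat_In}, via a Markov-inequality argument applied at the random time $\gamma_n$. The key observation is that $\hat{I}_n(t) \geq I_n(t)$ pathwise (every clone surviving until $t$ that will eventually die is counted in $\hat{I}_n$ but not $I_n$, and every infinite-lineage clone is counted in both), so $\hat{I}_n(t) - I_n(t) = A(t) \geq 0$. Thus it suffices to show that $n^u \cdot n^{-(1-\alpha)} A(\gamma_n)$ converges to $0$ in probability, since then the triangle inequality reduces the corollary to Theorem~\ref{cip_In}. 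Concretely, I would write
\begin{align*}
    n^u\left|\frac{1}{n^{1-\alpha}}\hat{I}_n(\gamma_n) + \frac{\lambda_1}{\lambda_0 r_1}\right| \leq n^u\left|\frac{1}{n^{1-\alpha}}I_n(\gamma_n) + \frac{\lambda_1}{\lambda_0 r_1}\right| + \frac{n^u}{n^{1-\alpha}}A(\gamma_n),
\end{align*}
and control each term on the right separately.

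First I would handle the first term directly by Theorem~\ref{cip_In}, which gives convergence to $0$ in probability under exactly the same constraint on $u$. The second term is where the real work lies. The obstacle is that $\gamma_n$ is random, so I cannot simply plug a deterministic $t$ into the bound of Proposition~\ref{prop:diff of In and hat_In}. The natural remedy is to restrict attention to the high-probability event that $\gamma_n$ is close to its deterministic surrogate $\zeta_n \sim (\alpha/\lambda_1)\log n$, using Theorem~\ref{cip_gamma_faster}. On the complementary event (where $|\gamma_n - \zeta_n|$ is large, or where $\gamma_n = \infty$), the probability vanishes as $n \to \infty$ by Theorem~\ref{cip_gamma_faster} and the recurrence remark. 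So I would fix a small $\delta > 0$, work on the event $\{\gamma_n \leq \zeta_n + 1\}$ (say), and on that event bound $A(\gamma_n) \leq A(\zeta_n + 1)$ by monotonicity of the counting process $A(\cdot)$ in its time argument.

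With $\gamma_n$ replaced by the deterministic time $\zeta_n + 1$, I would then apply Markov's inequality to the nonnegative random variable $A(\zeta_n+1)$, using the expectation bound from Proposition~\ref{prop:diff of In and hat_In}. Plugging $t = \zeta_n + 1$ into the bound and recalling $e^{\lambda_0 \zeta_n} = \Theta(n^{-\lambda_0 \alpha/\lambda_1})$ (a decaying factor since $\lambda_0 < 0$) and $e^{-\lambda_1 \zeta_n} = \Theta(n^{-\alpha})$, I get
\begin{align*}
    \frac{1}{n^{1-\alpha}}\EE\left[A(\zeta_n+1)\right] = O\!\left(n^{-\alpha}\right) + O\!\left(n^{\lambda_0\alpha/\lambda_1}\right) = o(1),
\end{align*}
so that $n^{-(1-\alpha)}\EE[A(\zeta_n+1)]$ decays polynomially. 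Markov's inequality then yields
\begin{align*}
    \PP\left(\frac{n^u}{n^{1-\alpha}}A(\zeta_n+1) > \epsilon\right) \leq \frac{n^u}{\epsilon\, n^{1-\alpha}}\EE\left[A(\zeta_n+1)\right] \to 0
\end{align*}
precisely when $u$ is smaller than the smaller of the two decay exponents, namely $u < \min\{\alpha, -\lambda_0\alpha/\lambda_1\}$. I expect the main obstacle to be verifying that the admissible range of $u$ stated in the corollary, $u < \min\{(1-\alpha)/2, -\lambda_0\alpha/2\lambda_1\}$, is contained in the range forced by this Markov bound; this should follow since the corollary's threshold is more restrictive (the factors of $2$ and the $(1-\alpha)/2$ term come from Theorem~\ref{cip_In} and dominate), so the first term's constraint is the binding one and the $A(\gamma_n)$ term decays strictly faster. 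The only subtlety to check carefully is the exponent bookkeeping relating $e^{\lambda_0\zeta_n}$ to a power of $n$, and confirming the event-splitting does not introduce a non-vanishing contribution.
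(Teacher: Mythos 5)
Your overall architecture --- a triangle inequality reducing the claim to Theorem~\ref{cip_In}, plus control of $A(\gamma_n)=\hat I_n(\gamma_n)-I_n(\gamma_n)$ via event-splitting on $\{|\gamma_n-\zeta_n| \text{ small}\}$ and a Markov bound fed by Proposition~\ref{prop:diff of In and hat_In} --- is exactly the paper's strategy, but one step in your execution is false: $A(t)$ is \emph{not} monotone in $t$, so the pathwise bound $A(\gamma_n)\le A(\zeta_n+1)$ on the event $\{\gamma_n\le \zeta_n+1\}$ fails. Here $A(t)$ is the number of clones created in $(0,t)$ that are alive at time $t$ but doomed to extinction; as $t$ grows, doomed clones die and drop out of the count, so $A$ is a snapshot count, not a cumulative arrival count. (Your own computation confirms this: $n^{\alpha-1}\EE[A(t)]$ decays like $e^{-\min\{-\lambda_0,\lambda_1\}t}$, which is incompatible with $A$ being nondecreasing.) A clone alive at $\gamma_n$ but extinct by time $\zeta_n+1$ is counted in $A(\gamma_n)$ and not in $A(\zeta_n+1)$, so your comparison points in the wrong direction and the Markov step has nothing valid to bound.

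The correct comparison is with the \emph{earlier} deterministic time, which is what the paper does: on $\{|\gamma_n-\zeta_n|<\epsilon\}$, any clone counted in $A(\gamma_n)$ that was created before $\zeta_n-\epsilon$ was already alive at $\zeta_n-\epsilon$ (extinction is absorbing), and every other such clone was created inside the window $(\zeta_n-\epsilon,\gamma_n]$. Concretely, using monotonicity of the infinite-lineage count $I_n$, the paper writes
\begin{align*}
\hat I_n(\gamma_n)-I_n(\gamma_n)\le \bigl[\hat I_n(\zeta_n-\epsilon)-I_n(\zeta_n-\epsilon)\bigr]+\bigl[\hat I_n(\gamma_n)-\hat I_n(\zeta_n-\epsilon)\bigr],
\end{align*}
bounds the first bracket by Proposition~\ref{prop:diff of In and hat_In} at the deterministic time $\zeta_n-\epsilon$, and bounds the second bracket by the number of clones born in the short window, estimated exactly as in Lemma~\ref{In_gamma_converge_to_In_zeta}. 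Your proposal is missing this window term entirely, and it is the piece that absorbs the randomness of $\gamma_n$. A secondary soft spot: your closing claim that the corollary's range $u<\min\{(1-\alpha)/2,-\lambda_0\alpha/2\lambda_1\}$ is contained in the range $u<\min\{\alpha,-\lambda_0\alpha/\lambda_1\}$ forced by your Markov bound is asserted rather than checked, and it can fail (e.g.\ $\alpha=0.2$ and $-\lambda_0/\lambda_1=3$ give $\min\{(1-\alpha)/2,-\lambda_0\alpha/2\lambda_1\}=0.3>\alpha=0.2$); the paper's own proof passes over the same exponent comparison silently, but since you flagged it as the main obstacle, note that it does not resolve in the direction you expected.
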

\begin{proof}
    By Theorem \ref{cip_In}, it suffices to show 
    \begin{align*}
    \lim_{n\to\infty}\PP\left(n^u\left( \frac{1}{n^{1-\alpha}}\hat{I}_n\left(\gamma_n\right)-\frac{1}{n^{1-\alpha}}I_n\left(\gamma_n\right)\right) > \epsilon \right) = 0.
\end{align*} 
Because $\hat{I}_n\left(\gamma_n\right) \ge I_n\left(\gamma_n\right)$, we can apply the Markov Inequality and obtain that
\begin{align*}
    &\PP\left(n^u\left( \frac{1}{n^{1-\alpha}}\hat{I}_n\left(\gamma_n\right)-\frac{1}{n^{1-\alpha}}I_n\left(\gamma_n\right)\right) > \epsilon \right)   \\
    \leq & \PP\left(n^u\left( \frac{1}{n^{1-\alpha}}\hat{I}_n\left(\gamma_n\right)-\frac{1}{n^{1-\alpha}}I_n\left(\gamma_n\right)\right) > \epsilon, \left|\gamma_n-\zeta_n\right|<\epsilon \right)+ \PP\left( \left|\gamma_n-\zeta_n\right|>\epsilon\right)\\
    = & n^{u-1+\alpha}\EE\left[ \hat{I}_n\left(\gamma_n\right)-I_n\left(\gamma_n\right); \left|\gamma_n-\zeta_n\right|<\epsilon \right]/\epsilon+ \PP\left( \left|\gamma_n-\zeta_n\right|>\epsilon  \right).
\end{align*}
 By Theorem \ref{cip_gamma_faster}, the second term converges to 0. Meanwhile, we have 
 \begin{align}
    &n^{u-1+\alpha}\EE\left[ \hat{I}_n\left(\gamma_n\right)-I_n\left(\gamma_n\right); \left|\gamma_n-\zeta_n\right|<\epsilon  \right] \nonumber\\
    \leq & n^{u-1+\alpha}\EE\left[ \hat{I}_n\left(\gamma_n\right)-I_n\left(\zeta_n-\epsilon\right); \left|\gamma_n-\zeta_n\right|<\epsilon  \right] \nonumber \\
    = & n^{u-1+\alpha}\EE\left[ \hat{I}_n\left(\zeta_n-\epsilon\right)-I_n\left(\zeta_n-\epsilon\right); \left|\gamma_n-\zeta_n\right|<\epsilon  \right] \label{eqn:hatI_I_diff}\\
    & \quad + n^{u-1+\alpha}\EE\left[ \hat{I}_n\left(\gamma_n\right)-\hat{I}_n\left(\zeta_n-\epsilon\right); \left|\gamma_n-\zeta_n\right|<\epsilon  \right] \label{eqn:hatI_hatI_diff}.
    % \leq &  \frac{c_1d_1}{r_1(c_2+\lambda_0 )}n^u\left( e^{\lambda_0 (\zeta_n - \epsilon)}-e^{-c_2(\zeta_n -\epsilon)} \right) \sim \Theta(n^{u-\min\{-\lambda_0 ,c_2\}\alpha/\lambda_1})
\end{align}
By Proposition \ref{prop:diff of In and hat_In}, when $u< \min\{(1-\alpha)/2,  -\lambda_0 \alpha/2\lambda_1\}$, we have  $\eqref{eqn:hatI_I_diff}\rightarrow 0$. By a similar argument to that in the proof of Lemma \ref{In_gamma_converge_to_In_zeta} (see Section \ref{sec:proof of cip_In}), we have $\ref{eqn:hatI_hatI_diff}\rightarrow 0$, which gives us
% \begin{align*}
%     &n^{u-1+\alpha}\EE\left[ \hat{I}_n\left(\gamma_n\right)-I_n\left(\gamma_n\right), \left|\gamma_n-\zeta_n\right|<\epsilon  \right] \\
%     \leq & n^{u-1+\alpha}\EE\left[ \hat{I}_n\left(\gamma_n\right)-I_n\left(\gamma_n\right)  \right] \\
%     \leq &  \frac{c_1d_1}{r_1(c_2+\lambda_0 )}n^u\left( e^{\lambda_0 (\zeta_n - \epsilon)}-e^{-c_2(\zeta_n -\epsilon)} \right) \sim \Theta(n^{u-\min\{-\lambda_0 ,c_2\}\alpha/\lambda_1})
% \end{align*}
% Because $u< \min\{(1-\alpha)/2,  -\lambda_0 \alpha/2\lambda_1,  c_2\alpha/\lambda_1\}$, we finally get 
\begin{align*}
    \lim_{n\to\infty}\PP\left(n^u\left( \frac{1}{n^{1-\alpha}}\hat{I}_n\left(\gamma_n\right)-\frac{1}{n^{1-\alpha}}I_n\left(\gamma_n\right)\right) > \epsilon \right) = 0.
\end{align*}\qed 
\end{proof}

Next, we present a convergence in probability result for the Simpson's Index in Theorem \ref{cip_Rn}. We begin with the following proposition which shows that with high probability, resistant cells survive to the deterministic approximation of cancer recurrence time, $\zeta_n$, which paves the way for our later analysis for Simpson's Index. In particular define the set $\rho_n=\left\{ Z^n_1(\zeta_n)>0\right\}$ then we have the following result.
% Before showing the convergence result of Simpson's Index, we begin with the following proposition which shows that the resistant cells will go extinct with very low probability and paves the way for our later analysis for Simpson's Index.
\begin{proposition}
\label{prop:rho_n}
 As $n\to\infty$, $\PP\left( \rho_n \right) \to 1.$
\end{proposition}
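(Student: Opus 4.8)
The plan is to show that the probability of the complementary event, $\PP(Z^n_1(\zeta_n)=0)$, vanishes as $n\to\infty$. Recall that $\zeta_n\sim\frac{\alpha}{\lambda_1}\log n$ and that the resistant population is built from independent supercritical birth-death clones initiated at the jump times of a conditional non-homogeneous Poisson arrival process with intensity $Z_0^n(s)n^{-\alpha}$. The event $\{Z^n_1(\zeta_n)=0\}$ requires that \emph{every} resistant clone born in $(0,\zeta_n)$ has gone extinct by time $\zeta_n$. Since $I_n(\zeta_n)$ counts the clones with infinite lineage, one clean route is to note that $\{Z^n_1(\zeta_n)=0\}\subseteq\{I_n(\zeta_n)=0\}$, so it suffices to prove $\PP(I_n(\zeta_n)=0)\to 0$.

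First I would condition on the sensitive-cell trajectory $(Z_0^n(s))_{s\le\zeta_n}$ and use the fact that, given this trajectory, the clones with infinite lineage arrive according to a Poisson process whose rate is $Z_0^n(s)n^{-\alpha}\PP(\tau_0=\infty)$, where $\PP(\tau_0=\infty)=\lambda_1/r_1$ is the survival probability of a single supercritical clone. Consequently $I_n(\zeta_n)$ is, conditionally, Poisson with a random mean $M_n=n^{-\alpha}\frac{\lambda_1}{r_1}\int_0^{\zeta_n}Z_0^n(s)\,ds$, and $\PP(I_n(\zeta_n)=0\mid Z_0^n)=e^{-M_n}$. The plan is therefore to show that $M_n\to\infty$ in probability, which by dominated/bounded convergence (since $e^{-M_n}\le 1$) yields $\PP(I_n(\zeta_n)=0)=\EE[e^{-M_n}]\to 0$.

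To control $M_n$ I would compare the random integral $\int_0^{\zeta_n}Z_0^n(s)\,ds$ to its mean. Using $\Phi_0^n(s)=ne^{\lambda_0 s}$ from \eqref{eq:expec_Z1}, the mean satisfies $\EE[M_n]=n^{1-\alpha}\frac{\lambda_1}{r_1}\int_0^{\zeta_n}e^{\lambda_0 s}\,ds=n^{1-\alpha}\frac{\lambda_1}{r_1}\cdot\frac{1-e^{\lambda_0\zeta_n}}{-\lambda_0}$, which indeed grows like a constant times $n^{1-\alpha}\to\infty$ (this matches \eqref{eq: exact_In}, since $\EE[I_n(\zeta_n)]=-\frac{\lambda_1}{\lambda_0 r_1}n^{1-\alpha}(1-e^{\lambda_0\zeta_n})$). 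The remaining task is to rule out the event that $M_n$ is small despite its large mean; for this I would bound the variance of $\int_0^{\zeta_n}Z_0^n(s)\,ds$ and apply Chebyshev's inequality, or equivalently bound $\Var(I_n(\zeta_n))$ and show $\Var(I_n(\zeta_n))/\EE[I_n(\zeta_n)]^2\to 0$, so that $I_n(\zeta_n)$ concentrates around its diverging mean and hence is positive with high probability.

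The main obstacle I anticipate is the variance control, because $\int_0^{\zeta_n}Z_0^n(s)\,ds$ depends on the entire subcritical trajectory and its fluctuations are not immediately negligible near the extinction of the sensitive population. The key point to verify is that, although individual sensitive cells die out (the process is subcritical), the cumulative area $\int_0^{\zeta_n}Z_0^n(s)\,ds$ has mean of order $n^{1-\alpha}$ with standard deviation of strictly smaller order, so the coefficient of variation tends to zero; the supercriticality of the resistant clones enters only through the constant survival factor $\lambda_1/r_1$ and does not affect this concentration. Once $M_n\to\infty$ in probability is established, the conclusion $\PP(\rho_n)=1-\PP(Z_1^n(\zeta_n)=0)\to 1$ follows immediately.
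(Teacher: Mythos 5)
Your proposal is correct, but it takes a genuinely different route from the paper. The paper's proof is a two-line Chebyshev argument applied directly to $Z_1^n(\zeta_n)$: since $\EE\left[Z_1^n(\zeta_n)\right]=\Phi_1^n(\zeta_n)=n$ by the very definition of $\zeta_n$, the inclusion $\left\{Z_1^n(\zeta_n)=0\right\}\subseteq\left\{\left|Z_1^n(\zeta_n)-n\right|\geq n\right\}$ together with the preliminary fact $\var\left[Z_1^n(\zeta_n)\right]\sim n^{1+\alpha}$ from \eqref{Order: second moment of Z1} gives $\PP\left(\rho_n^C\right)=\Theta(n^{\alpha-1})\to 0$ immediately. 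You instead reduce to the stronger event $\left\{I_n(\zeta_n)=0\right\}$ (the inclusion is valid: a clone with infinite lineage is alive at $\zeta_n$), use the thinned Cox-process representation of $I_n$ with survival factor $\lambda_1/r_1$ --- the same representation the paper itself invokes at the start of the proof of Lemma \ref{In_zeta_converge_to_limit} --- and argue $\PP(I_n(\zeta_n)=0)=\EE\left[e^{-M_n}\right]\to 0$ via concentration of $M_n=n^{-\alpha}\frac{\lambda_1}{r_1}\int_0^{\zeta_n}Z_0^n(s)\,ds$ around its mean of order $n^{1-\alpha}$. The one step you flag as open, the variance control of $\int_0^{\zeta_n}Z_0^n(s)\,ds$, is in fact already carried out in the paper's proof of Lemma \ref{In_zeta_converge_to_limit}, where it is shown that $\EE\left[\left(\int_0^{\zeta_n}Z_0^n(t)\,dt\right)^2\right]-\frac{n^2}{\lambda_0^2}\sim C_1n^2e^{\lambda_0\zeta_n}+C_2n=o(n^2)$, so the coefficient of variation indeed vanishes and your worry about fluctuations of the subcritical trajectory is resolved; your mean computation also matches \eqref{eq: exact_In}. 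As for what each approach buys: the paper's argument is shorter and yields an explicit polynomial rate $\Theta(n^{\alpha-1})$, but it leans on the variance of the full two-type population $Z_1^n(\zeta_n)$; yours needs only the first two moments of the sensitive-cell integral, and it proves something strictly stronger --- that with high probability at least one \emph{surviving} clone exists by time $\zeta_n$ (i.e., $\PP(I_n(\zeta_n)=0)\to 0$), with a conditional bound $e^{-M_n}$ that is exponentially small in $n^{1-\alpha}$ on the concentration event --- a statement of independent use elsewhere in the paper's analysis.
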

\begin{proof}
 From (\ref{Order: second moment of Z1}), we have
    \begin{align*}
        \PP\left( \rho_n^C \right) &= \PP\left(Z^n_1(\zeta_n)=0\right)\leq \PP(\left|Z_1^n(\zeta_n) -n\right|\geq n)\\
        &\leq \frac{\EE\left[\left(Z_1^n(\zeta_n)-n\right)^2\right]}{n^2}= \Theta(n^{\alpha-1})\to 0.
    \end{align*} \qed
\end{proof}

Furthermore, we have the following convergence in probability result for the Simpson's Index.
% Now we are ready to prove the convergence results for the Simpson's Index.
\begin{theorem}[Convergence of $R_n(\gamma_n)$]\label{cip_Rn}
    For any $\epsilon > 0, \alpha<1 , u<\min\{(1-\alpha)/4,(\lambda_1-\lambda_0 )\alpha/2\lambda_1\} $,  we have
\begin{align}
    \label{eq:cip_Rn}
\lim_{n\to\infty}\PP\left( n^u\left| n^{1-\alpha}R_n\left(\gamma_n\right)- \frac{2r_1\left(\lambda_1-\lambda_0\right)^2}{ \lambda_1 \left(2\lambda_1-\lambda_0\right)} \right| > \epsilon\right) = 0.
\end{align}
\end{theorem}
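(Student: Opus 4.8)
The plan is to first remove the random denominator in $R_n(\gamma_n)$. Since $\beta=1$ and $Z_1^n$ moves by unit jumps, the first passage of $Z_1^n$ to level $n$ is exact, so $Z_1^n(\gamma_n)=n$, and writing $W_n(t):=\sum_{i=1}^{\hat{I}_n(t)}X_{i,n}(t)^2$ for the sum of squared clone sizes, the definition \eqref{eq:SimIndDef} gives
\begin{align*}
n^{1-\alpha}R_n(\gamma_n)=\frac{n^{1-\alpha}}{Z_1^n(\gamma_n)^2}\,W_n(\gamma_n)=\frac{W_n(\gamma_n)}{n^{1+\alpha}}.
\end{align*}
Hence it suffices to prove that $W_n(\gamma_n)/n^{1+\alpha}$ converges in probability, at rate $n^u$, to $c^\star:=2r_1(\lambda_1-\lambda_0)^2/[\lambda_1(2\lambda_1-\lambda_0)]$. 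I would establish this in two stages: a concentration estimate for $W_n$ at the deterministic time $\zeta_n$, and a transfer from $\zeta_n$ to the random time $\gamma_n$.

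For the concentration at $\zeta_n$, I condition on the sensitive trajectory $(Z_0^n(s))_{s\le\zeta_n}$. Given it, the clone-founding times form a Poisson process of intensity $Z_0^n(s)n^{-\alpha}\,ds$ and the clones evolve as independent single-founder supercritical birth--death processes $Z$. Campbell's formula then yields
\begin{align*}
\EE\!\left[W_n(\zeta_n)\mid Z_0^n\right]&=\int_0^{\zeta_n}Z_0^n(s)n^{-\alpha}\,\EE\!\left[Z(\zeta_n-s)^2\right]ds,\\
\Var\!\left(W_n(\zeta_n)\mid Z_0^n\right)&=\int_0^{\zeta_n}Z_0^n(s)n^{-\alpha}\,\EE\!\left[Z(\zeta_n-s)^4\right]ds.
\end{align*}
Using the standard asymptotics $\EE[Z(u)^2]\sim(2r_1/\lambda_1)e^{2\lambda_1u}$ and $\EE[Z(u)^4]=\Theta(e^{4\lambda_1u})$ together with $\EE[Z_0^n(s)]=ne^{\lambda_0 s}$ and $e^{\lambda_1\zeta_n}\sim(\lambda_1-\lambda_0)n^{\alpha}$, the first line reproduces $\EE[W_n(\zeta_n)]\sim c^\star n^{1+\alpha}$ — consistent with \eqref{ciE_Simp}, the factor $(\lambda_1-\lambda_0)^2$ arising from $e^{2\lambda_1\zeta_n}\sim(\lambda_1-\lambda_0)^2n^{2\alpha}$ — while the second, after also bounding the sensitive-fluctuation term $\Var(\EE[W_n(\zeta_n)\mid Z_0^n])$ via the branching covariance $\Cov(Z_0^n(s),Z_0^n(s'))\le Cn\,e^{\lambda_0\max(s,s')}$, gives $\Var(W_n(\zeta_n))=\Theta(n^{1+3\alpha})$. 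A Chebyshev inequality applied to $W_n(\zeta_n)/n^{1+\alpha}$, whose relative variance is of order $n^{\alpha-1}$, then controls the deviation at $\zeta_n$; this, combined with the uniform control needed to handle the random evaluation time, is what yields the first constraint $u<(1-\alpha)/4$.

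To transfer from $\zeta_n$ to $\gamma_n$, I would invoke Theorem \ref{cip_gamma_faster} to work on the high-probability event $\{|\gamma_n-\zeta_n|<\delta_n\}$ with $\delta_n=n^{-u'}$, $u'<(1-\alpha)/2$, and bound $|W_n(\gamma_n)-W_n(\zeta_n)|/n^{1+\alpha}$ by the oscillation of $W_n$ over $[\zeta_n-\delta_n,\zeta_n+\delta_n]$. The generator computation shows that $W_n$ has drift $2\lambda_1W_n+(r_1+d_1)Z_1^n+Z_0^nn^{-\alpha}$, so over a window of length $\delta_n$ the compensator changes by a relative amount of order $\lambda_1\delta_n$, while the martingale part is handled by a Doob/maximal inequality against its predictable quadratic variation, which accrues at rate $\Theta(\sum_iX_{i,n}^3)$, of expected order $n^{1+2\alpha}$. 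Balancing $\delta_n$ against the exponential growth $e^{2\lambda_1 t}=\Theta(n^{2\alpha})$ of $W_n$ is what should produce the second constraint $u<(\lambda_1-\lambda_0)\alpha/(2\lambda_1)$; note that, because $R_n$ is already defined through the extant clones $\hat{I}_n(t)$, no separate passage from $I_n$ to $\hat{I}_n$ (as in Proposition \ref{prop:diff of In and hat_In}) is required here.

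The main obstacle is the time-transfer step. Unlike $I_n(t)$, which is nondecreasing so that $|\gamma_n-\zeta_n|<\delta_n$ yields an immediate two-sided sandwich, the sum of squares $W_n(t)$ is genuinely non-monotone: individual clones rise and fall, and a single anomalously large clone can shift $W_n$. Controlling the oscillation of $W_n$ uniformly over a shrinking random window therefore demands honest second- (or fourth-) moment maximal estimates for the martingale part of $\sum_iX_{i,n}(t)^2$, carried out simultaneously with the doubly-stochastic averaging over $Z_0^n$. Tracking how the window length $\delta_n$, the growth rate $e^{2\lambda_1t}$, and the moment bounds interact is precisely the delicate bookkeeping that pins down the admissible range $u<\min\{(1-\alpha)/4,(\lambda_1-\lambda_0)\alpha/(2\lambda_1)\}$.
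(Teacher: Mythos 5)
Your skeleton is the same as the paper's: a concentration estimate at the deterministic time $\zeta_n$ (the paper's Lemma \ref{lemma:Convergence_of_Rn_zeta_n}) followed by a transfer to the random time $\gamma_n$ over a window of width $n^{-v}$, $v<(1-\alpha)/2$, justified by Theorem \ref{cip_gamma_faster} (the paper's Lemma \ref{lemma:Rn_gamma_to_Rn_zeta}). Within that skeleton you make two genuinely different choices. First, your observation that $Z_1^n$ has only unit upward jumps, so that $Z_1^n(\gamma_n)=n$ exactly on $\{\gamma_n<\infty\}$ --- a fact the paper uses only silently when it writes $R_n(\gamma_n)=\sum_i X_{i,n}(\gamma_n)^2/n^2$ --- lets you anchor the whole argument on the numerator $W_n$ with the deterministic normalization $n^{1+\alpha}$. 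This bypasses the paper's large-deviation estimates (Lemma \ref{lemma:smaller Z1}, Corollary \ref{lemma:larger Z1}) and the $\rho_n$ conditioning of Proposition \ref{prop:rho_n}, which the paper needs only because it compares against $R_n(\zeta_n)$, whose denominator $Z_1^n(\zeta_n)$ is genuinely random; your route eliminates that layer entirely, and your conditional Campbell-formula moments at $\zeta_n$ (variance $\Theta(n^{1+3\alpha})$, relative variance $n^{\alpha-1}$) are correct. Second, your transfer step controls the aggregate semimartingale $W_n$ through its drift $2\lambda_1 W_n+(r_1+d_1)Z_1^n+Z_0^n n^{-\alpha}$ and a Doob bound on the quadratic variation $\Theta\bigl(\sum_i X_{i,n}^3\bigr)$ of its martingale part, where the paper instead works clone by clone, applying Doob's maximal inequality to each submartingale $Z(t)^2$ and summing fourth moments via Wald's identity. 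Both are workable; yours is arguably the cleaner architecture.

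The gap is in the exponent bookkeeping, which you assert rather than derive, and which is wrong in two places. Your Chebyshev bound at $\zeta_n$ gives $u<(1-\alpha)/2$, and the drift term in the transfer gives $u<v<(1-\alpha)/2$; neither step ``yields'' $(1-\alpha)/4$ as you claim, and the constraint $(\lambda_1-\lambda_0)\alpha/(2\lambda_1)$ does not come from balancing $\delta_n$ against $e^{2\lambda_1 t}$ --- it is a bias term, arising because $e^{\lambda_1\zeta_n}=(\lambda_1-\lambda_0)n^{\alpha}\bigl(1+\Theta\bigl(n^{-(\lambda_1-\lambda_0)\alpha/\lambda_1}\bigr)\bigr)$, with the correction entering squared through the second-moment computation. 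More seriously, your own mean computation carries a bias you never track: the subleading term $-\frac{r_1+d_1}{\lambda_1}e^{\lambda_1 u}$ in $\EE[Z(u)^2]$ contributes $-\frac{r_1+d_1}{\lambda_1}n$ to $\EE[W_n(\zeta_n)]$, a \emph{relative} error of order $n^{-\alpha}$, so your Chebyshev argument additionally requires $u<\alpha$. That constraint is absent from your list, and it can bind inside the theorem's stated range: $\min\{(1-\alpha)/4,(\lambda_1-\lambda_0)\alpha/2\lambda_1\}>\alpha$ whenever $\alpha<1/5$ and $-\lambda_0>\lambda_1$, so as written your proof does not cover all admissible $u$. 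In fairness, the same $n^{-\alpha}$ term appears in the paper (the $C_1 n^{-2+\alpha}$ term in \eqref{eqn:lemma_5_component_2}), and the paper's Lemma \ref{lemma:Convergence_of_Rn_zeta_n} is itself proved only for $u<\min\{\alpha/4,(1-\alpha)/4\}$, so the displayed exponent in Theorem \ref{cip_Rn} is not fully matched by the paper's own lemmas either; but a complete writeup along your route must state the $\alpha$-dependent constraint explicitly rather than omit it.
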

\textit{Proof Outline: }
From Proposition \ref{prop:rho_n} it will suffice to establish
$$
\lim_{n\to\infty}\PP\left( n^u\left| n^{1-\alpha}R_n\left(\gamma_n\right)- \frac{2r_1\left(\lambda_1-\lambda_0\right)^2}{ \lambda_1 \left(2\lambda_1-\lambda_0\right)} \right| > \epsilon, \rho_n\right)=0.
$$
To prove this result, we initially demonstrate the convergence of the Simpson Index at the deterministic time $\zeta_n$, as detailed in Lemma \ref{lemma:Convergence_of_Rn_zeta_n}. In Lemma \ref{lemma:Rn_gamma_to_Rn_zeta} we establish the proximity of $R_n(\gamma_n)$ to $R_n(\zeta_n)$. The detailed proofs can be found in Section \ref{sec: proof of convergence of Rn}.

%\textcolor{red}{Add two graphs of simulation results here}

Theorem \ref{cip_gamma_faster}, Theorem \ref{cip_In} and Theorem \ref{cip_Rn} ensure that, with high probability, the cancer recurrence time $\gamma_n$, the number of mutant clones $I_n(\gamma_n)$, and the Simpson's Index of mutant clones $R_n(\gamma_n)$ observed from a single patient sample are very close to their deterministic limits, respectively. In Section \ref{Sec:Estimator}, we use these results to develop a set of consistent estimators for our model parameters.

% These results of convergence in probability guarantee the high probability that any single observation of mutant clones number or Simpson's Index converges to its mean, which means we no more need to collect multiple data, usually from different patients, to quantify $I_n(\gamma_n)$ and $R_n(\gamma_n)$. Actually, if we use data from different patients, the corresponding parameters underlying the model wouldn't stay the same due to the fitness diversity under different circumstances. This will be the radical barrier resulting in the failure of estimation.

% \subsection{Estimators and Consistency}\label{Sec:Estimator}
\subsection{Consistent Estimators for Model Parameters}\label{Sec:Estimator}

In this section, we develop a set of consistent estimators for our model parameters, including the growth rate of sensitive cells $\lambda_0$, the growth rate of resistant cells $\lambda_1$, the birth rate of resistant cells $r_1$, and the mutation rate $n^{-\alpha}$. These parameters are fundamental to the dynamics of cancer recurrence. The development of consistent estimators for these parameters can help practitioners gain a deeper understanding of the dynamics of cancer recurrence and for facilitating the decision-making process regarding patient treatment plans. For ease of notation, we use $Z_0^n \equiv Z_0^n(\gamma_n), I_n \equiv I_n(\gamma_n)$ and $R_n \equiv R_n(\gamma_n)$.

% As we have built several powerful theoretical tools, we are ready to construct the estimators for the model parameters. In experience, we know the cell growth rate of different types of cells $\lambda_0$ and $\lambda_1$ as well as the mutation rate $n^{-\alpha}$ are extremely important parameters to decide patients treatment plans. We will show we can give the consistent estimators for all these parameters.
% For ease of notation, we use $Z_0^n \equiv Z_0^n(\gamma_n), I_n \equiv I_n(\gamma_n)$ and $R_n \equiv R_n(\gamma_n)$. 
%Based on the previous theoretical results, we can construct two sets of estimators and prove their consistency, while they have different performance in simulation test.
% \subsubsection{Consistent estimators}
\begin{theorem}
\label{thm: consistent estimators}
For $\alpha < -\frac{\lambda_1}{\lambda_0 }$, define 
\begin{align*}
& \hat{\lambda}^{(n)}_0 = \frac{1}{\gamma_n}\log{\frac{Z^n_0}{n}},\\
& \hat{\lambda}^{(n)}_1 = -\frac{\hat{\lambda}^{(n)}_0}{U_n}, \\
& \hat{r}^{(n)}_1 = \left( \frac{1}{U_n}+1 \right)\frac{n\hat{\lambda}^{(n)}_1}{I_ne^{\hat{\lambda}^{(n)}_1\gamma_n}}, \text{ and}\\
&  \hat{\alpha}^{(n)} = 1 - \log_nI_n + \log_n\left( \frac{\hat{\lambda}^{(n)}_1}{- \hat{\lambda}^{(n)}_0\hat{r}^{(n)}_1} \right),
\end{align*}
where $U_n =  \frac{\sqrt{I_n\cdot R_{n}}}{\sqrt{I_n\cdot R_{n}-2}} - 1$. Then, $\hat{\lambda}^{(n)}_0, \hat{\lambda}^{(n)}_1, \hat{r}^{(n)}_1$ and $\hat{\alpha}^{(n)}$ are consistent estimators of $\lambda_0, \lambda_1, r_1$ and $\alpha$ respectively. 
%conditioned on the event of non-extinction of sensitive cells at cancer recurrence (i.e. $ \{\omega: Z_0^n(\gamma_n) > 0  \}$).
\end{theorem}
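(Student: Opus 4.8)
The plan is to treat the four estimators as a cascade: $\hat{\lambda}^{(n)}_0$ and $U_n$ are built directly from the observables $(\gamma_n, Z_0^n, I_n, R_n)$, then $\hat{\lambda}^{(n)}_1$ is a function of these two, $\hat{r}^{(n)}_1$ additionally uses $I_n$ and the factor $e^{\hat{\lambda}^{(n)}_1\gamma_n}$, and $\hat{\alpha}^{(n)}$ is a function of all of the above. I would therefore establish the limits from the innermost quantities outward, taking care to track \emph{rates} of convergence wherever an error is later multiplied by $\gamma_n \sim (\alpha/\lambda_1)\log n$. Throughout, Theorem \ref{cip_gamma_faster} lets me replace $\gamma_n$ by $\zeta_n$ up to a polynomially small error, and Theorems \ref{cip_In} and \ref{cip_Rn} give polynomial-rate control of $I_n$ and $R_n$.

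The first and most delicate ingredient is a law of large numbers for the sensitive population at the random recurrence time, namely $Z_0^n(\gamma_n)/(n e^{\lambda_0\gamma_n})\to 1$ in probability. Writing $M_0(t)=Z_0^n(t)e^{-\lambda_0 t}/n$, a nonnegative martingale with $\EE[M_0(t)]=1$, a direct second-moment computation for the subcritical birth--death process gives $\mathrm{Var}(Z_0^n(t))\asymp n e^{\lambda_0 t}/|\lambda_0|$, so $\mathrm{Var}(M_0(t))\asymp e^{-\lambda_0 t}/(n|\lambda_0|)$. Evaluating at $t=\zeta_n\sim(\alpha/\lambda_1)\log n$ yields $\mathrm{Var}(M_0(\zeta_n))\asymp n^{\alpha|\lambda_0|/\lambda_1-1}$, which tends to $0$ \emph{precisely} under the standing hypothesis $\alpha<-\lambda_1/\lambda_0$; Chebyshev's inequality then gives $M_0(\zeta_n)\to 1$ with a polynomial rate. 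To pass from the deterministic $\zeta_n$ to the random $\gamma_n$ I would combine Theorem \ref{cip_gamma_faster} (which confines $\gamma_n$ to an $n^{-u}$-neighbourhood of $\zeta_n$) with a Doob maximal inequality controlling $\sup_{|t-\zeta_n|\le\epsilon_n}|M_0(t)-1|$ on that neighbourhood. Consequently $\hat{\lambda}^{(n)}_0=\lambda_0+\gamma_n^{-1}\log M_0(\gamma_n)\to\lambda_0$, and in fact $\hat{\lambda}^{(n)}_0-\lambda_0=O_p(n^{-w}/\log n)$ for some $w>0$, a rate I will need below.

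Next I would identify the limit of $U_n$. By Theorems \ref{cip_In} and \ref{cip_Rn} the product $I_nR_n$ converges in probability, with a polynomial rate, to $L:=-2(\lambda_1-\lambda_0)^2/[\lambda_0(2\lambda_1-\lambda_0)]$, and $L>0$ since $\lambda_0<0$. The key algebraic simplification is $L-2=-2\lambda_1^2/[\lambda_0(2\lambda_1-\lambda_0)]$, whence $L/(L-2)=(\lambda_1-\lambda_0)^2/\lambda_1^2$ and therefore $U_n=\sqrt{I_nR_n/(I_nR_n-2)}-1\to(\lambda_1-\lambda_0)/\lambda_1-1=-\lambda_0/\lambda_1$. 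Because $x\mapsto\sqrt{x/(x-2)}$ is smooth near $L$ (as $L\neq 0,2$), the delta method transfers the polynomial rate of $I_nR_n$ to $U_n$. Combining with the previous paragraph, $\hat{\lambda}^{(n)}_1=-\hat{\lambda}^{(n)}_0/U_n\to-\lambda_0/(-\lambda_0/\lambda_1)=\lambda_1$, again with a polynomial rate $\hat{\lambda}^{(n)}_1-\lambda_1=O_p(n^{-u'})$.

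The main obstacle is the consistency of $\hat{r}^{(n)}_1$, since it contains $e^{\hat{\lambda}^{(n)}_1\gamma_n}$, in which an $o(1)$ error in $\hat{\lambda}^{(n)}_1$ would be amplified by $\gamma_n\sim(\alpha/\lambda_1)\log n$; this is exactly where the polynomial rate from the previous step is indispensable. Writing $\hat{\lambda}^{(n)}_1\gamma_n-\lambda_1\zeta_n=(\hat{\lambda}^{(n)}_1-\lambda_1)\gamma_n+\lambda_1(\gamma_n-\zeta_n)$, the first term is $O_p(n^{-u'}\log n)\to 0$ and the second is controlled by Theorem \ref{cip_gamma_faster}, so $e^{\hat{\lambda}^{(n)}_1\gamma_n}/e^{\lambda_1\zeta_n}\to 1$. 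Using $\Phi_1^n(\zeta_n)=n$ with \eqref{eq:expec_Z1} gives $e^{\lambda_1\zeta_n}\sim(\lambda_1-\lambda_0)n^\alpha$, and with $I_n\sim-\tfrac{\lambda_1}{\lambda_0 r_1}n^{1-\alpha}$ from Theorem \ref{cip_In} one obtains $\tfrac{n\hat{\lambda}^{(n)}_1}{I_n e^{\hat{\lambda}^{(n)}_1\gamma_n}}\to-\tfrac{\lambda_0 r_1}{\lambda_1-\lambda_0}$; since $\tfrac{1}{U_n}+1\to\tfrac{\lambda_0-\lambda_1}{\lambda_0}$, multiplying yields $\hat{r}^{(n)}_1\to r_1$. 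Finally, for $\hat{\alpha}^{(n)}$ I would use $\log_n I_n\to 1-\alpha$ (the constant prefactor contributes only $O_p(1)/\log n\to 0$), while $\hat{\lambda}^{(n)}_0,\hat{\lambda}^{(n)}_1,\hat{r}^{(n)}_1$ converge to nonzero constants so that $\log_n\!\big(\hat{\lambda}^{(n)}_1/(-\hat{\lambda}^{(n)}_0\hat{r}^{(n)}_1)\big)=O_p(1)/\log n\to 0$; hence $\hat{\alpha}^{(n)}=1-\log_n I_n+o_p(1)\to\alpha$. The two places demanding genuine care are thus the random-time LLN for $Z_0^n(\gamma_n)$ (where $\alpha<-\lambda_1/\lambda_0$ is exactly the condition making the martingale variance vanish) and the rate bookkeeping that feeds the exponential $e^{\hat{\lambda}^{(n)}_1\gamma_n}$.
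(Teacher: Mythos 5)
Your proposal is correct, and for $\hat{\lambda}^{(n)}_1$, $\hat{r}^{(n)}_1$ and $\hat{\alpha}^{(n)}$ it follows the paper's cascade essentially verbatim: the same decomposition of $\hat{\lambda}^{(n)}_1-\lambda_1$ through $U_n\to-\lambda_0/\lambda_1$, the same factorization of $\hat{r}^{(n)}_1$ with the $\log n$-amplification in $e^{\hat{\lambda}^{(n)}_1\gamma_n}$ neutralized by the polynomial rates of Theorems \ref{cip_gamma_faster}, \ref{cip_In} and \ref{cip_Rn} (your split $(\hat{\lambda}^{(n)}_1-\lambda_1)\gamma_n+\lambda_1(\gamma_n-\zeta_n)$ is the paper's $e^{\hat{\lambda}^{(n)}_1(\zeta_n-\gamma_n)}e^{(\lambda_1-\hat{\lambda}^{(n)}_1)\zeta_n}$ in different bookkeeping), and the same quick treatment of $\hat{\alpha}^{(n)}$ via $\log_n I_n\to 1-\alpha$. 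Where you genuinely depart is the $\hat{\lambda}^{(n)}_0$ step. The paper does not use the martingale $M_0(t)=Z_0^n(t)e^{-\lambda_0 t}/n$: it inserts the intermediate deterministic time $\zeta_n-\epsilon_n$, proves a Chebyshev law of large numbers there, controls the oscillation ratio $Z_0^n(\gamma_n)/Z_0^n(\zeta_n-\epsilon_n)$ by sandwiching between i.i.d.\ pure-birth and pure-death couplings over a window of length $2\epsilon_n$ combined with Wald's equation, and needs a separate lemma (Lemma \ref{lemma: Z0_gamma>0}) to guarantee $Z_0^n(\gamma_n)>0$ so the logarithm is defined. Your route --- the exact identity $\hat{\lambda}^{(n)}_0-\lambda_0=\gamma_n^{-1}\log M_0(\gamma_n)$, the variance computation $\mathrm{Var}\left(M_0(\zeta_n)\right)\asymp n^{\alpha|\lambda_0|/\lambda_1-1}$ (which is exactly where the hypothesis $\alpha<-\lambda_1/\lambda_0$ enters in both arguments), and Doob's $L^2$ maximal inequality on $[0,\zeta_n+\epsilon_n]$ --- is arguably cleaner: the exact identity dispenses with the paper's three-term splitting of $\tfrac{1}{\gamma_n}$ versus $\tfrac{1}{\zeta_n}$, and positivity of $Z_0^n(\gamma_n)$ comes for free from $M_0(\gamma_n)\to 1$, rendering the analogue of Lemma \ref{lemma: Z0_gamma>0} unnecessary; your resulting rate $O_p(n^{-w}/\log n)$ with $w<(1+\lambda_0\alpha/\lambda_1)/2$ matches the paper's. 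Finally, your explicit algebra $L-2=-2\lambda_1^2/[\lambda_0(2\lambda_1-\lambda_0)]>0$, which simultaneously identifies the limit of $U_n$ and shows $I_nR_n>2$ with high probability so that $U_n$ is well defined, is left implicit in the paper's proof; spelling it out is a genuine improvement in completeness.
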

% Notice our discussion will be based on the conditional event $\rho_n^0 =\{\omega: Z_0^n(\gamma_n) > 0  \} $.
\begin{proof}
     See Section \ref{sec: consistent estimators}.
\end{proof}

Theorem \ref{thm: consistent estimators} establishes the consistency of our estimators, ensuring their effectiveness when the initial size of the tumor is sufficiently large. Moreover, for $\hat{\lambda}^{(n)}_0$ and $\hat{\lambda}^{(n)}_1$, a lower bound can be established for the rate of convergence for both estimators. In particular, in the proof of Theorem \ref{thm: consistent estimators}, we obtain that 
\begin{align*}
    &\lim_{n\to \infty}\PP\left( n^{u}\left| \hat{\lambda}^{(n)}_0-\lambda_0  \right| > \epsilon\right) =0  \text{\quad for \;} u < \min\{(1-\alpha)/2, 1/2+\alpha\lambda_0/2\lambda_1\}, \text{and}\\
    &\lim_{n\to \infty}\PP\left( n^{u}\left| \hat{\lambda}^{(n)}_1-\lambda_1  \right| > \epsilon\right) =0  \text{\quad for \;} u < \min\{(1-\alpha)/4,  -\lambda_0\alpha/2\lambda_1,1/2+\alpha\lambda_0/2\lambda_1 \}.
\end{align*}
% To show the estimators are consistent, we only need to prove $\hat{\lambda}^{(n)}_0, \hat{\lambda}^{(n)}_1, \hat{r}^{(n)}_1$ and $\hat{\alpha}^{(n)}$ can converge to $\lambda_0, \lambda_1, r_1$ and $\alpha$ in probability respectively. Actually in the proof of Theorem \ref{thm: consistent estimators} we not only prove the convergence holds but provide a guarantee for the convergence rate of $\hat{\lambda}^{(n)}_0$ and $\hat{\lambda}^{(n)}_1$. In particular we show 
% \begin{align*}
%        &\lim_{n\to \infty}\PP\left( n^{u}\left| \hat{\lambda}^{(n)}_0-\lambda_0  \right| > \epsilon\right) =0.  \text{\quad for \;} u <(1-\alpha)/2\\
%         &\lim_{n\to \infty}\PP\left( n^{u}\left| \hat{\lambda}^{(n)}_1-\lambda_1  \right| > \epsilon\right) =0.  \text{\quad for \;} u < \min\{(1-\alpha)/4,  -\lambda_0\alpha/2\lambda_1 \}
% \end{align*}
To the best of our knowledge, we are among the first in presenting an analysis of the convergence rate for estimators of key parameters in tumor dynamics. As will be discussed in Section \ref{Sec: simulation and extensions}, this analysis leads to an accurate estimation of these parameters within practical and realistic settings.

\begin{remark}
It should be noted that the formulation of our estimators utilizes data on the number of mutant clones and their Simpson's Index at the recurrence time. This data yields the same information as the mutant clone size's empirical mean and second moment. Specifically, while it is possible to devise estimators based on the first and second moments of mutant clone size, our preference is to employ the number of clones and Simpson's Index, which offers a more nuanced understanding of heterogeneity.
% In this analysis, the construction of our estimators is based upon the information derived from $I_n$ and $R_n$ at the recurrence time $\gamma_n$. It is important to note that the first-order moment information of the clone size, represented as
% \begin{align*}
% Z_1^n(\gamma_n) = \sum_{i=1}^{I_n(\gamma_n)} X_{i,n} = n,
% \end{align*}
% does not contribute additional insights, as it is already known. Given this we consider the second moment information given by Simspon's Index
% \begin{align*}
% R_n(\gamma_n) = \frac{1}{n^2}\sum_{i=1}^{I_n(\gamma_n)} X_{i,n}^2.
% \end{align*}
% Note that we could have alternatively worked with an estimate of the second moment of the clonal population size
% $$
% \frac{1}{I_{n}(\gamma_n)}\sum_{i=1}^{I_n(\gamma_n)} X_{i,n}^2,
% $$
% but we preferred the heterogeneity interpretation of the Simpson's Index.
\end{remark}

\section{Simulation Results and Extensions}\label{Sec: simulation and extensions}

In this section, we elucidate the performance of our estimators through numerical simulation.

% In order to make our arguments more convincing, we conducted some simulation experiments to corroborate on it.
% \subsection{Convergence of estimators as $n$ gets larger}
\subsection{Convergence of Estimators}
% \subsubsection*{Parameter Setting}

Theoretically, our estimators are posited to work with a sample of recurrent tumor from a patient, provided several conditions are met. These include the survival of sensitive cells until the time of cancer recurrence, a sufficiently large tumor size, and a constraint on the model parameters ($0< \alpha< \min\left\{-\lambda_1/\lambda_0, 1\right\}$). In this section, we present simulation results and a detailed discussion aimed at identifying the threshold for tumor size that ensures a good performance of our estimators.

We consider an example system with $\alpha = 0.5, r_0 = 0.5, d_0 = 1, r_1 = 1.5$ and $d_1 = 1$. Note that $\lambda_0 = r_0 - d_0 = -0.5$ and $\lambda_1 = r_1 - d_1 = 0.5$. Recall that we denote by $n$ the initial tumor burden (i.e., the initial population size of sensitive cells). We measure the performance of our estimators for $n=100\times 2^i$, where $i\in \{0,1,\cdots, 20\}$. Specifically, for each value of $n$, we simulate the corresponding system until the population of resistant cells reaches the initial tumor burden $n$. We then record the recurrence time ($\gamma_n$), the population size of sensitive cells at cancer recurrence ($Z_0(\gamma_n)$), the number of mutant clones ($I_n(\gamma_n)$), and the Simpson's Index ($R_n(\gamma_n)$). By Theorem \ref{thm: consistent estimators}, we compute the estimators $\hat{\lambda}_0^{(n)},\hat{\lambda}_1^{(n)},\hat{r}_1^{(n)}$ and $\hat{\alpha}_1^{(n)}$, and calculate the relative errors of these estimators, i.e.,
\begin{align*}
    & \lambda_0^{error} = \frac{\left| \hat{\lambda}_0^{(n)} - \lambda_0 \right|}{-\lambda_0},\; \lambda_1^{error} = \frac{\left| \hat{\lambda}_1^{(n)} - \lambda_1 \right|}{\lambda_1},\;  \alpha^{error} = \frac{\left| \hat{\alpha}^{(n)} - \alpha \right|}{\alpha},\; r_1^{error} = \frac{\left| \hat{r}_1^{(n)} - r_1 \right|}{r_1}.
\end{align*}
For each value of $n$, we repeat such a process for $k = 40$ times and report the mean value and standard deviation of $\lambda_0^{error}, \lambda_1^{error}, \alpha^{error}$ and $r_1^{error}$ in Figure \ref{fig:large_n}.
% We calculate and record the mean value and standard deviation of $\lambda_0^{error}, \lambda_1^{error}, \alpha^{error}$ and $r_1^{error}$ under each setting of $n$. 
\begin{figure}[h!]
\centering
\setkeys{Gin}{width=1\linewidth}
\begin{minipage}[t]{0.5\textwidth}
\includegraphics{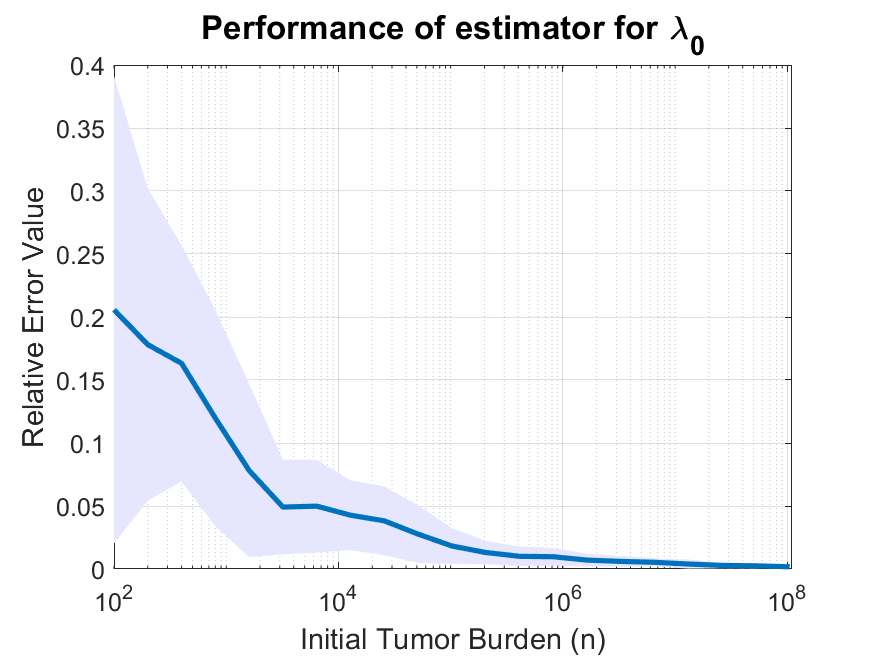}
%\caption{A} \label{fig:a}
\end{minipage}\hfill
\begin{minipage}[t]{0.5\textwidth}
\includegraphics{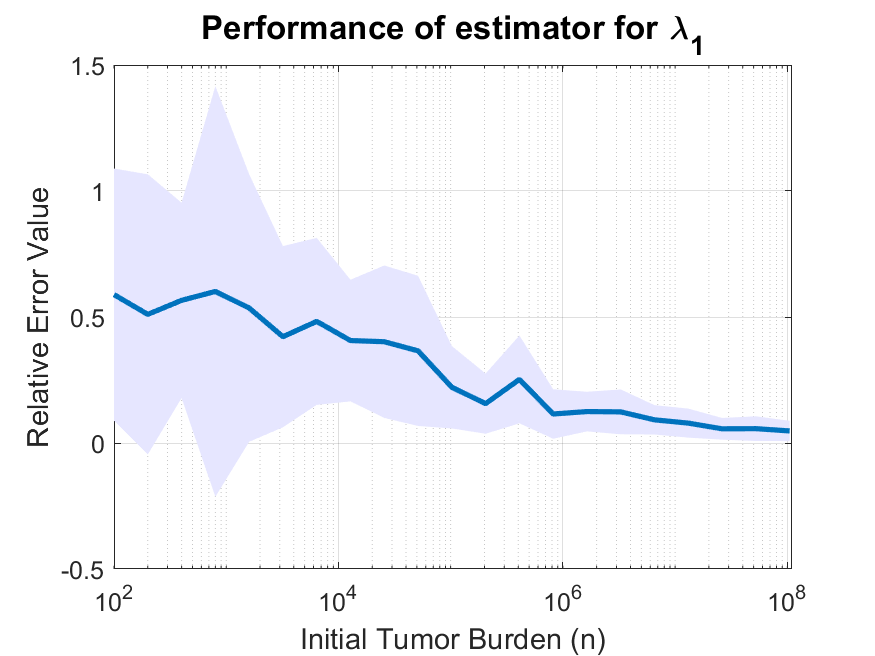}
%\caption{B} \label{fig:b}
\end{minipage}\hfill
\begin{minipage}[t]{0.5\textwidth}
\includegraphics{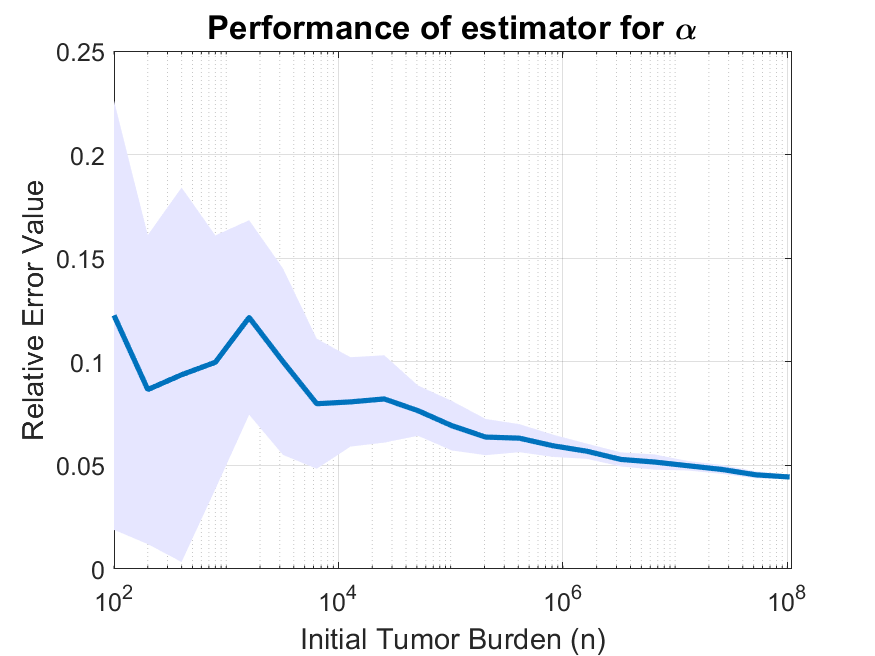}
%\caption{C} \label{fig:c}
\end{minipage}\hfill
\begin{minipage}[t]{0.5\textwidth}
\includegraphics{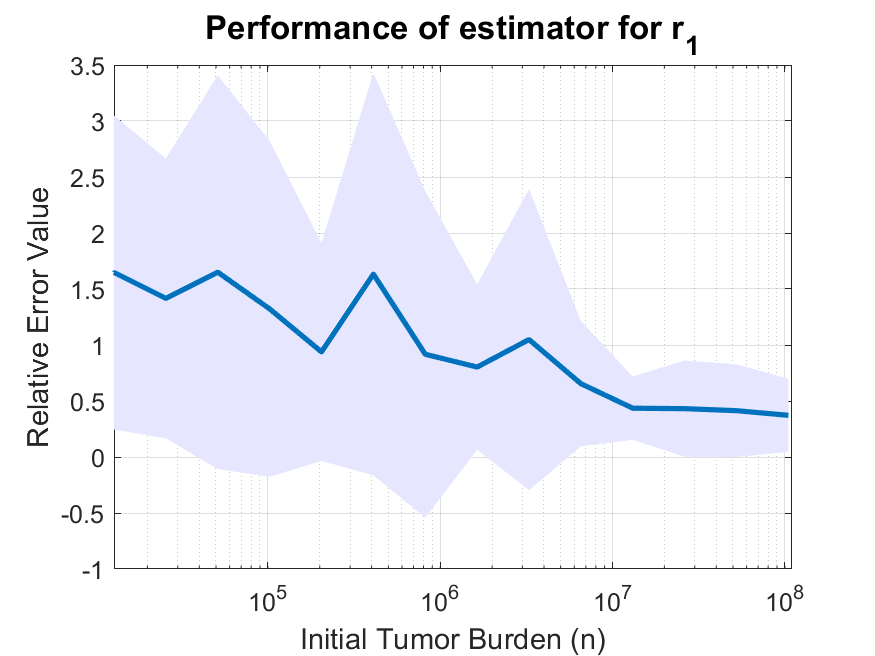}
%\caption{D} \label{fig:d}
\end{minipage}
\caption{In this experiment, we set $\alpha = 0.5, \lambda_0 = -0.5, r_1 = 1.5, d_1 = 1$ and $\lambda_1 = 0.5$. We let $n$ vary from $100$ to $100\times 2^{20}$. For each value of $n$, we repeat the experiment $k = 20$ times. We represent the mean relative errors of each estimator with a dark blue line and illustrate the standard deviation using a shaded area in light blue.}
% We plot the mean relative errors of each estimator as dark blue line, and plot the area of standard deviation as light blue.}
\label{fig:large_n}
\end{figure}
% \subsubsection*{Results Analysis}

Observations drawn from Figure \ref{fig:large_n} reveal that the mean values of relative errors associated with all estimators exhibit a decreasing trend towards $0$ as $n$ increases. Concurrently, the standard deviations, represented by the light blue shaded areas in the figure, similarly diminish to $0$ with the increase in $n$. These observations align with the theoretical findings presented in Theorem \ref{thm: consistent estimators}, confirming the consistency of our estimators.

% From the simulation results Figure \ref{fig:large_n}, we can observe that all of the four estimators will be able to converge to their corresponding limit, i.e. $\lambda_0, \lambda_1, \alpha$ and $r_1$ as $n$ goes to infinity, because their mean relative errors go to $0$, which implies that our estimators are well-constructed. Besides, the light blue area becomes more and more narrow in each figure meaning that our estimators show a good performance most of the time, which replies they are consistent estimators.  

Notably, these estimators exhibit dissimilar levels of performance. In particular, $\hat{\lambda}_0^{(n)}$ demonstrates superior performance, an observation that aligns with the theoretical results established in the proof of Theorem \ref{thm: consistent estimators}, which states that $\hat{\lambda}_0^{(n)}$ approaches $\lambda_0$ at a convergence rate of at least $n^{-\min\left\{  (1-\alpha)/2, (1+\lambda_0\alpha/\lambda_1)/2\right\}}$. However, the convergence rate of $\hat{\lambda}_1^{(n)}$ is only ensured to be at least $n^{-\min\{(1-\alpha)/4, -\lambda_0\alpha/2\lambda_1,(1+\lambda_0\alpha/\lambda_1)/2 \}}$. On the other hand, there are no assurances about the convergence rates of ${\alpha}^{(n)}$ and $\hat{r}_1^{(n)}$. Additionally, we note that the rate at which $\hat{r}_1^{(n)}$ converges is the slowest. The convergence of $\hat{r}_1^{(n)}$ depends on the convergence of the term $e^{(\lambda_1 - \hat{\lambda}_1^{(n)})\zeta_n}$ (see the proof of Theorem \ref{thm: consistent estimators} in Section \ref{sec: consistent estimators}). However, based on the convergence result of $\hat{\lambda}_1^{(n)}$, the theoretical guarantee for the convergence of $(\lambda_1 - \hat{\lambda}_1^{(n)})\zeta_n$ only holds when $\zeta_n$ is less than $n^{\min\{(1-\alpha)/4,  -\lambda_0\alpha/2\lambda_1 \}}$. Given that $\zeta_n$ is approximately $\frac{\alpha}{\lambda_1}\log n$, in order to ensure that the condition is satisfied, the value of $n$ must be at least $10^{11}$.  This magnitude surpasses the capabilities of our current simulation framework. However, it is within the realm of possibility for certain types of tumors to achieve such a numerical scale. A 10-cm cancerous tumor is estimated to contain approximately \(10^{11}\) cells~\cite{narod2012disappearing}. This size is not uncommon in various tumor types, with instances documented in breast carcinoma~\cite{giuliano1995improved} as well as nasopharyngeal carcinoma~\cite{sze2004primary}, among others.

% It's noteworthy that $\hat{\lambda}_0^{(n)}$ have a better performance compared with $\hat{\lambda}_1^{(n)}$. When $n$ comes to a magnitude of $10^8$, the relative error of $\hat{\lambda}_0^{(n)}$ is about $0.01$, but that of $\hat{\lambda}_1^{(n)}$ is still about $0.1$. This situation corresponds to our theoretical results proved in Theorem 4, which states that $\hat{\lambda}_0^{(n)}$ should have a faster convergence rate than $\hat{\lambda}_1^{(n)}$ because $\hat{\lambda}_0^{(n)}$ allows for the product of a polynomial term of $n^u$ where $u$ can be as large as $(1-\alpha)/2$, while it's not guaranteed by $\hat{\lambda}_1^{(n)}$. Besides, note that in the proof of Theorem 4 for the convergence of $\hat{r}_1^{(n)}$, we need the convergence of the term $e^{(\lambda_1 - \hat{\lambda}_1^{(n)})\zeta_n}$. But from the convergence result of $\hat{\lambda}_1^{(n)}$, the convergence of $(\lambda_1 - \hat{\lambda}_1^{(n)})\zeta_n$ can only be guaranteed theoretically when $\zeta_n < n^u$. Here $u$ is a constant no larger than $\min\{(1-\alpha)/4,  -\lambda_0\alpha/2\lambda_1 \}$. Because $\zeta_n \sim \frac{\alpha}{\lambda_1}\log n$, our setting of parameters will require $n$ at least have a magnitude of $10^{11}$ to guarantee the condition holds. However, our simulation results show that the convergence has happened when $n$ grows to $10^5$ although its performance is not as good as the other estimators. In conclusion, our estimators for $\lambda_0, \lambda_1, \alpha$ and $r_1$ show a good trend of convergence in general.

\subsection{Performance of Estimators in the Presence of Undetectable Clones}

In this study, we have demonstrated, both theoretically and via simulation, the good performance of our estimators of the model parameters \( \lambda_0 \), \( \lambda_1 \), \( \alpha \), and \( r_1 \) in contexts where there is a large initial population size $n$ of drug sensitive cells. However, the accuracy of our estimators relies heavily on calculating the Simpson Index. This calculation requires detailed information about the sizes of all mutant clones at the time of recurrence. A significant challenge arises due to the limitations of current genomic sequencing technologies, which are incapable of identifying mutant clones of small size.
% While we have demonstrated both theoretically and through simulation that our methods can accurately estimate the values of \( \lambda_0 \), \( \lambda_1 \), \( \alpha \), and \( r_1 \) in scenarios with a large initial population \( n \), the construction of our estimators is contingent on computing the Simpson Index. This computation necessitates data on the sizes of all tumor clones present at the recurrence time. However, due to the varying emergence times of these clones, some remain too small to be detected with current gene sequencing techniques. 
Previous studies \textcolor{blue}{\cite{Roth2014, Stead2013}} indicate that the minimum detectable clone size using current technologies is around \( 2\% \) of the total tumor size. Consequently, it is necessary to modify the Simpson Index computation in order to account for these undetectable clones, using only the data of detectable clones. To assess the effectiveness of this strategy, we conduct a simulation experiment that takes into account the presence of clones that fall below the detection threshold.
% Recent studies \textcolor{red}{[cite relevant papers here]} indicate that the minimum detectable clone size is about \( 2\% \) of the total tumor size. In response to this limitation, we aim to incorporate the consideration of these unobservable clones by calculating an approximate Simpson Index using only the available data. This modified approach will then be employed to construct our estimators. To evaluate the efficacy of this approach, we conduct a simulation experiment considering the presence of undetectable clones.

In our simulation experiments, we fix \( \alpha = 0.8 \), \( r_0 = 1.3 \), \( d_0 = 1.5 \), \( r_1 = 2.0 \), and \( d_1 = 1.2 \),  while progressively increasing the initial tumor burden \( n \). Each experiment consists of three distinct scenarios for evaluating our estimators: a baseline control and two test settings. In test settings, we exclude clone data that are smaller than \( 2\% \) and \( 10\% \) of the total tumor size at the time of recurrence, respectively. This approach allows us to assess the performance of our estimators in scenarios with varying degrees of data availability.
% In the simulation experiments, we fix \( \alpha = 0.8 \), \( r_0 = 1.3 \), \( d_0 = 1.5 \), \( r_1 = 2.0 \), and \( d_1 = 1.2 \), while progressively increasing the initial tumor burden \( n \). In each trial, we assess our estimators in three scenarios: a control experiment and two test settings. In the test settings, we exclude clone data smaller than \( 2\% \) and \( 10\% \) of the total tumor size at recurrence, thereby evaluating estimator performance under varying levels of data completeness.
\begin{figure}[h!]
\centering
  \begin{subfigure}{8cm}
  \centering\includegraphics[width=1\textwidth]{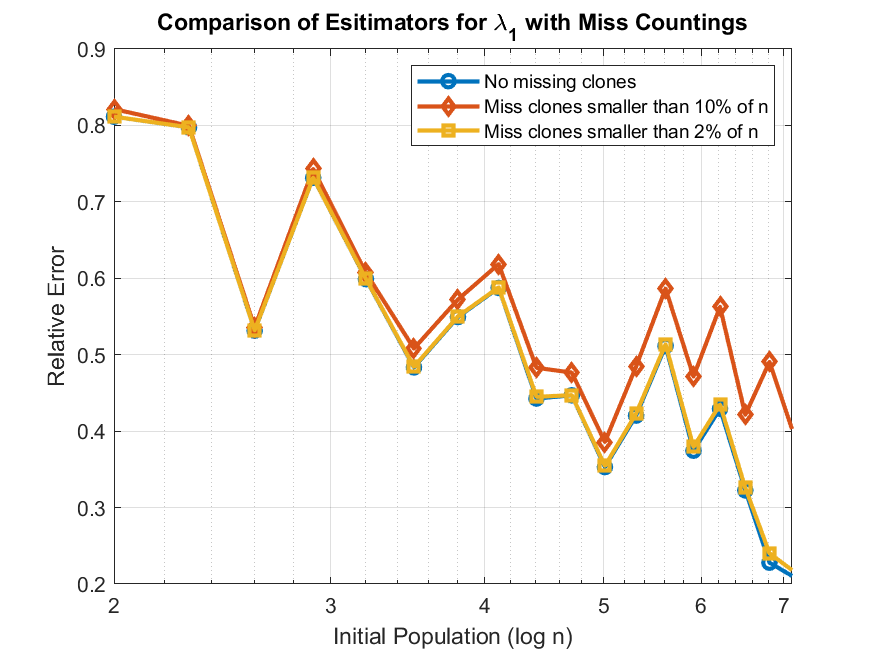}
  \end{subfigure}
  \begin{subfigure}{8cm}
    \centering\includegraphics[width=1\textwidth]{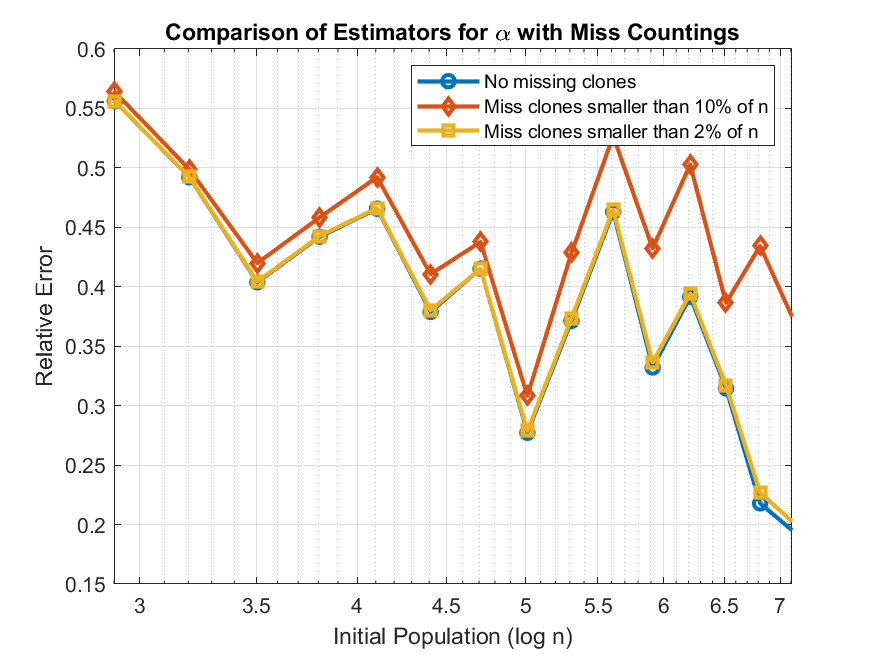}
  \end{subfigure}
    \begin{subfigure}{8cm}
  \centering\includegraphics[width=1\textwidth]{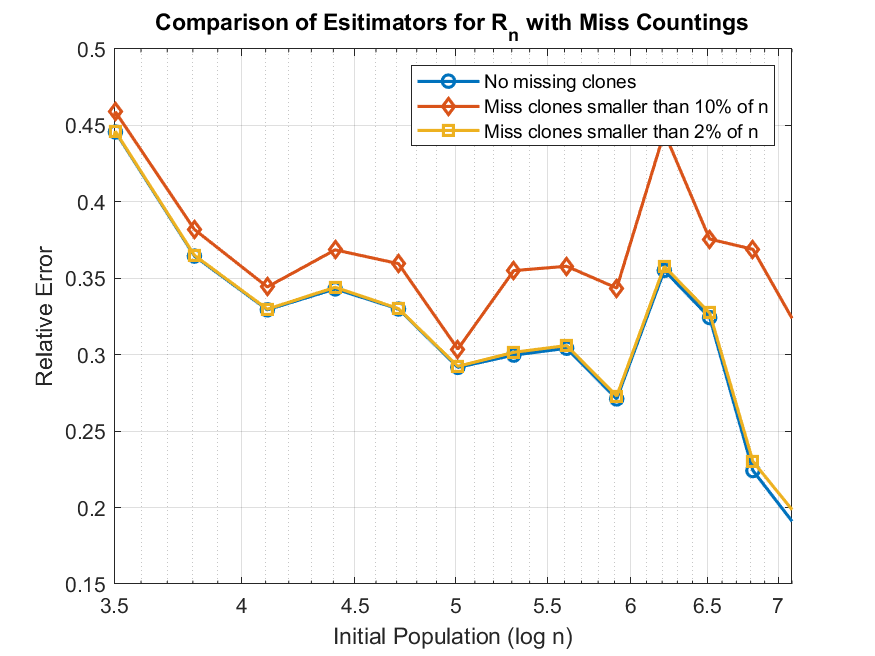}
  \end{subfigure}
  \begin{subfigure}{8cm}
    \centering\includegraphics[width=1\textwidth]{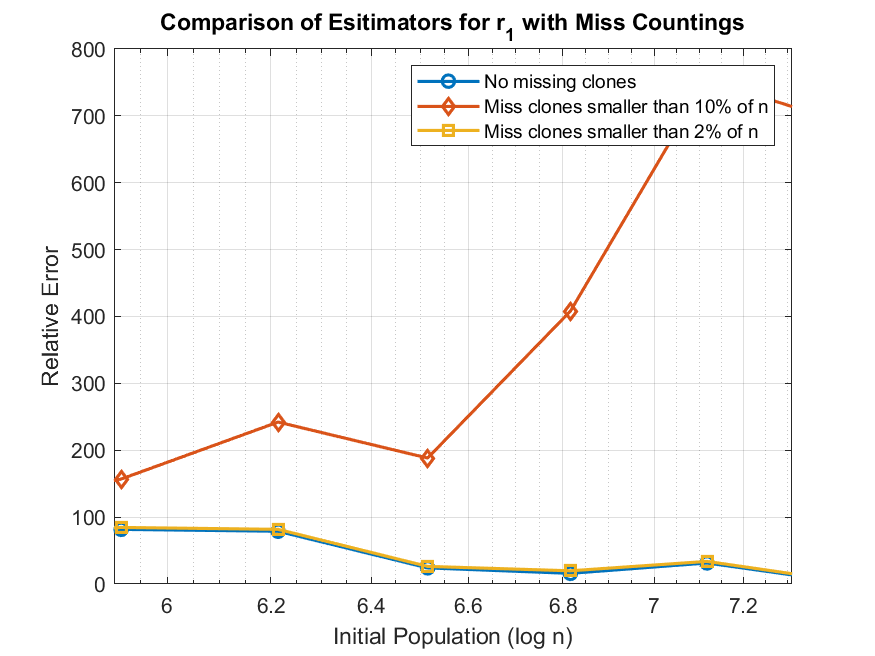}
  \end{subfigure}

\caption{In this experiment, we conduct a comparative analysis across three experimental setups. We set $\alpha = 0.8$, $\lambda_0 = -0.2$, $\lambda_1 = 0.8$ and $r_1 = 2.0$. The first setup is a baseline scheme without restrictions on clone size observation; the second setup imposes a threshold, only considering clones that constitute at least $2\%$ of the total population; and the third raises this threshold to $10\%$. We assess the impact of these varying observation restrictions on the performance of the estimators for $\lambda_1$ and $\alpha$.}
% In this experiment, we compare three sets of experiments with fixed parameters ($\alpha = 0.8$, $\lambda_0 = -0.2$, $\lambda_1 = 0.8$ and $r_1 = 2.0$) while employing three simulation schemes: a control scheme with no clone size observation limits, a $2\%$ limit scheme observing clones $\geq  2\%$ of the total population, and a $10\%$ limit scheme observing clones $\geq 10\%$ of the total population. Our analysis focuses on evaluating the relative error in  the performance of parameter estimators for $\lambda_1$ and $\alpha$.}
\label{fig:miss_count}
\end{figure}
% \subsubsection*{Results Analysis}

As illustrated in Figure~\ref{fig:miss_count}, excluding clone size data below $10\%$ of the total tumor population may significantly affect the performance of our estimators. In contrast, the omission of clones smaller than $2\%$ of the total tumor population appears to have a negligible impact. This observation implies that, given the capabilities of current technologies to detect clones as small as $2\%$ of the total tumor size, the estimators constructed using our method are still reliable.
% As illustrated in Figure~\ref{fig:miss_count}, the omission of clone size information below $10\%$ of the total cell population significantly impacts our estimators' performance. Conversely, the exclusion of clones smaller than $2\%$ of the total tumor size has a negligible effect.  This suggests that with current medical technology, capable of detecting clones as small as $2\%$ of the total tumor size, reliable estimator construction is still feasible.

\subsection{Stability Analysis} 

As established in Theorem~\ref{thm: consistent estimators}, the performance of our estimators relies on the assumption that \( 0 < \alpha < \min\{1, -\lambda_1 / \lambda_0\} \). In this section, we examine the robustness of our estimators across a range of parameter settings, particularly near the critical boundary conditions, such as when $\alpha$ is close to either $\min\{1, -\lambda_1 / \lambda_0\}$ or $0$. To facilitate a clear interpretation of the results, we conduct three sets of numerical experiments. For each set, we modify a single parameter while keeping the others constant, enabling us to isolate the impact of each parameter.
% Guided by our model assumptions and Theorem~\ref{thm: consistent estimators}, the parameters \( \lambda_0 \), \( \lambda_1 \), and \( \alpha \) must satisfy \( 0 < \alpha < \min\{1, -\lambda_1 / \lambda_0\} \) for the estimators to be applicable. Our objective here is to scrutinize the estimators' stability under different parameter settings, particularly near the boundary condition. To facilitate clear interpretation of the results, we structure our investigation into three distinct experiments, each modifying one variable while holding the others constant, thereby isolating the effects of each parameter.

In the first set of experiments, we fix \( r_1 = 1.5 \), \( d_1 = 1.0 \), and \( \alpha = 0.5 \). The values for \( r_0 \) and \( d_0 \) are uniformly sampled from $(0.8, 1.2)$ and $(1.3, 1.7)$ respectively, ensuring that \( \lambda_0 \) falls within $(-0.9, -0.1)$. In this case, we have \( -\lambda_0 < \lambda_1 / \alpha = 1 \). In the second set of experiments, we fix \( r_0 = 1.0 \), \( d_0 = 1.5 \), and \( \alpha = 0.5 \). The values for \( r_1 \) and \( d_1 \) are uniformly sampled from $(1.4, 1.8)$ and $(0.7, 1.1)$ respectively, ensuring that \( \lambda_1 \) falls within $(0.3, 0.9)$. In this case, we have \(\lambda_1 > -\alpha\lambda_0 = 0.25 \). In the third set of experiments, we fix \( r_0=1 \), \( d_0=1 \), \( r_1 =1.5\), and \( d_1=1.5 \) respectively. The value for \( \alpha \) is uniformly sampled from $(0, 1)$.
% The first set fixes \( r_1 = 1.5 \), \( d_1 = 1.0 \), and \( \alpha = 0.5 \), with \( r_0 \) and \( d_0 \) varying randomly within [0.8, 1.2] and [1.3, 1.7] respectively, ensuring \( \lambda_0 \) falls within [-0.9, -0.1] to ensure \( -\lambda_0 < \lambda_1 / \alpha = 1 \). The second set holds \( r_0 = 1.0 \), \( d_0 = 1.5 \), and \( \alpha = 0.5 \), with \( r_1 \) and \( d_1 \) selected from [1.4, 1.8] and [0.7, 1.1], yielding \( \lambda_1 \) values in [0.3, 0.9], which exceed \( -\alpha\lambda_0 = 0.25 \). In the final set, \( r_0 \), \( d_0 \), \( r_1 \), and \( d_1 \) are equated to 1.0 and 1.5 respectively, with \( \alpha \) sampled across [0, 1].

\begin{figure}[h!]
\setkeys{Gin}{width=1\linewidth}
\begin{minipage}[t]{0.25\textwidth}
\includegraphics{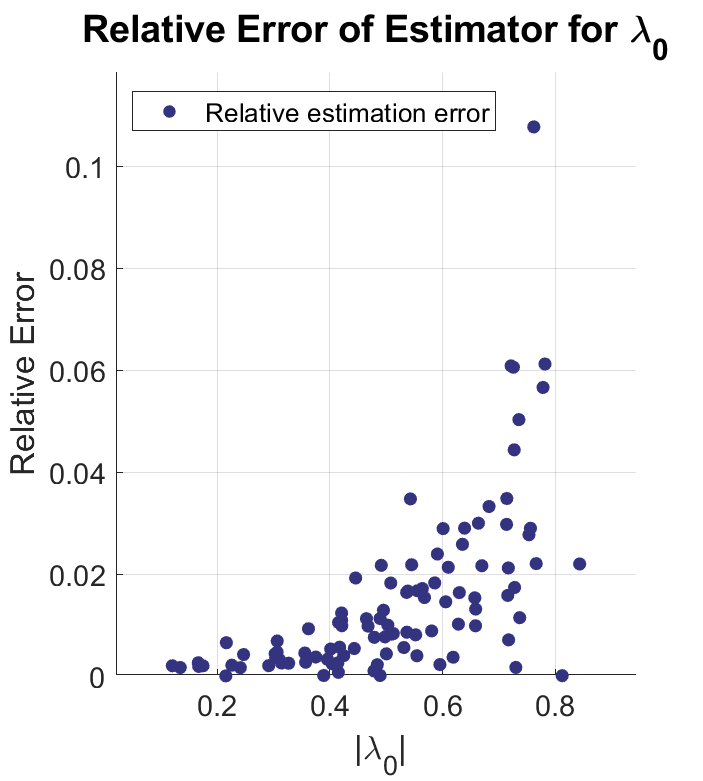}
%\caption{A} \label{fig:a}
\end{minipage}\hfill
\begin{minipage}[t]{0.25\textwidth}
\includegraphics{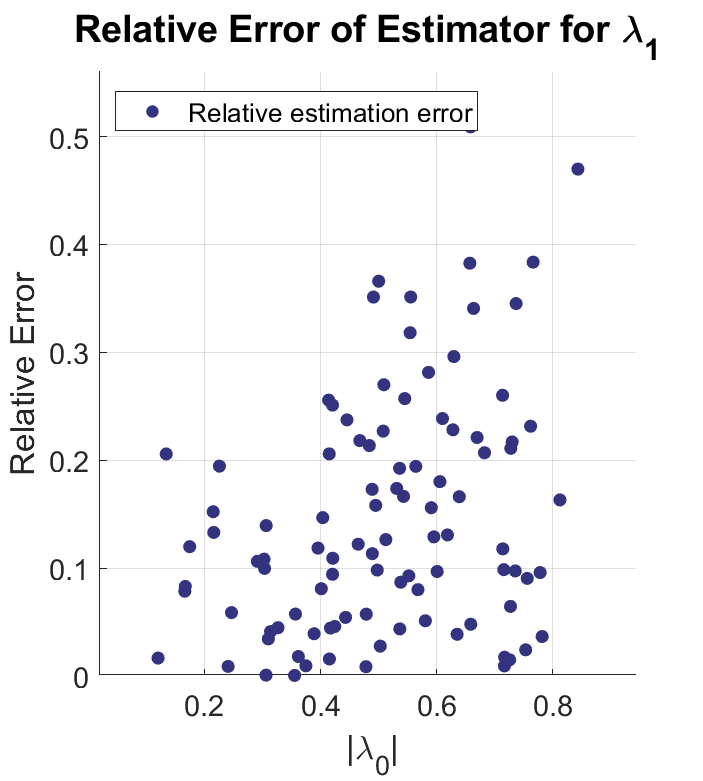}
%\caption{B} \label{fig:b}
\end{minipage}\hfill
\begin{minipage}[t]{0.25\textwidth}
\includegraphics{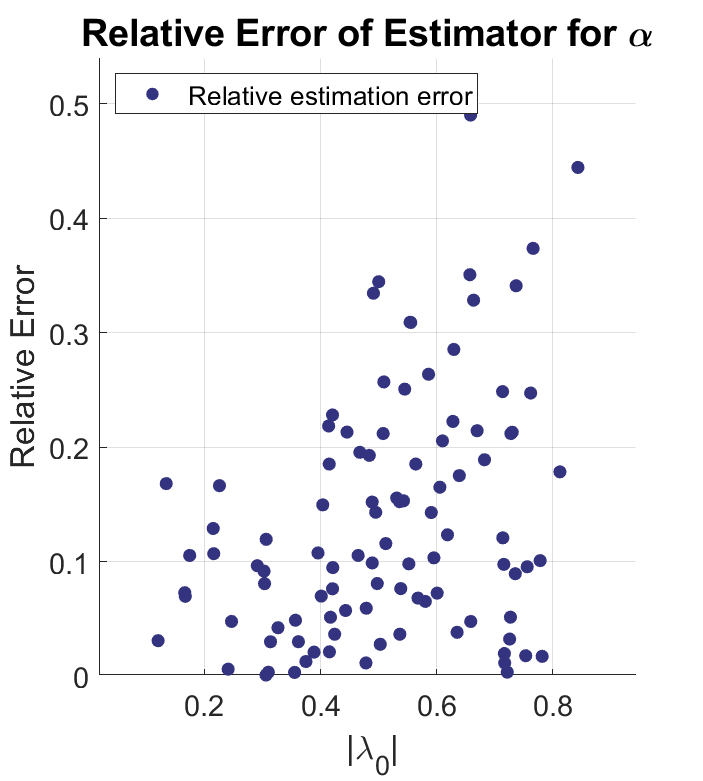}
%\caption{C} \label{fig:c}
\end{minipage}\hfill
\begin{minipage}[t]{0.25\textwidth}
\includegraphics{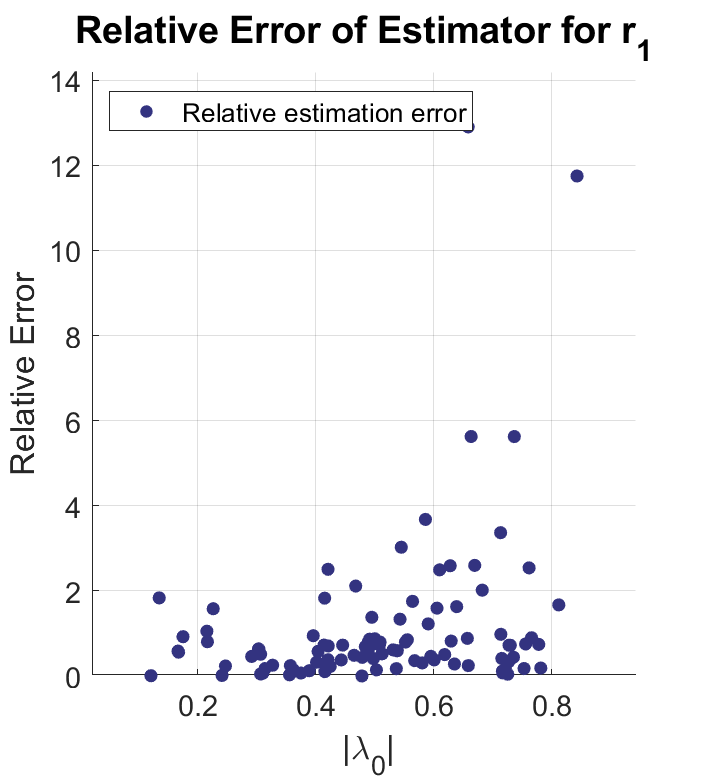}
%\caption{D} \label{fig:d}
\end{minipage}
\begin{minipage}[t]{0.25\textwidth}
\includegraphics{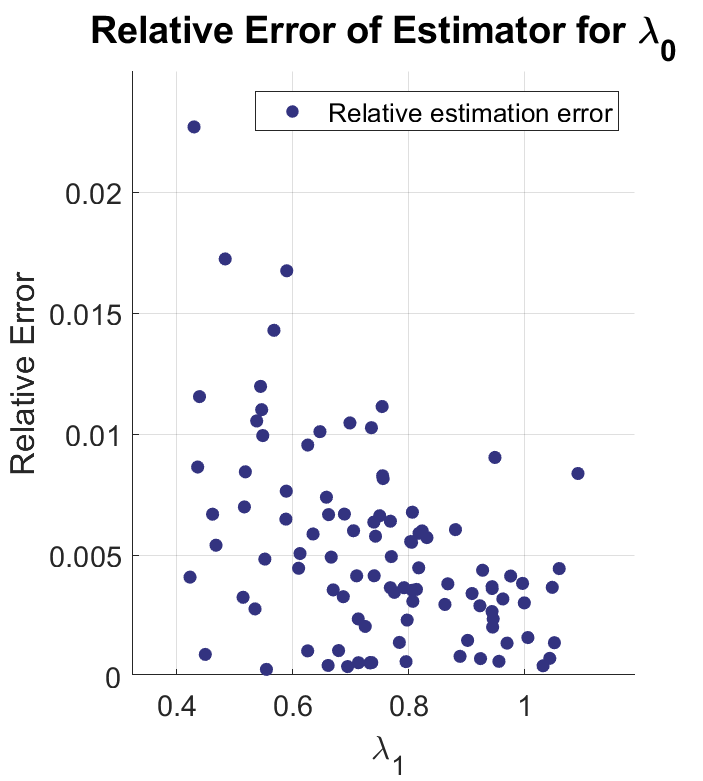}
%\caption{A} \label{fig:a}
\end{minipage}\hfill
\begin{minipage}[t]{0.25\textwidth}
\includegraphics{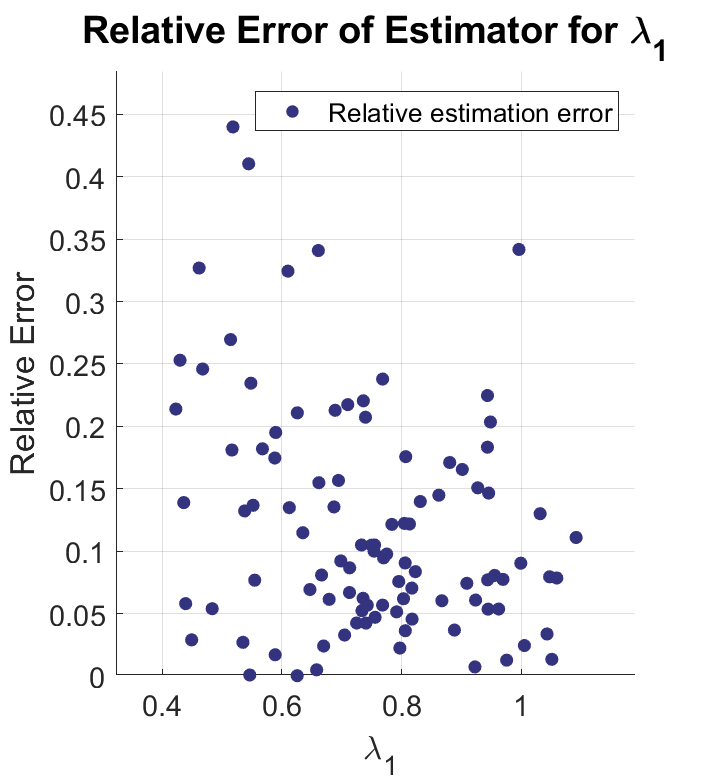}
%\caption{B} \label{fig:b}
\end{minipage}\hfill
\begin{minipage}[t]{0.25\textwidth}
\includegraphics{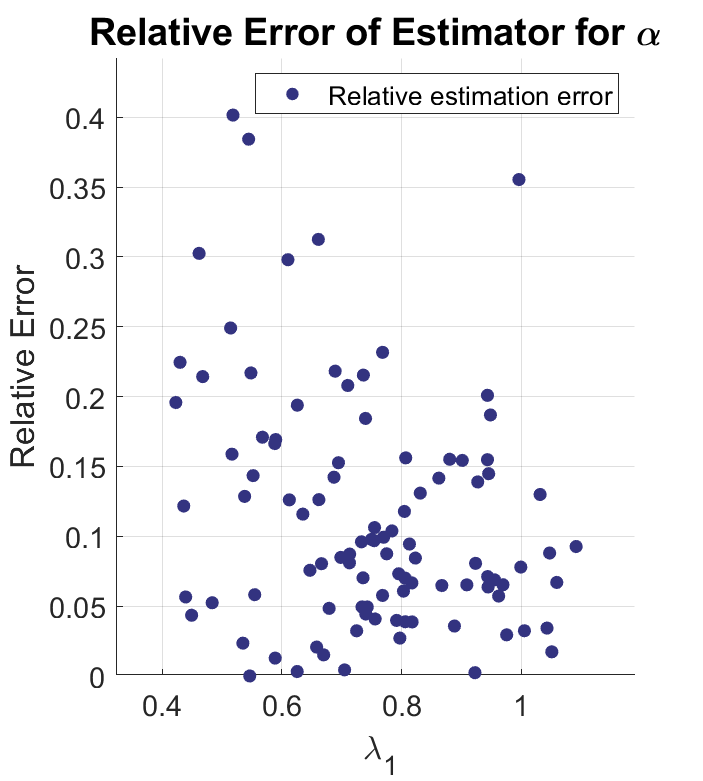}
%\caption{C} \label{fig:c}
\end{minipage}\hfill
\begin{minipage}[t]{0.25\textwidth}
\includegraphics{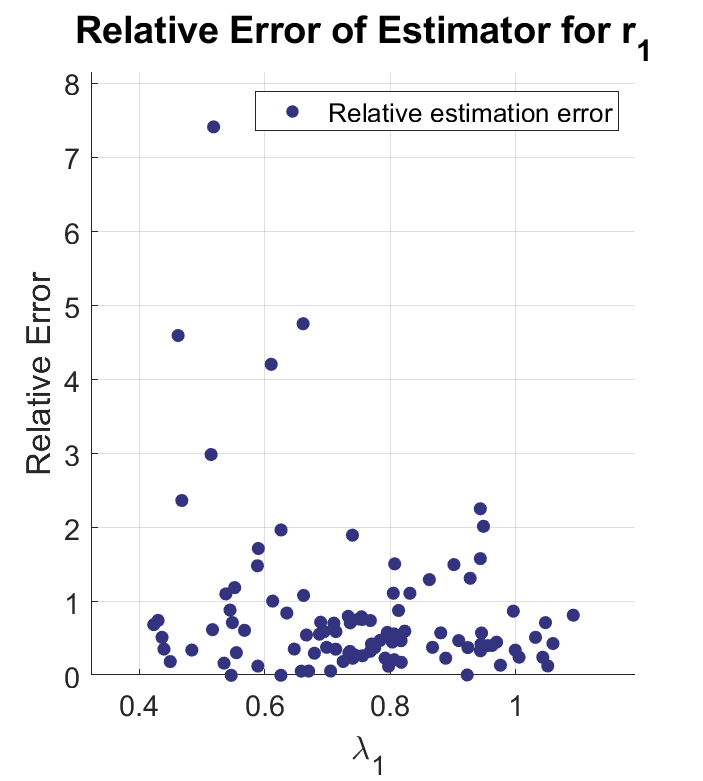}
%\caption{D} \label{fig:d}
\end{minipage}
\begin{minipage}[t]{0.25\textwidth}
\includegraphics{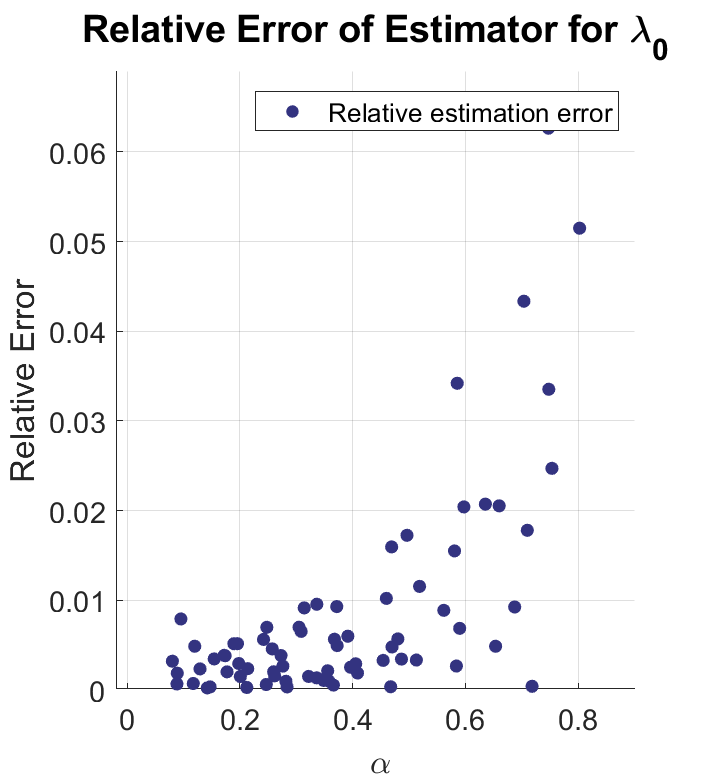}
%\caption{A} \label{fig:a}
\end{minipage}\hfill
\begin{minipage}[t]{0.25\textwidth}
\includegraphics{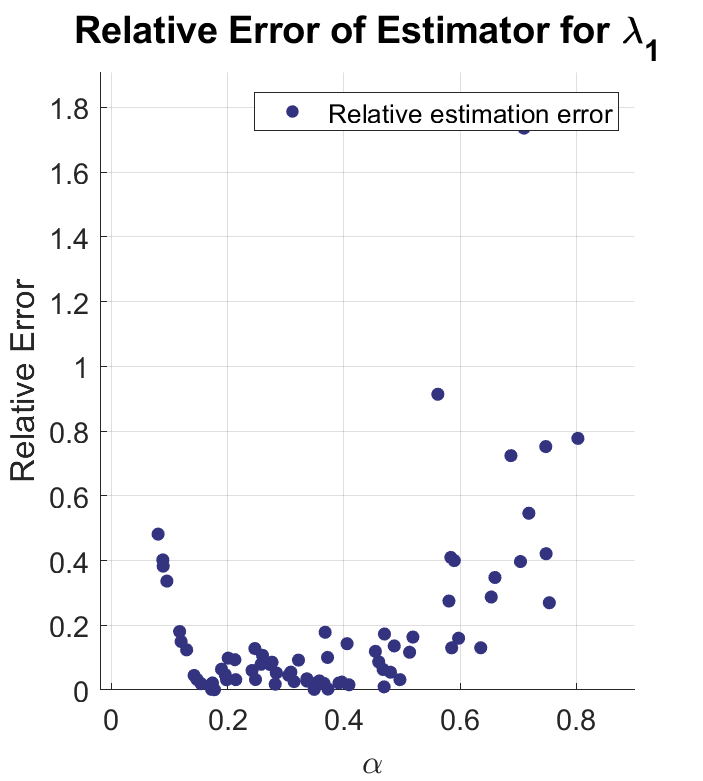}
%\caption{B} \label{fig:b}
\end{minipage}\hfill
\begin{minipage}[t]{0.25\textwidth}
\includegraphics{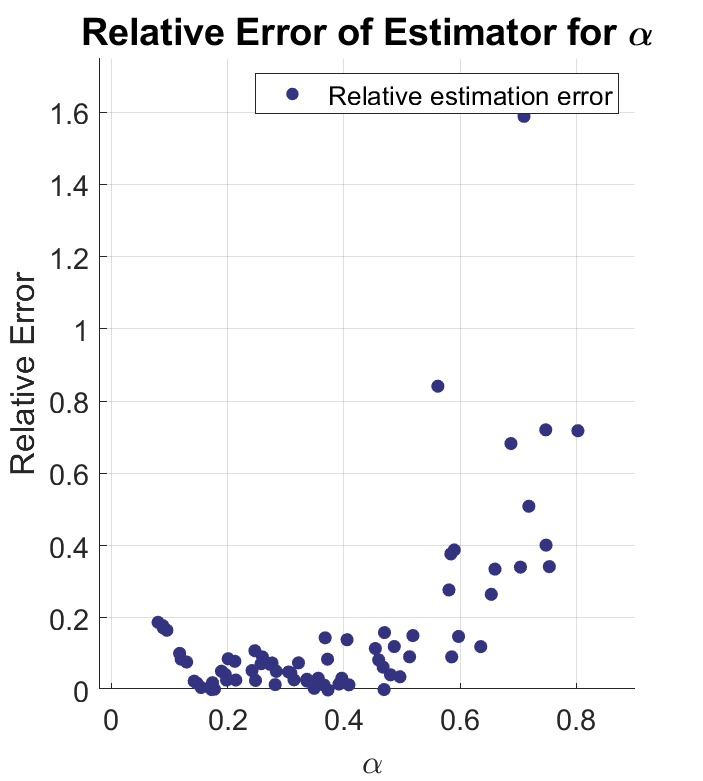}
%\caption{C} \label{fig:c}
\end{minipage}\hfill
\begin{minipage}[t]{0.25\textwidth}
\includegraphics{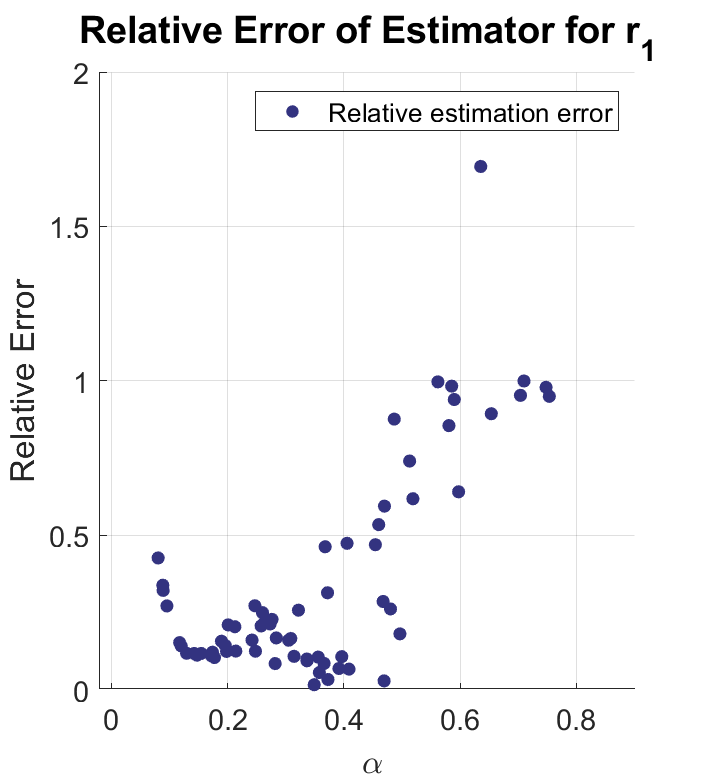}
%\caption{D} \label{fig:d}
\end{minipage}
\caption{In this experiment, we vary the model parameters and check the stability of our estimators. In the first setup, We fix $r_1 = 1.5$, $d_1 = 1.0$, and $\alpha = 0.5$, and select $\lambda_0$ uniformly from the interval $(-0.9, -0.1)$ (see the first row of Figure \ref{fig:random_para}). In the second setup, we fix $r_0 = 1.0$, $d_0 = 1.5$, and $\alpha = 0.5$ and select $\lambda_1$ uniformly from the interval $(0.3, 0.9)$ (see the second row of Figure \ref{fig:random_para}). Lastly, we set $r_0 = d_0 = 1.0$ and $r_1 = d_1 = 1.5$, and select $\alpha$ uniformly from the interval $(0, 1)$ (see the third row of Figure \ref{fig:random_para}). Under all setups, we fix the initial tumor burden to be \( n = 1 \times 10^6 \), and conduct numerical simulations to estimate \( \lambda_0 \), \( \lambda_1 \), \( \alpha \), and \( r_1 \). The resulting relative errors for these estimations, corresponding to each of the defined setups, are presented in Figure \ref{fig:random_para}.}
% In this experiment, we evaluate estimator performance by varying parameters within predetermined ranges. We fix $r_1 = 1.5$, $d_1 = 1.0$, and $\alpha = 0.5$, while $\lambda_0$ is uniformly sampled from $(-0.9, -0.1)$ (Figure \ref{fig:random_para}, row 1). In a second configuration, $r_0 = 1.0$, $d_0 = 1.5$, and $\alpha = 0.5$ remain constant with $\lambda_1$ varying from $(0.3, 0.9)$ (Figure \ref{fig:random_para}, row 2). Lastly, we set $r_0 = d_0 = 1.0$ and $r_1 = d_1 = 1.5$, with $\alpha$ ranging from $(0, 1)$ (Figure \ref{fig:random_para}, row 3). Under these three distinct parameter scenarios with an initial deterministic tumor population of \( n = 1 \times 10^6 \), simulation experiments are conducted to estimate parameters \( \lambda_0 \), \( \lambda_1 \), \( \alpha \), and \( r_1 \). The resulting relative errors of these estimations, corresponding to each parameter setting, are illustrated in Figure \ref{fig:random_para}. }
\label{fig:random_para}
\end{figure}

% \subsubsection*{Results Analysis}

The numerical results depicted in Figure~\ref{fig:random_para} demonstrate that our estimators exhibit better performance in certain scenarios compared to others. These include scenarios where the absolute value of the growth rate of sensitive cells, \( |\lambda_0| \), is low; the growth rate of resistant cells, \( \lambda_1 \), is high; or the mutation rate, \( \alpha \), is moderately small. A mathematical interpretation of this result can be linked to the boundary condition.  By defining the parameter gap as \( \Delta = 1 - \alpha|\lambda_0|/\lambda_1 \), it becomes evident that a decrease in \( |\lambda_0| \) or an increase in \( \lambda_1 \), or a decrease in $\alpha$, results in a reduction of the parameter gap \( \Delta \). Note that
\begin{align*}
    Z_0^n(\gamma_n) \approx \Phi_0^n(\zeta_n) \sim n^{\Delta}.
\end{align*}
Hence, the population size of sensitive cells at cancer recurrence tends to be small when $\Delta$ is small, which results in a high variability of $Z_0^n(\gamma_n)$. The increased variability of $Z_0^n(\gamma_n)$ then affects the estimation of other parameters. We also note that as \(\alpha\) approaches 0, which indicates a high mutation rate, the estimators for \(\lambda_1\), \(\alpha\), and \(r_1\) exhibit notably poor performance. This issue can be attributed to the emergence of a large number mutant clones, resulting in a very small Simpson's Index which increases the relative error on the estimation of \(R_n\).

\subsubsection*{Performance Improvement via Bootstrapping}
% From the conducted stability analysis, it becomes evident that the efficacy of our estimators deteriorates significantly as the parameter $\alpha$ approaches 1. This limitation poses a practical challenge in implementing these estimators, particularly in cancer research where mutation rates are generally infrequent, implying a high $\alpha$. The diminished performance of estimation when $\alpha$ is near 1 can be attributed to the lower mutation rate, which results in a reduced number of clones and, consequently, a paucity of data essential for calculating the Simpson Index.
Based on the stability analysis, the performance of our estimators declines when the parameter $\alpha$ approaches 1. This presents a practical challenge in implementing these estimators, especially for tumor types with low mutation rates. 

To address this issue, we propose a bootstrapping technique to enhance the accuracy in calculating the Simpson Index. In particular, when calculating Simpson's Index all bootstrap samples include the largest 20\% of clones, we then randomly sample 5/8th of the remaining clones.  After generating 1000 bootstrap samples we average the resulting Simpson's Index values to form an improved estimator. 
We examined the setting with parameters $n=10^7$, $\alpha = 0.8$, $\lambda_0 = -0.2$, and $\lambda_1 = 0.8$. In the bootstrapping process, each sample involves selecting the 20\% largest clones and resampling 50\% from the remaining clones. We conducted a comparative analysis to assess the effectiveness of the bootstrapping technique in improving estimator accuracy. By repeating the estimation process ten times and plotting the relative errors for both the estimators and the Simpson Index, we were able to compare the results before and after applying bootstrapping. As illustrated in Figure \ref{fig:bootstrapping}, the implementation of bootstrapping techniques improves the precision in calculating the Simpson Index and the associated estimators. This enhancement is particularly notable in the case of $r_1$, which typically presents a significant challenge in the estimation without bootstrapping.
\begin{figure}[h!]
    \centering
    \includegraphics[width = 0.8\textwidth]{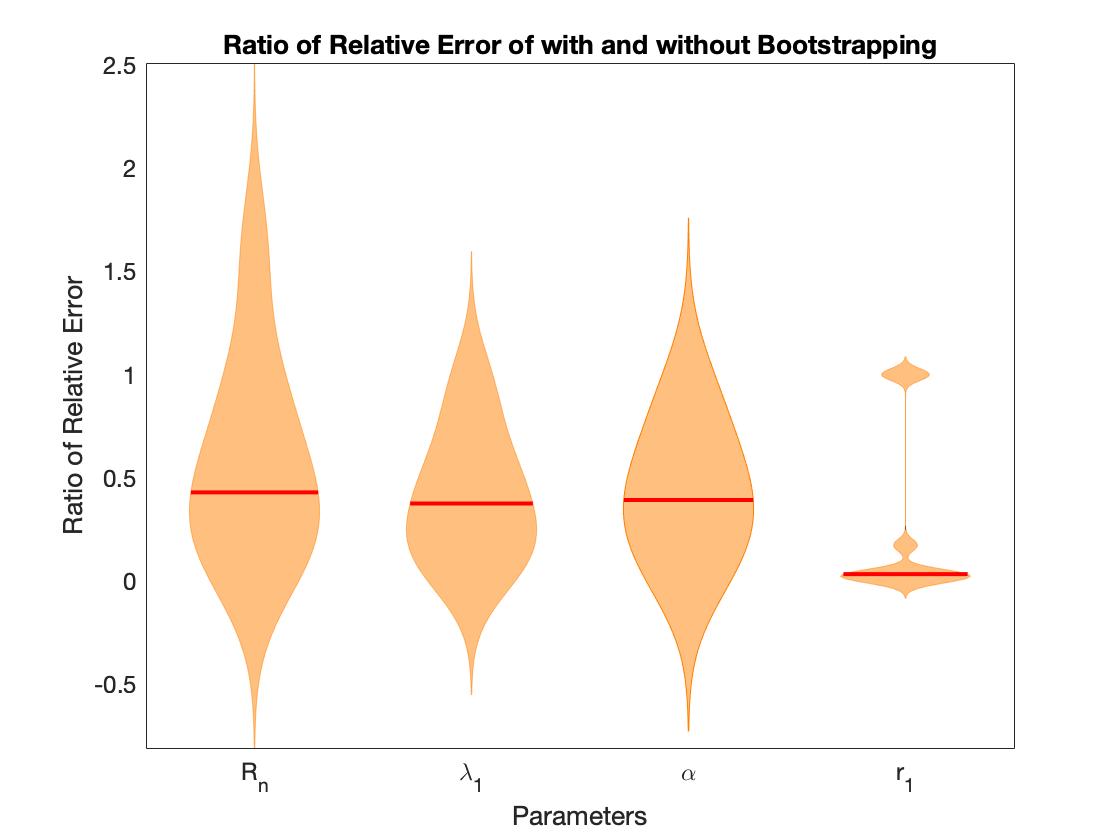}
    \caption{In this experiment, we set $n=10^7$, $\alpha = 0.8$, $\lambda_0 = -0.2$, and $\lambda_1 = 0.8$, and conduct $10$ experiments. We first derive a set of estimators using the approach developed in this paper, after which we employed bootstrapping techniques to refine these estimates. To facilitate a visual comparison of the bootstrapping technique's effectiveness, we generated violin plots illustrating the distribution of the ratio of relative errors post-bootstrapping to those pre-bootstrapping. The median of these distributions is highlighted by a red line.}
    % In this experiment,  we set parameters at $n=10^7$, $\alpha = 0.8$, $\lambda_0 = -0.2$, and $\lambda_1 = 0.8$, and replicated the experiments 10 times. Initial estimators were calculated using our existing method, followed by the application of bootstrapping techniques. To visually compare the efficacy of the bootstrapping process, we created violin plots that depict the distribution of the ratios between the relative errors of the estimators with bootstrapping to those without. The red line depicts the median.}
    \label{fig:bootstrapping}
\end{figure}
% \begin{figure}[h!]
% \centering
%   \begin{subfigure}{8cm}
%     \centering\includegraphics[width=1\textwidth]{}
%   \end{subfigure}
%   \begin{subfigure}{8cm}
%   \centering\includegraphics[width=1\textwidth]{}
%   \end{subfigure}
 
%   \begin{subfigure}{8cm}
%     \centering\includegraphics[width=1\textwidth]{}
%   \end{subfigure}
%   \begin{subfigure}{8cm}
%     \centering\includegraphics[width=1\textwidth]{}
%   \end{subfigure}
% \caption{In this experiment,  we set parameters at $n=10^7$, $\alpha = 0.8$, $\lambda_0 = -0.2$, and $\lambda_1 = 0.8$, and replicated the experiments ten times. Initial estimators were calculated using our existing method, followed by the application of bootstrapping techniques. We then plotted the relative errors of both the estimators and Simpson Index, comparing results with and without bootstrapping.}
% \label{fig:bootstrapping}
% \end{figure}
\subsection{Robustness Analysis on Carrying Capacity}

It is important to note that our model implicitly assumes an unbounded capacity for cancer cells, which is reflected through the consistent growth rate $\lambda_1$ of resistant cells. This assumption, while facilitating the analysis, might not fully capture the complexities that arise in situations where the growth of recurrent tumors is restricted by factors such as supply of nutrients or spatial constraints. Therefore, in this section, we assess the robustness of our estimators in a more practical scenario that takes into account a maximum capacity for cancer cells. Specifically, we propose a modified model where the birth rate of resistant cells is dynamically adjusted based on the total cancer cell population. Mathematically, with $C$ representing the maximum capacity for cancer cells, the adjusted birth rate of resistant cells is:
% It is important to note that our model implicitly assumes an unbounded capacity for cancer cells, as evidenced by the constant growth rate $\lambda_1$ of the resistant cells. This assumption, while simplifying the model, may not be entirely realistic in scenarios where the cancer recurrence period is prolonged. In light of this, we aim to evaluate the robustness of our estimators in a more constrained setting that incorporates a capacity limit for cancer cell proliferation. Specifically, we propose a modification where the birth rate of resistant cells is dynamically adjusted based on the total cancer cell population. Mathematically, with $C$ representing the maximum capacity of cancer cells, the adjusted birth rate of resistant cells is defined as:
\begin{align*}
r_1(t) = r_1\left(1-\frac{Z_0^n(t)+Z_1^n(t)}{C}\right).
\end{align*}
We numerically investigate this modified model under the following parameter settings: $n=10^5$, $\alpha = 0.5$, $\lambda_0 = -0.5$, and $\lambda_1 = 0.5$. We vary the capacity parameter $C$ from $2\times 10^5$ to $5 \times 10^5$ and examine how our estimators behave and perform under these modified conditions.

% In this revised model, we investigate scenarios with parameters set as $n=10^5$, $\alpha = 0.5$, $\lambda_0 = -0.5$, and $\lambda_1 = 0.5$. We vary the capacity $C$ from $2\times 10^5$ to $5 \times 10^5$ and subsequently present the findings, thereby offering insights into the behavior of our estimators under these modified conditions.
\begin{figure}[h!]
\centering
   \begin{subfigure}{5cm}
    \centering\includegraphics[width=1\textwidth]{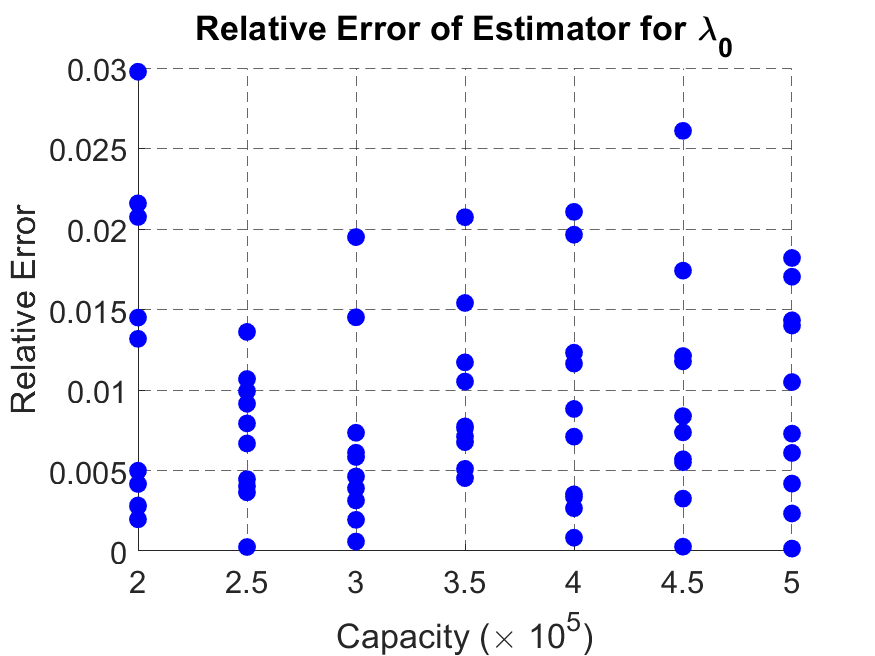}
  \end{subfigure}   
  \begin{subfigure}{5cm}
    \centering\includegraphics[width=1\textwidth]{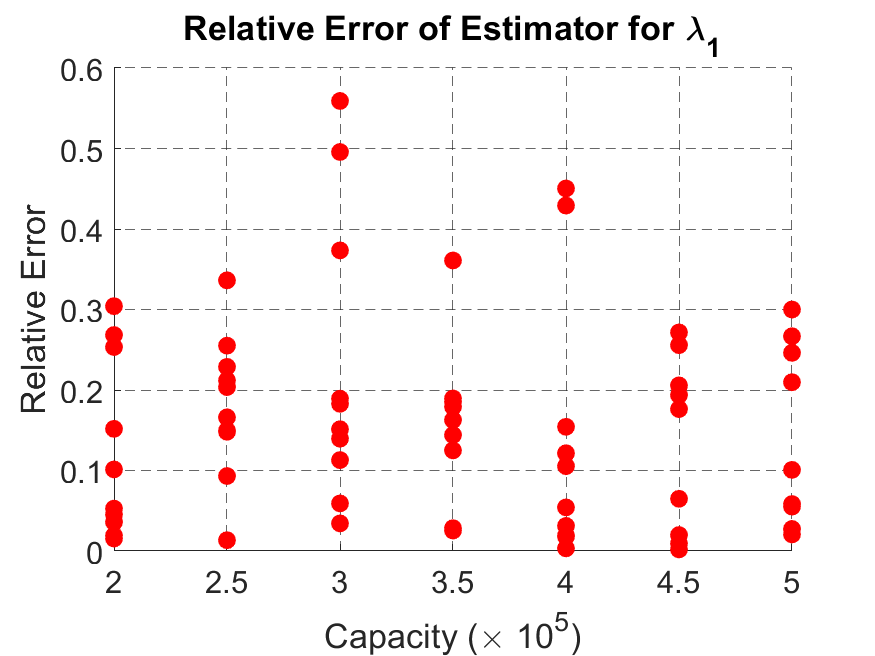}
  \end{subfigure}
  \begin{subfigure}{5cm}
    \centering\includegraphics[width=1\textwidth]{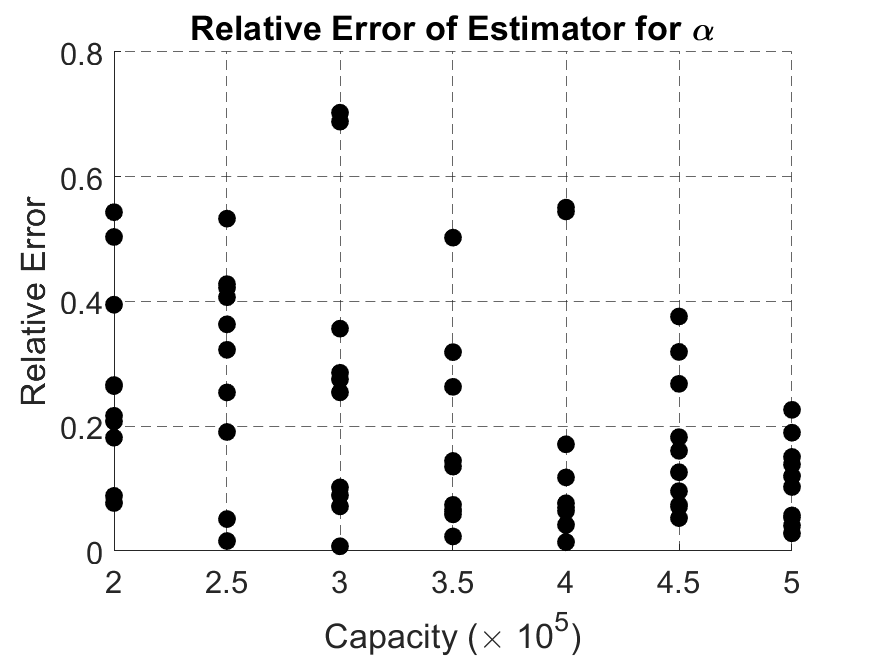}
  \end{subfigure}
\caption{In this experiment, we set $n=10^5$, $\alpha = 0.5$, $\lambda_0 = -0.5$, and $\lambda_1 = 0.5$. We vary the capacity $C$ from $2n$ to $5n$ and record the relative error of the estimators across different capacity settings.}
% In this experiment,  we set parameters at $n=10^5$, $\alpha = 0.5$, $\lambda_0 = -0.5$, and $\lambda_1 = 0.5$. Meanwhile, we let capacity $C$ change from $2n$ to $5n$ and record the relative error of the estimators under each setting.}
\label{fig:roustness_analysis}
\end{figure}

As illustrated in Figure \ref{fig:roustness_analysis}, the carrying capacity appears to have a negligible impact on the performance of our estimators for $\lambda_0$ and $\lambda_1$. However, it has a significant impact on the estimation of $\alpha$. In order to quantify these observations, we conduct a one-sided two-sample t-test to access whether a lower carrying capacity correlates with an increase in estimation error for these parameters. The p-values obtained for $\lambda_0$, $\lambda_1$, and $\alpha$ are $0.2799$, $0.5497$, and $0.0076$, respectively. These results indicate that only the estimator for $\alpha$ is significantly influenced by changes in carrying capacity.

\section{Summary and Discussion}
% To conclude, based on a two-type cancer recurrence model where the cancer cells are divided into sensitive cells and resistant cells according to their fitness, we study the limit of several quantities that can be observed in a laboratory setting, i.e. recurrence time, number of clones and Simpson Index. Using these quantities, we constructed several estimators of the model parameters, i.e. net growth rates of sensitive cells and resistant cells $\lambda_0$ and $\lambda_1$, growth rate of resistant cells $r_1$ and mutation rate $n^{-\alpha}$. These parameters have significance for the development of patient personalized treatment plans, and drug development prediction. In simulation testing we showed that these estimators have excellent performance across a wide range of parameters.
In summary, this study leverages a two-type birth and death model for cancer recurrence, where the cancer cells are divided into sensitive cells and resistant cells according to their fitness under pharmacological treatments. We examined several quantities that can be observed during the cancer recurrence process, namely, recurrence time, number of mutant clones, and the Simpson's Index of mutant clones. Utilizing these quantities, we developed a set of estimators for the model parameters: the growth rate of sensitive cells ($\lambda_0$), the growth rate of resistant cells ($\lambda_1$), the birth rate of resistant cells ($r_1$), and the mutation rate ($n^{-\alpha}$). These parameters are crucial for the development of personalized treatment plans for patients and in quantifying the strength of the selective pressure induced by the introduction of the anti-cancer drug or antibiotic. Through simulation, we demonstrated that these estimators exhibit robust performance across a wide range of parameter settings, underscoring their potential utility in clinical and pharmaceutical research.

In contrast to our prior work \cite{CD2021}, this paper examines a more general model wherein resistant cancer cells are modeled using a birth-death process, replacing the Yule process employed previously. A key contribution of this paper is the establishment of several convergence in probability results with explicitly specified convergence rates. These results enable the use of data from a single patient to construct estimators that are provably consistent, a significant improvement over our earlier approach that relied on convergence in expectation results and necessitated multiple data points from different patients for estimation. In practice, it's important to note that even for the same type of tumors, individual variability factors such as immune system response, genetic differences, and treatment history can affect cell fitness and mutation rates. While our previous estimators offered a general overview of these parameters, the estimators developed in this study provide individualized information, thereby facilitating more targeted medical advice. Furthermore, this paper developed an estimator for the birth rate of resistant cells ($r_1$), a variable that was not investigated in \cite{CD2021}.
% Compared with our previous work \cite{CD2021}, in this paper we consider a more reliable model where the resistant cells follow a birth-death process rather than a Yule process. In addition, we establish novel convergence in probability results with specified convergence rates. As a result of our convergence in probability results  we can use a single sample of data to construct provably consistent estimators. This in contrast to \cite{CD2021} where we established convergence in expectation results, in which case consistent estimation would require multiple samples. However, even for the same type of cancer cells, they will have different fitness and mutation rate in different individuals depending on numerous factors such as immune system differences, genetic variation, treatment history and so on. Therefore, the estimation results obtained via such a method can only reflect the general level of these parameters while our estimators based on convergence in probability results can provide exact information and medical advice for a targeted individual. What's more, our results provide the estimation for birth rate of resistant cells $r_1$ and mutation rate from sensitive cells to resistant cells  $\alpha$ which were not included in our previous work.

This study has several limitations that could serve as potential directions for future research. Based on our simulation results, the estimators of $\lambda_0, \lambda_1$, and $\alpha$ demonstrate very good performance, whereas the estimator of $r_1$ does not exhibit equivalent performance. This observation aligns with that made by Gunnarsson and colleagues \cite{gunnarsson2023statistical}, who reported lesser accuracy in estimating the birth rate compared to the growth rate. They suggested that this discrepancy may be attributed to the fact that the birth rate can only be estimated using higher order moments. While we have established the convergence of $\hat{r}_1^{(n)}$ to $r_1$, the rate of this convergence remains undetermined. Our results suggest that, depending on the specific parameter settings, the initial burden $n$ might need to increase to as large as $10^{11}$ for error reduction to commence. Yet, our simulations indicate a trend of decline in estimation errors when the initial burden $n$ is above $10^5$. Our future work aims to establish a stronger theoretical result regarding the convergence of $r_1$.
% There are several limitations of our paper which also might be the potential directions for future work. First, from our simulation results, we can find the estimators of $\lambda_0, \lambda_1$ and $\alpha$ perform well, while the estimator of $r_1$ does not perform as well. This situation was also observed in  Gunnarsson et al \cite{gunnarsson2023statistical}, which found that the birth rate is estimated less accurately than the net birth rate, and hypothesizes that this is due to the fact that the birth rate can only be found using higher order moments. Although we establish that $\hat{r}_1^{(n)}$ converges to $r_1$, we cannot establish its rate of convergence. Depending on the setting of the model parameters, the initial burden $n$ might need to grow to as large as $10^{11}$ to for the error to start decreasing.  However, our simulation results show a convergence trend when $n$ grows to $10^5$. In future work we aim establish a better theoretical result for the convergence of $r_1$.

Another limitation of our study is the assumption that all resistant cells have the same fitness level. However, this assumption is potentially oversimplified, as highlighted by \cite{elowitz2002stochastic,feinerman2008variability}, which document random gene expressions in genetically identical cancer cells. In order to consider this aspect, both \cite{JKJ2014} and \cite{Durrett2011} employ a similar multi-type branching process model, introducing heterogeneity among resistant cells by assuming random fitness effects for newly mutated cells. Expanding on this idea, an extension of our work could involve the development of estimators for models that incorporate these random fitness effects. A particularly interesting extension would be to design a non-parametric estimation framework to estimate the distribution of these random fitness changes.  
% Second, we assume all the resistant cells have the same fitness, while some investigations have noticed the random gene expressions \cite{elowitz2002stochastic,feinerman2008variability} for genetically identical cancer cells. Both \cite{JKJ2014} and \cite{Durrett2011} consider a similar multi-type branching process model but confer heterogeneity for resistant cells by assuming random fitness advances for newly created mutant cells. A possible extension of our work is to establish estimators for this random fitness advances model. A particularly interesting extension that would require new methodology would be to implement a non-parametric estimation framework for the distribution of the fitness advances.

A third limitation of our current model is the assumption that drug resistance in cancer cells is driven by a single mutational event. While this assumption holds true in certain scenarios, such as the T790M mutation in non-small cell lung cancer conferring resistance to erlotinib \cite{suda2009egfr}, there are cases where multiple mutations are required for cancer cells to develop drug resistance. For example, gene amplification, which allows cancer cells to evade targeted therapies \cite{ilic2011pi3k}, necessitate a more complex model. To accommodate these scenarios, our model could be expanded to account for multiple types of cancer cells, classified based on the number of relevant mutations they possess. 
% Third, in our model, we assume that resistance is driven by a single mutation. While there are several cancers and therapies for which a single mutation can confer resistance, e.g., T790M mutation in non-small cell lung cancer can confer resistance to erlotinib \cite{suda2009egfr}. However, there are examples where multiple mutational events are necessary for cells to acquire drug resistance, e.g., gene amplification to evade targeted therapies \cite{ilic2011pi3k}. To address this scenario it would be necessary to extend our model to allow for multiple types of cancer cells depending on how many copies of the relevant gene they contain.

Lastly, our model does not account for the possibility of pre-existing drug resistance in cancer cells. An interesting extension to the current work would involve leveraging sequencing data obtained at the point of diagnosis and at recurrence. This approach would enable the development of refined models capable of estimating parameters that govern the dynamics of pre-existing resistant cells.
% Finally our model does not consider the important possibility of pre-existing drug resistance. An exciting extension to the current work will be to use sequencing data at diagnosis and recurrence to build models that allow for estimation of the parameters governing the pre-existing resistant cells dynamics. 

\newpage

\section{Appendix}
\subsection{{Preliminary Asymptotic Results}}
For the sake of simplicity and conciseness in the subsequent proofs, this section lists the magnitudes of several important quantities we will frequently use. For the main model, the deterministic limit of the cancer recurrence time, $\zeta_n$, and the variance of the size of resistant cells at the deterministic limit have the following orders:
% For the simplicity and conciseness of the following proof, we list here the magnitude of several quantities we will frequently use.
\begin{align}
& \zeta_n \sim \frac{\alpha}{\lambda_1}\log n, \label{Order: zeta_n}\\
& \var\left[Z_1^n(\zeta_n)\right] \sim n^{1+\alpha}. \label{Order: second moment of Z1}
\end{align}
% \begin{align}
% \label{Order: second moment of Z1}
%    \var\left[Z_1^n(\zeta_n)\right] \sim n^{1+\alpha}
% \end{align}

Denote by $Z(t)$ the population size of a birth-death process starting from a single cell with birth rate $r$, death rate $d$, and growth rate $\lambda=r-d$. We have 
\begin{align}
\label{Order: second moment Z}
    \EE\left[ Z(t)^2\right] = \frac{2r}{\lambda}e^{2\lambda t} - \frac{r+d}{\lambda}e^{\lambda t}.
\end{align}
Denote by $Z^n(t)$ the population size of a birth-death process starting from $n$ cells with the same birth/death rate specified previously. We have
\begin{align}
\label{Order: second moment Z^n}
    \EE\left[ Z^n(t)^2\right] = n\frac{r+d}{\lambda}e^{\lambda t}(e^{\lambda t}-1) +n^2e^{2\lambda t}.
\end{align}
Moreover, from \cite{JKJ2014}, we have 
\begin{align}
\label{Order: kth moment Z^n}
    \EE\left[ Z^n(t)^k\right] =\Theta\left( n^ke^{k\lambda t}\right).
\end{align}

\subsection{Proof of Theorem \ref{cip_gamma_faster}}
\label{sec:proof of cip_gamma_faster}
\begin{proof}
    We first show the following lemma, which is an extension of Lemma 2 in \cite{JK2013}.
    % The proof of this theorem requires the following result, which adapts from a result of \cite{JK2013}.
    \begin{lemma}
    \label{lemma: Z1_converge_to_mean}
    For any $b>0$ and $u<(1-\alpha)/2$, we have 
    \begin{align*}
        \lim_{n\to\infty}\PP\left( \sup_{0\leq k \leq b}n^{\frac{\lambda_1k}{\lambda_0}-u}\left(Z_1^n(kt_n)-\Phi_1^n(kt_n)\right)>\epsilon \right) = 0,
    \end{align*}
    where $t_n = -\frac{1}{\lambda_0}\log{n}$, and $\Phi_1^n(\cdot)$ is defined in \eqref{eq:expec_Z1}.
    \end{lemma}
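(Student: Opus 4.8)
The plan is to control the rescaled deviation process
\begin{align*}
D_n(k) := n^{\frac{\lambda_1 k}{\lambda_0}}\bigl(Z_1^n(kt_n)-\Phi_1^n(kt_n)\bigr)
\end{align*}
by isolating a martingale inside it and then applying a maximal inequality. The first observation is that, since $t_n=-\lambda_0^{-1}\log n$, the weight $n^{\lambda_1 k/\lambda_0}$ is precisely $e^{-\lambda_1 k t_n}$, the reciprocal of the deterministic growth factor $e^{\lambda_1 t}$ of the resistant population. Writing $t=kt_n$, it therefore suffices to understand the process $e^{-\lambda_1 t}\bigl(Z_1^n(t)-\Phi_1^n(t)\bigr)$ on the window $t\in[0,bt_n]$, since $\sup_{0\le k\le b}$ of $D_n(k)$ is exactly the supremum of this process over that window, and then to insert the extra factor $n^{-u}$.

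The central step is the decomposition
\begin{align*}
e^{-\lambda_1 t}\bigl(Z_1^n(t)-\Phi_1^n(t)\bigr) = M_n(t) + n^{-\alpha}\int_0^t e^{-\lambda_1 s}\bigl(Z_0^n(s)-\Phi_0^n(s)\bigr)\,ds,
\end{align*}
where $M_n(t)=e^{-\lambda_1 t}Z_1^n(t)-n^{-\alpha}\int_0^t e^{-\lambda_1 s}Z_0^n(s)\,ds$ is a mean-zero martingale. This follows from Dynkin's formula once one checks that the generator sends the map $Z_1^n\mapsto \lambda_1 Z_1^n+n^{-\alpha}Z_0^n$, whose integrated version reproduces $\Phi_1^n$ from \eqref{eq:expec_Z1}. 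The identity cleanly separates the two sources of randomness: the demographic fluctuations of the resistant clones, recorded by $M_n$, and the fluctuations of the sensitive population $Z_0^n$ around its mean, transmitted through the mutation intensity $Z_0^n n^{-\alpha}$.

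I would then bound the two pieces in $L^2$, uniformly over the window. For the martingale I would compute the predictable quadratic variation
\begin{align*}
\langle M_n\rangle_t = \int_0^t e^{-2\lambda_1 s}\bigl((r_1+d_1)Z_1^n(s)+n^{-\alpha}Z_0^n(s)\bigr)\,ds,
\end{align*}
take expectations using \eqref{eq:expec_Z1}, and note that the resulting integral converges as $t\to\infty$, so that $\EE[M_n(bt_n)^2]=\Theta(n^{1-\alpha})$; Doob's $L^2$ maximal inequality upgrades this to a bound of the same order for $\EE[\sup_{t\le bt_n}M_n(t)^2]$. This is the rate-determining term, and it is consistent with $\Var[Z_1^n(\zeta_n)]\sim n^{1+\alpha}$ in \eqref{Order: second moment of Z1} after multiplying back by $e^{2\lambda_1\zeta_n}=n^{2\alpha}$. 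For the second, $Z_0^n$-driven term I would use that $Z_0^n$ is subcritical with $\Var[Z_0^n(s)]=O(n e^{\lambda_0 s})$ (cf. \eqref{Order: second moment Z^n}) and Minkowski's integral inequality to bound the supremum of the integral by $\int_0^{bt_n} n^{-\alpha}e^{-\lambda_1 s}\|Z_0^n(s)-\Phi_0^n(s)\|_{L^2}\,ds=O(n^{1/2-\alpha})$, so that this term contributes only at the strictly smaller order $n^{1-2\alpha}$ and is negligible. Combining the two estimates yields a uniform-in-$k$ $L^2$ bound on $\sup_k|D_n(k)|$; a Chebyshev inequality applied to $n^{-u}\sup_k|D_n(k)|$ then delivers the stated convergence, with the admissible exponents governed by the threshold $(1-\alpha)/2$ produced by the $n^{1-\alpha}$ second moment.

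The main obstacle is the interaction of the supremum over $k$ with the coupling between the two cell types: because the mutation intensity depends on $Z_0^n(s)$, the total resistant population $Z_1^n$ is not itself a branching process, so a pointwise second-moment bound does not automatically lift to a uniform one. The martingale representation above is exactly what makes Doob's inequality applicable and is therefore the crux of the argument; the remaining care is in verifying that the $Z_0^n$-driven integral and the small-$k$ regime (where $Z_1^n(kt_n)$ is still small and the normalization is close to one) do not dominate, and in tracking constants so that the second moment is genuinely $k$-uniform of order $n^{1-\alpha}$ across the entire window $[0,bt_n]$.
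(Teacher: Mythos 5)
Your route coincides with the paper's: the paper's entire proof is a pointer to Lemma 2 of \cite{JK2013} ``with a slight change in the scaling,'' and that lemma is established by the same scheme you propose --- the mean-zero martingale $M_n(t)=e^{-\lambda_1 t}Z_1^n(t)-n^{-\alpha}\int_0^t e^{-\lambda_1 s}Z_0^n(s)\,ds$, Doob's $L^2$ maximal inequality fed by $\langle M_n\rangle_t$, and a separate second-moment estimate on the $Z_0^n$-driven integral. Your individual estimates are correct: the decomposition identity holds (one checks $e^{-\lambda_1 t}\Phi_1^n(t)=n^{-\alpha}\int_0^t e^{-\lambda_1 s}\Phi_0^n(s)\,ds$ directly from \eqref{eq:expec_Z1}), $\EE\bigl[\sup_{t\le bt_n}M_n(t)^2\bigr]=O(n^{1-\alpha})$, and the sensitive-cell term contributes only $O(n^{1-2\alpha})$ in second moment, hence is negligible.

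The gap is in your final sentence, and it is not cosmetic. With the weight as displayed in the lemma, $n^{\lambda_1 k/\lambda_0-u}=n^{-u}e^{-\lambda_1 kt_n}$, Chebyshev combined with your uniform bound gives $\PP(\cdot>\epsilon)\le C\epsilon^{-2}n^{1-\alpha-2u}$, which tends to $0$ if and only if $u>(1-\alpha)/2$ --- the reverse of the stated hypothesis $u<(1-\alpha)/2$. No sharper constants can repair this: $\sup_{0\le k\le b}e^{-\lambda_1 kt_n}\bigl(Z_1^n(kt_n)-\Phi_1^n(kt_n)\bigr)$ is genuinely of order $n^{(1-\alpha)/2}$, as your own consistency check against $\Var[Z_1^n(\zeta_n)]\sim n^{1+\alpha}$ confirms, so multiplying by $n^{-u}$ with $0<u<(1-\alpha)/2$ cannot produce convergence to zero. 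What has happened is that the printed statement drops the relative normalization $n^{\alpha-1}$ carried by Lemma 2 of \cite{JK2013}, where the scaled object is $n^{\lambda_1 k/\lambda_0+\alpha-1}Z_1^n(kt_n)$; the version actually needed in the proof of Theorem \ref{cip_gamma_faster} (where deviations of size $n^{1-\alpha-u}$ in the $e^{-\lambda_1 t}$-scaled process must be ruled out) is
\begin{align*}
\lim_{n\to\infty}\PP\left(\sup_{0\le k\le b}n^{\frac{\lambda_1 k}{\lambda_0}+\alpha-1+u}\left(Z_1^n(kt_n)-\Phi_1^n(kt_n)\right)>\epsilon\right)=0,
\end{align*}
and for that your estimate yields exactly $n^{2u-(1-\alpha)}\to 0$, i.e., the threshold $u<(1-\alpha)/2$ on the correct side. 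So your $L^2$ work does prove the lemma in its intended form, but your closing claim that Chebyshev applied to $n^{-u}\sup_k|D_n(k)|$ ``delivers the stated convergence'' asserts a conclusion your displayed bounds contradict; you needed either to restore the $n^{\alpha-1}$ factor or to flag the misprint, and in either case to carry out the exponent arithmetic explicitly rather than asserting that the $n^{1-\alpha}$ second moment ``produces'' the $(1-\alpha)/2$ threshold.
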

    \begin{proof}
        The proof of Lemma \ref{lemma: Z1_converge_to_mean} is similar to that of Lemma 2 in \cite{JK2013} with a slight change in the scaling, and thus we omit the details here. \qed
        % The proof of Lemma \ref{lemma: Z1_converge_to_mean} can be derived from the proof of Lemma 2 from \cite{JK2013} with a little bit changes in the scaling. We will omit the detailed proof here. \qed
    \end{proof}

    To prove Theorem \ref{cip_gamma_faster}, we can apply the same arguments in the proof of Theorem 1 in \cite{hanagal2022large} with Lemma \ref{lemma: Z1_converge_to_mean} and a slight change in the scaling. Therefore, we omit the details here for brevity.
    
    % For the proof of Theorem \ref{cip_gamma_faster}, with lemma \ref{lemma: Z1_converge_to_mean}, and some changes of scaling in the proof for Theorem 1 of \cite{hanagal2022large}, the results can be proved. We also omit the details here.
\end{proof}

\subsection{Proof of Theorem \ref{cip_In}}
\label{sec:proof of cip_In}

We first establish a convergence result for the number of mutant clones at the deterministic limit of cancer recurrence time (i.e., $\zeta_n$) in Lemma \ref{In_zeta_converge_to_limit}.
\begin{lemma}
\label{In_zeta_converge_to_limit}
    For any $\epsilon > 0, u<\min\{(1-\alpha)/2,  -\lambda_0\alpha/2\lambda_1 \}$,  we have
\begin{align}
    \label{eq:cip_In}
    \lim_{n\to\infty}\PP\left(n^u\left| \frac{1}{n^{1-\alpha}}I_n\left(\zeta_n\right)+\frac{\lambda_1}{\lambda_0 r_1}\right| > \epsilon \right) = 0.
\end{align}
\end{lemma}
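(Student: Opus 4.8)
The plan is to establish concentration of $I_n(\zeta_n)$ around its mean by combining the exact mean formula \eqref{eq: exact_In} with a variance bound, then applying Chebyshev's inequality at the sharpened rate $n^u$. Let me outline the structure.

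\textbf{Step 1: Reduce to mean and variance control.} Recall from \eqref{eq: exact_In} that $\EE[I_n(\zeta_n)] = -\frac{\lambda_1}{\lambda_0 r_1}n^{1-\alpha}(1-e^{\lambda_0\zeta_n})$. Since $\zeta_n \sim \frac{\alpha}{\lambda_1}\log n$ from \eqref{Order: zeta_n}, we have $e^{\lambda_0\zeta_n} = \Theta(n^{\lambda_0\alpha/\lambda_1})$, which vanishes because $\lambda_0 < 0$. Thus $\frac{1}{n^{1-\alpha}}\EE[I_n(\zeta_n)] \to -\frac{\lambda_1}{\lambda_0 r_1}$, and more precisely the deterministic bias is $\frac{1}{n^{1-\alpha}}\EE[I_n(\zeta_n)] + \frac{\lambda_1}{\lambda_0 r_1} = \frac{\lambda_1}{\lambda_0 r_1}e^{\lambda_0\zeta_n} = \Theta(n^{\lambda_0\alpha/\lambda_1})$. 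Multiplying by $n^u$, this bias term tends to $0$ precisely when $u < -\lambda_0\alpha/\lambda_1$, which is implied by the hypothesis $u < -\lambda_0\alpha/(2\lambda_1)$. So the deterministic part causes no trouble; the whole game is the fluctuation $I_n(\zeta_n) - \EE[I_n(\zeta_n)]$.

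\textbf{Step 2: Compute the variance.} The clones with infinite lineage arise via a thinning of a conditional non-homogeneous Poisson process: conditional on the sensitive trajectory $(Z_0^n(s))_{s\le \zeta_n}$, new clones are seeded at rate $Z_0^n(s)n^{-\alpha}$, and each survives (has infinite lineage) independently with probability $\PP(\tau_0 = \infty) = \lambda_1/r_1$. The plan is to use the conditional variance decomposition
\begin{align*}
\Var\!\left[I_n(\zeta_n)\right] = \EE\!\left[\Var\!\left(I_n(\zeta_n)\mid Z_0^n\right)\right] + \Var\!\left(\EE\!\left[I_n(\zeta_n)\mid Z_0^n\right]\right).
\end{align*}
The first term is the expected Poisson rate, $\frac{\lambda_1}{r_1}n^{-\alpha}\EE\big[\int_0^{\zeta_n}Z_0^n(s)\,ds\big] = \Theta(n^{1-\alpha})$ since $\int_0^{\zeta_n}\EE[Z_0^n(s)]ds = \Theta(n)$. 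The second term involves $\Var\big(\int_0^{\zeta_n}Z_0^n(s)ds\big)$, which using \eqref{Order: second moment Z^n} for the second moment of $Z_0^n$ is of order at most $\Theta(n^{1-\alpha})$ after the $n^{-2\alpha}$ prefactor is accounted for (the subcritical $Z_0^n$ started at $n$ has variance integral growing like $n$, giving $n^{-2\alpha}\cdot n = n^{1-2\alpha}$, subdominant). Hence $\Var[I_n(\zeta_n)] = \Theta(n^{1-\alpha})$.

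\textbf{Step 3: Chebyshev at rate $n^u$.} Writing the target event in terms of the centered variable, for the fluctuation part,
\begin{align*}
\PP\!\left(n^u\left|\tfrac{1}{n^{1-\alpha}}\big(I_n(\zeta_n)-\EE[I_n(\zeta_n)]\big)\right| > \tfrac{\epsilon}{2}\right) \le \frac{4n^{2u}}{\epsilon^2 n^{2(1-\alpha)}}\Var[I_n(\zeta_n)] = \Theta\!\left(n^{2u-(1-\alpha)}\right),
\end{align*}
which tends to $0$ exactly when $u < (1-\alpha)/2$. Combining with the deterministic-bias bound from Step 1 via the triangle inequality (splitting $\epsilon$ into $\epsilon/2$ for each part) yields \eqref{eq:cip_In}, with the stated constraint $u < \min\{(1-\alpha)/2,\, -\lambda_0\alpha/(2\lambda_1)\}$ emerging as the intersection of the two rate conditions.

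\textbf{Main obstacle.} The hard part will be rigorously justifying the variance computation in Step 2, specifically handling the conditional Poisson structure and bounding the $\Var(\int_0^{\zeta_n}Z_0^n(s)ds)$ term. One must verify that the randomness of the sensitive-cell integral does not inflate the variance beyond $\Theta(n^{1-\alpha})$; this requires the second-moment estimate \eqref{Order: second moment Z^n} together with a careful double-integral bound on $\Cov(Z_0^n(s), Z_0^n(s'))$ over $[0,\zeta_n]^2$. I expect the covariance of a subcritical birth-death process to decay, so that the dominant contribution to the variance is the Poisson (first) term, but making this precise—and confirming the factor $-\lambda_0\alpha/(2\lambda_1)$ rather than a weaker rate does not secretly arise from the variance—is the delicate bookkeeping step. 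Everything else is a routine application of Chebyshev's inequality once the $\Theta(n^{1-\alpha})$ variance order is secured.
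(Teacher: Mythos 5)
Your proposal is correct and follows essentially the same route as the paper: the paper likewise exploits that, conditioned on $\left(Z_0^n(t)\right)_{t\le \zeta_n}$, $I_n$ is a Poisson process with thinned rate $\frac{\lambda_1}{r_1}n^{-\alpha}Z_0^n(t)$, computes the second moment of $\int_0^{\zeta_n}Z_0^n(s)\,ds$ by a double integral using \eqref{Order: second moment Z^n} (which confirms your claim that the sensitive-cell fluctuation contributes only $\Theta(n^{1-2\alpha})$ after the $n^{-2\alpha}$ prefactor, so $\Var\left[I_n(\zeta_n)\right]=\Theta(n^{1-\alpha})$), and then concludes from the exact mean \eqref{eq: exact_In} via a Markov bound on $\EE\left[\left(n^{\alpha-1}I_n(\zeta_n)+\lambda_1/(\lambda_0 r_1)\right)^2\right]$. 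Your bias--variance split with Chebyshev is just a reorganization of that direct mean-square bound (and, as you note, your bias term only needs $u<-\lambda_0\alpha/\lambda_1$, so the stated range $u<\min\{(1-\alpha)/2,\,-\lambda_0\alpha/2\lambda_1\}$ is comfortably covered), hence there is no gap.
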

\begin{proof}
Conditioned on $\left\{ Z^n_0(t), t\leq \zeta_n \right\}$, $I_n(t)$ is a non-homogeneous Poisson process with arrival rate $\frac{\lambda_1}{r_1} n^{-\alpha}Z_0^n(t)$ for $t\leq \zeta_n$. Therefore, we have
\begin{align}
\EE\left[I_n\left(\zeta_n\right)^2\right] &= \EE\left[\EE\left[I_n\left(\zeta_n\right)^2\big|Z_0^n(t), t\leq \zeta_n\right]\right] \nonumber\\
& = \EE\left[ \left(\int_0^{\zeta_n}\frac{\lambda_1}{r_1} n^{-\alpha}Z_0^n(t)dt\right)^2+ \int_0^{\zeta_n}\frac{\lambda_1}{r_1} n^{-\alpha}Z_0^n(t)dt\right] \nonumber\\
& = \frac{\lambda_1^2}{r_1^2}n^{-2\alpha}\EE\left[ \left( \int_0^{\zeta_n}Z_0^n\left(t\right)dt \right)^2 \right] +\frac{\lambda_1} {r_1}n^{1-\alpha}\int_0^{\zeta_n}e^{\lambda_0 s}ds. \label{eqn:expectation_In_square_terms}
\end{align}
For the expected value in the first term of \eqref{eqn:expectation_In_square_terms}, we have
% Furthermore, we can calculate $ \EE\left[ \left( \int_0^{\zeta_n}Z_0^n\left(t\right)dt \right)^2 \right]$ by 
\begin{align}
& \quad \EE\left[ \left( \int_0^{\zeta_n}Z_0^n\left(t\right)dt \right)^2 \right] \noindent\\
& \overset{\text{(a)}}{=} \int_0^{\zeta_n} \int_0^{\zeta_n} \EE\left[ Z_0^n\left(t\right) Z_0^n\left(s\right) \right]dtds \noindent\\
& = \int_0^{\zeta_n} \int_s^{\zeta_n} \EE\left[ Z_0^n\left(t\right) Z_0^n\left(s\right) \right]dtds + \int_0^{\zeta_n} \int_0^s \EE\left[ Z_0^n\left(t\right) Z_0^n\left(s\right) \right]dtds, \label{eqn:double_integral_Z_0}
\end{align}
where we use Fubini's Theorem in equality (a). For the first term in \eqref{eqn:double_integral_Z_0}, we can apply \eqref{Order: second moment Z^n} and obtain that
\begin{align*}
& \quad \int_0^{\zeta_n} \int_s^{\zeta_n} \EE\left[ Z_0^n\left(t\right) Z_0^n\left(s\right) \right]dtds\\
& = \int_0^{\zeta_n} \int_s^{\zeta_n} \EE\left[ \EE\left[ Z_0^n\left(t\right) Z_0^n\left(s\right)\vert Z_0^n(s) \right] \right]dtds\\
& = \int_0^{\zeta_n} \int_s^{\zeta_n} \EE\left[ Z_0^n\left(s\right)^2 e^{\lambda_0 \left(t-s\right)} \right]dtds\\
& = \int_0^{\zeta_n} \int_s^{\zeta_n} e^{\lambda_0\left(t-s\right)} \left(-n\frac{r_0+d_0}{\lambda_0}e^{\lambda_0 s}\left(1-e^{\lambda_0 s}\right)+n^2e^{2\lambda_0 s}\right) dtds\\
& = \int_0^{\zeta_n}  -\frac{1}{\lambda_0}\left(1-e^{\lambda_0\left(s-\zeta_n\right)}\right) \left(-n\frac{r_0+d_0}{\lambda_0}e^{\lambda_0 s}\left(1-e^{\lambda_0 s}\right)+n^2e^{2\lambda_0 s}\right) ds.
\end{align*}
Similarly, for the second term in \eqref{eqn:double_integral_Z_0}, we can obtain that
\begin{align*}
& \quad \int_0^{\zeta_n} \int_0^s \EE\left[ Z_0^n\left(t\right) Z_0^n\left(s\right) \right]dtds\\
& = \int_0^{\zeta_n} \int_0^s \EE\left[ Z_0^n\left(t\right)^2 e^{\lambda_0\left(s-t\right)} \right]dtds\\
& = \int_0^{\zeta_n} \int_0^s e^{\lambda_0\left(s-t\right)}  \left(-n\frac{r_0+d_0}{\lambda_0}e^{\lambda_0t}\left(1-e^{\lambda_0 t}\right)+n^2e^{2\lambda_0 t}\right)  dtds\\
& = \int_0^{\zeta_n} \left( -n\frac{r_0+d_0}{\lambda_0}e^{\lambda_0s}\left(s-\frac{1}{\lambda_0}e^{\lambda_0 s}+\frac{1}{\lambda_0}\right)+n^2e^{\lambda_0s}\left(\frac{1}{\lambda_0}e^{\lambda_0s}-\frac{1}{\lambda_0}\right) \right) ds.
\end{align*}
Combining these two terms, we have
\begin{align*}
& \quad \EE\left[ \left( \int_0^{\zeta_n}Z_0^n\left(t\right)dt \right)^2 \right] - \frac{n^2}{\lambda_0^2}\\
& = \int_0^{\zeta_n} n^2e^{\lambda_0 s}\left(-\frac{1}{\lambda_0}+\frac{1}{\lambda_0}e^{\lambda_0\zeta_n}\right)ds- \int_0^{\zeta_n} n\frac{r_0+d_0}{\lambda_0}e^{\lambda_0 s} \left(s-\frac{1}{\lambda_0}e^{\lambda_0\zeta_n}+\frac{1}{\lambda_0}e^{-\lambda_0\left(s-\zeta_n\right)}\right)ds -\frac{n^2}{\lambda_0^2} \\
& \sim   C_1 n^2e^{\lambda_0\zeta_n} +C_2 n,
\end{align*}
where $C_1$ and $C_2$ are some constants, and we use \eqref{Order: zeta_n} in the last step. Hence, we can get 
$$n^{2\alpha-2}\EE\left[I_n\left(\zeta_n\right)^2\right] - \frac{\lambda_1^2}{\lambda_0^2r_1^2} \sim -\frac{\lambda_1}{\lambda_0r_1}n^{\alpha-1}+ C_1 n^{\alpha\lambda_0/\lambda_1} + C_2 n^{-1}.
$$
We then use \eqref{eq: exact_In}, the Markov's Inequality, and the condition $u<\min\{(1-\alpha)/2,  -\alpha\lambda_0/2\lambda_1 \}$ to obtain that
% Combined with the result of $\EE\left[I_n\left(\zeta_n\right)\right]$ in (\ref{eq: exact_In}) and Markov's Inequality as well as $u<\min\{(1-\alpha)/2,  -\alpha\lambda_0/2\lambda_1 \}$, we have 
\begin{align*}
\PP\left( n^u\left|n^{\alpha-1}I_n(\zeta_n)+\frac{\lambda_1}{\lambda_0r_1}\right|>\epsilon \right)
\leq n^{2u} \EE\left[ \left(n^{\alpha-1}I_n(\zeta_n)+\frac{\lambda_1}{\lambda_0r_1}  \right)^2 \right]/\epsilon^2\rightarrow 0.
\end{align*} \qed

Next, we show that the number of mutant clones at the cancer recurrence time (i.e., $I_n\left(\gamma_n\right)$) is close to that at the deterministic limit time (i.e., $I_n\left(\zeta_n\right)$).
\begin{lemma}
    \label{In_gamma_converge_to_In_zeta}
    For any $\epsilon > 0, u < (1-\alpha)/2-\alpha\lambda_0/\lambda_1$, we have 
    \begin{align*}
        \lim_{n\to \infty}\PP\left(n^u\left|\frac{1}{n^{1-\alpha}}\left(I_n\left(\zeta_n\right)-I_n\left(\gamma_n\right)\right)\right|>\epsilon\right) = 0.
    \end{align*}
\end{lemma}
\begin{proof}
For any $\delta > 0$, let $\delta_n = \delta n^{-k}$ where $k<(1-\alpha)/2$. We have
\begin{align*}
    \PP\left(n^u \left|\frac{1}{n^{1-\alpha}}\left(I_n\left(\zeta_n\right)-I_n\left(\gamma_n\right)\right)\right|>\epsilon \right)
\leq & \PP\left(n^u \left|\frac{1}{n^{1-\alpha}}\left(I_n\left(\zeta_n\right)-I_n\left(\gamma_n\right)\right)\right|>\epsilon, \left| \zeta_n-\gamma_n\right|< \delta_n \right)\\
&+ \PP\left( \left| \zeta_n-\gamma_n\right|> \delta_n  \right).
\end{align*}
It suffices to consider the first term as the second term goes to zero by Theorem \ref{cip_gamma_faster} and the assumption that $k<(1-\alpha)/2$. Because $I_n(t)$ is a counting process which is non-decreasing in $t$, when $\left| \zeta_n-\gamma_n\right|< \delta_n$, we have
$$
\left|\frac{1}{n^{1-\alpha}}\left(I_n\left(\zeta_n\right)-I_n\left(\gamma_n\right)\right)\right|\leq \frac{1}{n^{1-\alpha}}\left(I_n\left(\zeta_n+\delta_n\right)-I_n\left(\zeta_n-\delta_n\right)\right).
$$
Therefore, 
\begin{align*}
     & \PP\left( n^u\left|\frac{1}{n^{1-\alpha}}\left(I_n\left(\zeta_n\right)-I_n\left(\gamma_n\right)\right)\right|>\epsilon, \left| \zeta_n-\gamma_n\right|< \delta_n \right)\\
      \leq &  \PP\left( \frac{1}{n^{1-\alpha-u}}\left(I_n\left(\zeta_n+\delta_n\right)-I_n\left(\zeta_n-\delta_n\right)\right)>\epsilon\right)\\
    \leq & n^{\alpha+u-1}\EE\left[  I_n\left(\zeta_n+\delta_n\right)-I_n\left(\zeta_n-\delta_n\right)\right]/\epsilon\\
    = & \frac{\lambda_1}{\epsilon r_1}n^{u-1}\EE\left[ \int_{\zeta_n-\delta_n}^{\zeta_n+\delta_n}Z_0^n\left(t\right)dt\right]\\
    = & \frac{\lambda_1}{\epsilon r_1}n^{u-1}\int_{\zeta_n-\delta_n}^{\zeta_n+\delta_n}\EE\left[ Z_0^n\left(t\right)\right]dt\\
    =& -\frac{\lambda_1}{\epsilon \lambda_0 r_1}n^{u} \left( e^{\lambda_0(\zeta_n-\delta_n)} -e^{\lambda_0(\zeta_n+\delta_n)}\right) = \Theta(n^{\lambda_0\alpha/\lambda_1+u}\delta_n).
\end{align*}
Therefore, when $u < (1-\alpha)/2-\lambda_0\alpha/\lambda_1$, we can find $k<(1-\alpha)/2$ such that
\begin{align*}
    \limsup_{n\to\infty}\PP\left( n^u\left|\frac{1}{n^{1-\alpha}}\left(I_n\left(\zeta_n\right)-I_n\left(\gamma_n\right)\right)\right|>\epsilon, \left| \zeta_n-\gamma_n\right|< \delta_n \right) = 0,
\end{align*}
which implies that
\begin{align*}
    \lim_{n\to\infty}\PP\left( n^u\left|\frac{1}{n^{1-\alpha}}\left(I_n\left(\zeta_n\right)-I_n\left(\gamma_n\right)\right)\right|>\epsilon \right) = 0.
\end{align*} \qed
\end{proof}

Theorem \ref{cip_In} follows directly from Lemma \ref{In_zeta_converge_to_limit} and Lemma \ref{In_gamma_converge_to_In_zeta}.
% \subsubsection*{Proof of Theorem \ref{cip_In}:}
% Combining the results of Lemma \ref{In_zeta_converge_to_limit} and Lemma \ref{In_gamma_converge_to_In_zeta}, we can get the expected theorem easily.
\end{proof}
\subsection{Proof of Theorem \ref{cip_Rn}}
\label{sec: proof of convergence of Rn}
For ease of analysis, we redefine the Simpson's Index $R_n(t)$ by considering the number of all mutant clones generated before time $t$, which we denote by $\tilde{I}_n(t)$. The formula of the Simpson's Index becomes 
\begin{align*}
    R_n(t)=\sum_{i=1}^{\tilde{I}_n\left(t\right)}\left(\frac{X_{i,n}(t)}{Z_1^n\left(t\right)}\right)^2.
\end{align*}
This change of definition does not change the value of $R_n(t)$ as we only add some clones with size $0$ at time $t$. 

In Lemma \ref{lemma:smaller Z1}, Corollary \ref{lemma:larger Z1} and Corollary \ref{cor:Z1_larger_n}, we present a large deviation result on the size of resistant cells at a scale of cancer recurrence time.
% We can find the change of definition will not make any difference to $R_n(t)$ because we only add some clones with size $0$ at time $t$ but it will be easier for us to do theoretical analysis. We begin the proof with the following lemma
\begin{lemma}
\label{lemma:smaller Z1}
For any $ \epsilon>0, \alpha<1$, and any $(l,k)$ such that $k<(1-\lambda_0/\lambda_1)\alpha $, $l>k$, and $l+k< \min\{1-\alpha,\alpha\}$, there exists $c>0$ so that
\begin{align*}
\limsup\limits_{n\rightarrow \infty}\frac{1}{n^{1-\alpha-(l+k)}}\log \PP\left(Z_1^n\left(\zeta_n\right)<\left(1-\epsilon_n\right)n\right)\le -c,
\end{align*}
where $\epsilon_n = \epsilon n^{-k}$.
\end{lemma}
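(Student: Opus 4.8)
The plan is to exploit the branching (independence) structure to reduce the statement to a classical lower-tail estimate for a sum of i.i.d.\ nonnegative variables, and then apply an exponential Markov (Chernoff) inequality with a carefully scaled tilt. First I would decompose the resistant population at $\zeta_n$ by founding cell. Because every cell acts through its own independent exponential clocks and the per-cell mutation rate $n^{-\alpha}$ does not couple distinct cells, the full marked descendant trees of the $n$ initial sensitive cells — each tree carrying its sensitive lineage together with all resistant clones it spawns — are i.i.d. Writing $V_j$ for the resistant population at time $\zeta_n$ descending from founder $j$, we thus have $Z_1^n(\zeta_n)=\sum_{j=1}^n V_j$ with $V_1,\dots,V_n$ i.i.d.\ and nonnegative. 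The normalization $\Phi_1^n(\zeta_n)=n$ gives $\EE[V_1]=1$, and independence together with the variance order \eqref{Order: second moment of Z1} gives $\EE[V_1^2]=1+\var[Z_1^n(\zeta_n)]/n=\Theta(n^{\alpha})$. It then remains to bound $\PP(\sum_{j=1}^n V_j<(1-\epsilon_n)n)$ with $\epsilon_n=\epsilon n^{-k}$.

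Next I would run the lower-tail Chernoff bound. For $\theta>0$ we have $\PP(Z_1^n(\zeta_n)<(1-\epsilon_n)n)\le e^{\theta(1-\epsilon_n)n}\,\EE[e^{-\theta V_1}]^n$. The clean step is the elementary inequality $e^{-x}\le 1-x+\tfrac{x^2}{2}$, valid for all $x\ge0$; since $V_1\ge0$ this yields $\EE[e^{-\theta V_1}]\le 1-\theta+\tfrac{\theta^2}{2}\EE[V_1^2]$, and $1+y\le e^{y}$ then gives
$$\PP\big(Z_1^n(\zeta_n)<(1-\epsilon_n)n\big)\le \exp\Big(-\theta\epsilon_n n+\tfrac{n\theta^2}{2}\EE[V_1^2]\Big).$$
Using the exact quadratic inequality rather than a truncated Taylor expansion removes any need to estimate a remainder, which is what lets the bound go through without fuss.

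I would then take $\theta=\theta_n=a\,n^{-\alpha-l}$ with $a>0$ fixed. With $\EE[V_1^2]\le Cn^{\alpha}$, the first exponent term equals $-a\epsilon\,n^{1-\alpha-(l+k)}$, which carries the advertised rate, while the second is at most $\tfrac{Ca^2}{2}n^{1-\alpha-2l}$. The hypothesis $l+k<1-\alpha$ makes the target exponent $1-\alpha-(l+k)$ positive, and the hypothesis $l>k$ forces $n^{1-\alpha-2l}=o(n^{1-\alpha-(l+k)})$, so the quadratic term is absorbed; hence $\limsup_{n\to\infty}\tfrac{1}{n^{1-\alpha-(l+k)}}\log\PP\big(Z_1^n(\zeta_n)<(1-\epsilon_n)n\big)\le -a\epsilon$, which is the claim with $c=a\epsilon$.

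The step demanding the most care is the uniform second-moment bound $\EE[V_1^2]=\Theta(n^{\alpha})$, since it is exactly this order that makes the quadratic correction the right size relative to the linear gain; it rests on the birth-death moment formula \eqref{Order: second moment Z} and the variance order \eqref{Order: second moment of Z1}. The remaining hypotheses, $k<(1-\lambda_0/\lambda_1)\alpha$ and $l+k<\alpha$, provide the slack that keeps these moment estimates valid in the operative regime and compatible with the companion upper-tail result; once $\EE[V_1^2]=\Theta(n^\alpha)$ is secured they are not otherwise needed for the lower tail. I note that an alternative route, conditioning on the sensitive path $\{Z_0^n(s):s\le\zeta_n\}$ and writing the conditional Laplace transform of $Z_1^n(\zeta_n)$ as a Poisson functional, leads to the same estimate but shifts the difficulty onto controlling the random integral $\int_0^{\zeta_n}Z_0^n(s)e^{\lambda_1(\zeta_n-s)}\,ds$ and its second-moment analogue — an obstacle the founder decomposition sidesteps.
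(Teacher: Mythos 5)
Your proof is correct, and it takes a genuinely different route from the paper's. The paper conditions on the sensitive-cell path, bounds the conditional Laplace transform of $Z_1^n(\zeta_n)$ as an explicit Poisson functional, and pays for de-conditioning with an error term $k_1(\log n)^2 n^{1-2\alpha}$ imported from \cite{hanagal2022large}; it then evaluates the resulting integrals at the tilt $\theta_n=\delta n^{-l}e^{-\lambda_1\zeta_n}$, the same $\Theta(n^{-\alpha-l})$ scale you pick. That explicit computation is what forces the two hypotheses your argument never touches: $k<(1-\lambda_0/\lambda_1)\alpha$ is needed so that $\epsilon_n$ dominates the mismatch term $e^{(\lambda_0-\lambda_1)\zeta_n}=\Theta\bigl(n^{-(1-\lambda_0/\lambda_1)\alpha}\bigr)$ in the leading integral, and $l+k<\alpha$ is needed to absorb the $(\log n)^2 n^{1-2\alpha}$ error. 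Your founder decomposition sidesteps both: since all birth, death, and mutation rates are per cell (rate $n^{-\alpha}$ per sensitive cell), the $n$ marked family trees are indeed i.i.d.; the definition $\Phi_1^n(\zeta_n)=n$ gives the exact centering $\EE[V_1]=1$ with no $e^{(\lambda_0-\lambda_1)\zeta_n}$ residue; and the unconditional inequality $e^{-x}\le 1-x+x^2/2$ for $x\ge 0$, combined with $\EE[V_1^2]=\Theta(n^{\alpha})$ from \eqref{Order: second moment of Z1} (with a constant uniform in $n$, which is the right way to phrase it since $V_1$ is a triangular array through both $\zeta_n$ and $n^{-\alpha}$), yields $\log\PP\left(Z_1^n(\zeta_n)<(1-\epsilon_n)n\right)\le -a\epsilon\, n^{1-\alpha-(l+k)}+O\left(n^{1-\alpha-2l}\right)$, after which $l>k$ finishes. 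So you prove the lemma under strictly weaker hypotheses — and in fact, since $a>0$ is arbitrary, your bound shows the normalized limsup is $-\infty$ at this scale, which is unsurprising because $\theta_n$ is a deliberately suboptimal tilt relative to the Gaussian-optimal $\Theta(n^{-\alpha-k})$. What the paper's heavier machinery buys is symmetry between the tails: your quadratic bound is special to negative tilts ($e^{x}\le 1+x+x^2/2$ fails for $x>0$), so the companion upper-tail estimate, Corollary \ref{lemma:larger Z1}, and the variant Corollary \ref{cor:Z1_larger_n} cannot be dispatched by the same two-line argument without genuine control of $\EE[e^{\theta_n V_1}]$; the paper's Laplace-functional computation handles both directions in one framework and produces explicit constants along the way.
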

Remark: We can always find a pair of $(l,k)$ such that the conditions specified in Lemma \ref{lemma:smaller Z1} are met. For example, we can let $l=\min\{(1-\alpha)/2,\alpha/2\}$ and choose $k<\min\{(1-\alpha)/2,\alpha/2\}$.
% Note: For any $k<\min\{(1-\alpha)/2,\alpha/2\}$, we can always find $l$ so that the lemma holds.

\begin{proof}
Similar to the arguments in the proof of Proposition 1 in \cite{CD2021}, we have 
\begin{align}
&\log\PP\left(Z_1^n\left(\zeta_n\right)<\left(1-\epsilon_n\right)n\right) \nonumber\\
\leq  & \min_{\theta_n>0} \left[ \theta_n\left(1-\epsilon_n\right)n+\frac{1}{n^{\alpha}}\int_{0}^{\zeta_n}ne^{\lambda_0 s}\left(\frac{e^{-\theta_n}-1}{\frac{r_1}{\lambda_1}\left(1-e^{-\theta_n}\right)+\frac{r_1}{\lambda_1}e^{-\theta_n}e^{-\lambda_1(\zeta_n-s)}-\frac{d_1}{\lambda_1}e^{-\lambda_1(\zeta_n-s)}}\right)ds \right]  \label{ieq:logPZ1lessthansmallern_1}\\
& \quad + k_1 (\log{n})^2 n^{1-2\alpha}, \label{ieq:logPZ1lessthansmallern_2}
\end{align}
where\eqref{ieq:logPZ1lessthansmallern_2} comes from Proposition 1 in \cite{hanagal2022large} and $k_1>0$ is some constant. 
%\textcolor{red}{(check if $k_1$ depends on $\theta_n$)}
% the coefficient of the fluctuation term which can be overlooked. 
Fix $\theta_n = \delta_n e^{-\lambda_1 \zeta_n}$ where $\delta_n = \delta n^{-l}$, and since we are now dropping the minimization operator we will have an upper bound for\eqref{ieq:logPZ1lessthansmallern_1}. From the definition of $\zeta_n$ and (\ref{eq:expec_Z1}), we have
$$
e^{-\lambda_1\zeta_n} = \frac{1}{\lambda_1-\lambda_0}n^{-\alpha}\left( 1-e^{(\lambda_0-\lambda_1)\zeta_n} \right).
$$
% then the inner part of above without last term will be equal to
We can then obtain that the inner part of \eqref{ieq:logPZ1lessthansmallern_1} will be less than
\begin{align*}
&\quad \frac{\delta_n (1-\epsilon_n)}{\lambda_1-\lambda_0}n^{1-\alpha} - n^{1-\alpha}\int_{0}^{\zeta_n}e^{\lambda_0 s}\frac{\theta_n}{\frac{r_1}{\lambda_1}\theta_n+e^{-\lambda_1 \left(\zeta_n-s\right)}}ds   \\
& + n^{1-\alpha}\int_{0}^{\zeta_n}e^{\lambda_0 s}\left(\frac{e^{-\theta_n}-1}{\frac{r_1}{\lambda_1}\left(1-e^{-\theta_n}\right)+\frac{r_1}{\lambda_1}e^{-\theta_n}e^{-\lambda_1(\zeta_n-s)}-\frac{d_1}{\lambda_1}e^{-\lambda_1(\zeta_n-s)}}+\frac{\theta_n}{\frac{r_1}{\lambda_1}\theta_n+e^{-\lambda_1 \left(\zeta_n-s\right)}}\right)ds \\
&- \frac{\delta_n (1-\epsilon_n)}{\lambda_1-\lambda_0}n^{1-\alpha}e^{(\lambda_0-\lambda_1)\zeta_n} \\
\leq & \frac{\delta_n (1-\epsilon_n)}{\lambda_1-\lambda_0}n^{1-\alpha} - n^{1-\alpha}\int_{0}^{\zeta_n}e^{\lambda_0 s}\frac{\theta_n}{\frac{r_1}{\lambda_1}\theta_n+e^{-\lambda_1 \left(\zeta_n-s\right)}}ds  \quad\quad (*) \\
& + n^{1-\alpha}\int_{0}^{\zeta_n}e^{\lambda_0 s}\left(\frac{e^{-\theta_n}-1}{\frac{r_1}{\lambda_1}\left(1-e^{-\theta_n}\right)+\frac{r_1}{\lambda_1}e^{-\theta_n}e^{-\lambda_1(\zeta_n-s)}-\frac{d_1}{\lambda_1}e^{-\lambda_1(\zeta_n-s)}}+\frac{\theta_n}{\frac{r_1}{\lambda_1}\theta_n+e^{-\lambda_1 \left(\zeta_n-s\right)}}\right)ds.  \quad\quad (**)
\end{align*}
For the first term, we have 
\begin{align*}
    (*) &= \frac{\delta_n (1-\epsilon_n)}{\lambda_1-\lambda_0}n^{1-\alpha} - n^{1-\alpha}\int_{0}^{\zeta_n}e^{\lambda_0 s}\frac{\delta_n}{\frac{r_1}{\lambda_1}\delta_n+e^{\lambda_1 s}}ds\\
    &= n^{1-\alpha-l}\left( \frac{\delta}{\lambda_1-\lambda_0}(1-\epsilon_n)-\int_{0}^{\zeta_n}e^{\lambda_0 s}\frac{\delta}{\frac{r_1}{\lambda_1}\delta_n+e^{\lambda_1 s}}ds \right)\\
    &= \delta n^{1-\alpha-l}\left( -\frac{\epsilon_n}{\lambda_1-\lambda_0}+\int_0^{\infty} \frac{e^{\lambda_0 s}}{e^{\lambda_1 s}}ds-\int_{0}^{\zeta_n}\frac{e^{\lambda_0 s}}{\frac{r_1}{\lambda_1}\delta_n+e^{\lambda_1 s}}ds \right)\\
    &= \delta n^{1-\alpha-l}\left( -\frac{\epsilon_n}{\lambda_1-\lambda_0}+\frac{e^{(\lambda_0-\lambda_1)\zeta_n}}{\lambda_1-\lambda_0}+\int_{0}^{\zeta_n}\left(\frac{e^{\lambda_0 s}}{e^{\lambda_1 s}}-\frac{e^{\lambda_0 s}}{\frac{r_1}{\lambda_1}\delta_n+e^{\lambda_1 s}}\right)ds \right)\\
    &= \delta n^{1-\alpha-l}\left(-\frac{\epsilon_n}{\lambda_1-\lambda_0}+\frac{e^{(\lambda_0-\lambda_1)\zeta_n}}{\lambda_1-\lambda_0} +\frac{r_1}{\lambda_1}\delta_n\int_0^{\zeta_n} \frac{e^{(\lambda_0-\lambda_1) s}}{\frac{r_1}{\lambda_1}\delta_n+e^{\lambda_1 s}}ds \right).
\end{align*}
By \eqref{Order: zeta_n}, we have
\begin{align*}
   e^{(\lambda_0-\lambda_1)\zeta_n}= \Theta(n^{-(1-\lambda_0/\lambda_1)\alpha}), \text{ } \delta_n= \Theta(n^{-l}), \text{ and } \epsilon_n= \Theta(n^{-k}).
\end{align*}
Therefore, when $l>k$ and $k<(1-\lambda_0/\lambda_1)\alpha$, $(*)$ will be dominated by the  term $-\frac{\delta\epsilon_n}{\lambda_1-\lambda_0}n^{1-\alpha-l}$ and thus $(*) = \Theta(n^{1-\alpha-(l+k)})$ and is negative. Meanwhile, for the second term, we have
\begin{align}
    (**) =   n^{1-\alpha}\int_{0}^{\zeta_n}e^{\lambda_0 s}\frac{\left(\frac{r_1}{\lambda_1}\theta_n e^{-\theta_n} -1-\frac{d_1}{\lambda_1}\theta_n + e^{-\theta_n}  \right)e^{-\lambda_1(\zeta_n-s)}}{\left(\frac{r_1}{\lambda_1}\theta_n+e^{-\lambda_1 \left(\zeta_n-s\right)}\right)\left(\frac{r_1}{\lambda_1}\left(1-e^{-\theta_n}\right)+\frac{r_1}{\lambda_1}e^{-\theta_n}e^{-\lambda_1 \left(\zeta_n-s\right)}-\frac{d_1}{\lambda_1}e^{-\lambda_1(\zeta_n-s)}\right)}ds. \label{eqn_two_star_exp}
\end{align}
Because $r_1>d_1$ and $\theta_n$ converges to 0, by the Taylor expansion of $e^{-\theta_n}$, we can obtain that the numerator in \eqref{eqn_two_star_exp} is negative and the denominator is positive when $n$ is sufficiently large. Therefore we can get an upper bound of the absolute value of $(**)$ as follows.
\begin{align*}
    |(**)| &\leq n^{1-\alpha}\int_{0}^{\zeta_n}e^{\lambda_0 s}\frac{\left(\left(\frac{r_1}{\lambda_1}-\frac{1}{2}\right)\theta_n^2+\frac{1}{6}\theta_n^3\right) e^{-\lambda_1(\zeta_n-s)}}{\left(\frac{r_1}{\lambda_1}\theta_n+e^{-\lambda_1 \left(\zeta_n-s\right)}\right)\left(\frac{r_1}{\lambda_1}\left(\theta_n-\theta_n^2/2\right)+\frac{r_1}{\lambda_1}(1-\theta_n)e^{-\lambda_1 \left(\zeta_n-s\right)}-\frac{d_1}{\lambda_1}e^{-\lambda_1(\zeta_n-s)}\right)}ds\\
    &\sim \left(\frac{r_1}{\lambda_1} -\frac{1}{2}\right)n^{1-\alpha}\delta_n^2e^{-\lambda_1\zeta_n}\int_0^{\zeta_n}\frac{e^{(\lambda_0+\lambda_1)s}}{e^{2\lambda_1s}}ds = \Theta(n^{1-2\alpha-2l}).
\end{align*}
Because $l>k$ and $\alpha > 0$, we have $1-2\alpha-2l< 1-\alpha-(l+k)$ which implies that $(**)$ is dominated by $(*)$. Finally, because $l+k<\alpha$, \eqref{ieq:logPZ1lessthansmallern_2} is dominated by $(*)$ as well, which leads to the desired result such that for some $c>0$, we have
\begin{align*}
\limsup\limits_{n\rightarrow \infty}\frac{1}{n^{1-\alpha-(l+k)}}\log \PP\left(Z_1^n\left(\zeta_n\right)<\left(1-\epsilon_n\right)n\right)\le -c.
\end{align*} \qed
\end{proof}

\begin{corollary}
\label{lemma:larger Z1}
For any $ \epsilon>0, \alpha<1$, and any $(l,k)$ such that $k<(1+r/\lambda_1)\alpha $, $l>k$ and $l+k< \min\{1-\alpha,\alpha\}$, there exists $c>0$ so that
\begin{align*}
\limsup\limits_{n\rightarrow \infty}\frac{1}{n^{1-\alpha-(l+k)}}\log \PP\left(Z_1^n\left(\zeta_n\right)>\left(1+\epsilon_n\right)n\right)\le -c,
\end{align*}
where $\epsilon_n = \epsilon n^{-k}$.
\end{corollary}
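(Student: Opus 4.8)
The plan is to mirror the proof of Lemma~\ref{lemma:smaller Z1}, replacing the one-sided exponential (Chernoff) bound for the lower tail by its upper-tail analogue. For any $\theta_n>0$,
\begin{align*}
\PP\left(Z_1^n(\zeta_n)>(1+\epsilon_n)n\right)\le e^{-\theta_n(1+\epsilon_n)n}\,\EE\left[e^{\theta_n Z_1^n(\zeta_n)}\right],
\end{align*}
so, conditioning on $\{Z_0^n(s),\,s\le\zeta_n\}$ exactly as before and then replacing $Z_0^n(s)$ by its mean $ne^{\lambda_0 s}$ (which, by Proposition~1 of \cite{hanagal2022large}, costs the same error term \eqref{ieq:logPZ1lessthansmallern_2} of order $(\log n)^2 n^{1-2\alpha}$), I obtain the upper-tail counterpart of \eqref{ieq:logPZ1lessthansmallern_1}: it is the same integral expression with $e^{-\theta_n}$ replaced by $e^{\theta_n}$ throughout the Laplace transform and with the sign of the linear term $\theta_n(1\pm\epsilon_n)n$ reversed. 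Concretely the integrand's numerator becomes $e^{\theta_n}-1$ and its denominator becomes $\frac{r_1}{\lambda_1}(1-e^{\theta_n})+\left(\frac{r_1}{\lambda_1}e^{\theta_n}-\frac{d_1}{\lambda_1}\right)e^{-\lambda_1(\zeta_n-s)}$.

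Next I would make the same substitution $\theta_n=\delta_n e^{-\lambda_1\zeta_n}$ with $\delta_n=\delta n^{-l}$, split the resulting bound into a dominant linear piece $(*)$ and a quadratic remainder $(**)$, and run the identical order-of-magnitude bookkeeping. Using \eqref{eq:expec_Z1}, \eqref{Order: zeta_n}, and the defining relation $\Phi_1^n(\zeta_n)=n$, the linear term $-\theta_n(1+\epsilon_n)n$ cancels the leading part of the integral, leaving $(*)=\Theta(n^{1-\alpha-(l+k)})$ with a \emph{negative} sign (the surviving $-\epsilon_n/(\lambda_1-\lambda_0)$ contribution, which is negative since $\lambda_1>\lambda_0$), while the quadratic remainder is $(**)=\Theta(n^{1-2\alpha-2l})$ and the error term from \eqref{ieq:logPZ1lessthansmallern_2} is $\Theta((\log n)^2 n^{1-2\alpha})$. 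The conditions $l>k$ and $l+k<\min\{1-\alpha,\alpha\}$ then guarantee that $(*)$ dominates both $(**)$ and the error term; the constraint on $k$ in the statement plays the same role here as $k<(1-\lambda_0/\lambda_1)\alpha$ does in Lemma~\ref{lemma:smaller Z1}, namely it ensures that the $\epsilon_n=\epsilon n^{-k}$ contribution dominates the competing correction inside $(*)$. This yields $\limsup_{n\to\infty} n^{-(1-\alpha-(l+k))}\log\PP(Z_1^n(\zeta_n)>(1+\epsilon_n)n)\le -c$ for some $c>0$.

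The main obstacle, and the one genuinely new feature relative to Lemma~\ref{lemma:smaller Z1}, is the \emph{finiteness} of the exponential moment $\EE[e^{\theta_n Z_1^n(\zeta_n)}]$. For the lower tail the relevant transform $\EE[e^{-\theta_n Z_1^n(\zeta_n)}]$ is automatically finite for every $\theta_n>0$, but a supercritical birth-death clone has only exponentially heavy upper tails, so $\EE[e^{\theta Z(a)}]$ for a clone of age $a$ is finite only for $\theta$ below the explosion threshold $\theta^*(a)$ at which the denominator above vanishes; a short computation gives $\theta^*(a)=\Theta(e^{-\lambda_1 a})$, the binding (smallest) value being $\theta^*(\zeta_n)\sim(\lambda_1/r_1)e^{-\lambda_1\zeta_n}$ for the youngest clones, those of age $\zeta_n$. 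I would therefore have to verify that the chosen tilt $\theta_n=\delta_n e^{-\lambda_1\zeta_n}$ stays strictly below $\theta^*(a)$ uniformly over the clone ages $a\in[0,\zeta_n]$; since $\delta_n=\delta n^{-l}\to 0$ this holds for all large $n$, but the proximity of $\theta_n$ to $\theta^*(\zeta_n)$ (encoded by the ratio $\theta_n/\theta^*(\zeta_n)\sim (r_1/\lambda_1)\delta_n$) is exactly what controls the size of the quadratic remainder $(**)$ and must be tracked carefully, which is where the factor $r_1/\lambda_1$ enters. Once this finiteness and the attendant bound on $(**)$ are secured, the remainder of the argument is word-for-word that of Lemma~\ref{lemma:smaller Z1}.
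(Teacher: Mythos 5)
Your proposal is correct and follows essentially the same route as the paper: the identical Chernoff bound with the mean-field replacement error of order $(\log n)^2 n^{1-2\alpha}$, the same tilt $\theta_n=\delta_n e^{-\lambda_1\zeta_n}$, and the same order bookkeeping, with your remark on the role of $k$ matching the paper's handling of the positive correction term (which it isolates as a separate third term $(***)=\Theta\bigl(n^{1-\alpha-l-(1+r/\lambda_1)\alpha}\bigr)$ rather than folding it into $(*)$). Your explicit verification that $\theta_n$ stays below the explosion threshold $\theta^*(\zeta_n)\sim(\lambda_1/r_1)e^{-\lambda_1\zeta_n}$ is a point the paper leaves implicit in the positivity of the denominator $-\tfrac{r_1}{\lambda_1}\delta_n+e^{\lambda_1 s}$, and is handled the same way (note only that clones of age $\zeta_n$ are the oldest, not the youngest).
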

\begin{proof}
% From Proposition 1 in \cite{hanagal2022large} and the proof of Proposition 1 in \cite{CD2021}, when $\theta<1$ we always have $\EE\left[e^{\theta Z_1^n\left(\zeta_n\right)}\right]<\infty$. 
Similar to the proof of Lemma \ref{lemma:smaller Z1}, we can obtain that
\begin{align}
&\log\PP\left(Z_1^n\left(\zeta_n\right)>\left(1+\epsilon_n\right)n\right) \nonumber \\
\leq  & \min_{0<\theta_n}\left[ -\theta_n\left(1+\epsilon_n\right)n+n^{1-\alpha}\int_{0}^{\zeta_n}e^{\lambda_0 s}\left(\frac{e^{\theta_n}-1}{\frac{r_1}{\lambda_1}\left(1-e^{\theta_n}\right)+\frac{r_1}{\lambda_1}e^{\theta_n}e^{-\lambda_1(\zeta_n-s)}-\frac{d_1}{\lambda_1}e^{-\lambda_1(\zeta_n-s)}}\right)ds \right]\label{ieq:logPZ1largerthanbiggern_1}\\
& \quad + k_1 (\log{n})^2 n^{1-2\alpha}. \label{ieq:logPZ1largerthanbiggern_2}
\end{align}
Let $\theta_n = \delta_n e^{-\lambda_1 \zeta_n}$ where $\delta_n = \delta n^{-l}$. We can obtain that the inner part of \eqref{ieq:logPZ1largerthanbiggern_1} is equal to
\begin{align*}
&-\theta_n\left(1+\epsilon_n\right)n+n^{1-\alpha}\int_{0}^{\zeta_n}e^{\lambda_0 s}\left(\frac{e^{\theta_n}-1}{\frac{r_1}{\lambda_1}\left(1-e^{\theta_n}\right)+\frac{r_1}{\lambda_1}e^{\theta_n}e^{-\lambda_1(\zeta_n-s)}-\frac{d_1}{\lambda_1}e^{-\lambda_1(\zeta_n-s)}}\right)ds \\
&= -\frac{\delta_n (1+\epsilon_n)}{\lambda_1-\lambda_0}n^{1-\alpha} + n^{1-\alpha}\int_{0}^{\zeta_n}e^{\lambda_0 s}\frac{\delta_n}{-\frac{r_1}{\lambda_1}\delta_n+e^{\lambda_1 s}}ds   \quad\quad (*) \\
& \quad + n^ {1-\alpha}\int_{0}^{\zeta_n}e^{\lambda_0 s}\left(\frac{e^{\theta_n}-1}{\frac{r_1}{\lambda_1}\left(1-e^{\theta_n}\right)+\frac{r_1}{\lambda_1}e^{\theta_n}e^{-\lambda_1(\zeta_n-s)}-\frac{d_1}{\lambda_1}e^{-\lambda_1(\zeta_n-s)}}-\frac{\delta_n}{-\frac{r_1}{\lambda_1}\delta_n+e^{\lambda_1 s}}\right)ds \quad\quad (**) \\
& \quad +\frac{\delta_n (1+\epsilon_n)}{\lambda_1-\lambda_0}n^{1-\alpha}e^{(\lambda_0-\lambda_1)\zeta_n}.  \quad\quad (***)
\end{align*}
For the first term, we have 
\begin{align*}
    (*) &= -\frac{\delta_n (1+\epsilon_n)}{\lambda_1-\lambda_0}n^{1-\alpha} + n^{1-\alpha}\int_{0}^{\zeta_n}e^{\lambda_0 s}\frac{\delta_n}{-\frac{r_1}{\lambda_1}\delta_n+e^{\lambda_1 s}}ds\\
    &= n^{1-\alpha-l}\left( -\frac{\delta}{\lambda_1-\lambda_0}(1+\epsilon_n)+\int_{0}^{\zeta_n}e^{\lambda_0 s}\frac{\delta}{-\frac{r_1}{\lambda_1}\delta_n+e^{\lambda_1 s}}ds \right)\\
    &= \delta n^{1-\alpha-l}\left( -\frac{\epsilon_n}{\lambda_1-\lambda_0}-\int_0^{\infty} \frac{e^{\lambda_0 s}}{e^{\lambda_1 s}}ds+\int_{0}^{\zeta_n}\frac{e^{\lambda_0 s}}{-\frac{r_1}{\lambda_1}\delta_n+e^{\lambda_1 s}}ds \right)\\
    &= \delta n^{1-\alpha-l}\left( -\frac{\epsilon_n}{\lambda_1-\lambda_0}-\frac{e^{(\lambda_0-\lambda_1)\zeta_n}}{\lambda_1-\lambda_0}-\int_{0}^{\zeta_n}\left(\frac{e^{\lambda_0 s}}{e^{\lambda_1 s}}-\frac{e^{\lambda_0 s}}{-\frac{r_1}{\lambda_1}\delta_n+e^{\lambda_1 s}}\right)ds \right)\\
    &= \delta n^{1-\alpha-l}\left(-\frac{\epsilon_n}{\lambda_1-\lambda_0}-\frac{e^{(\lambda_0-\lambda_1)\zeta_n}}{\lambda_1-\lambda_0} -\frac{r_1}{\lambda_1}\delta_n\int_0^{\zeta_n} \frac{e^{(\lambda_0-\lambda_1) s}}{-\frac{r_1}{\lambda_1}\delta_n+e^{\lambda_1 s}}ds \right).
\end{align*}
When $l>k$ and $k<(1+r/\lambda_1)\alpha$, we have $(*) = \Theta(n^{1-\alpha-l-k})$. The analysis of the second term and the third term are similar to that in Lemma \ref{lemma:smaller Z1} and we can obtain that
\begin{align*}
    &(**) = \Theta(n^{1-2\alpha-2l})\\
    &(***) = \Theta(n^{1-\alpha-l-(1+r/\lambda_1)\alpha}),
\end{align*}
which are both dominated by the first term. When $l+k<\alpha$, the \eqref{ieq:logPZ1largerthanbiggern_2} is also dominated by $(*)$. Therefore, for some $c>0$, we have
\begin{align*}
\limsup\limits_{n\rightarrow \infty}\frac{1}{n^{1-\alpha-(l+k)}}\log \PP\left(Z_1^n\left(\zeta_n\right)>\left(1+\epsilon_n\right)n\right)\le -c. 
\end{align*} \qed
% \begin{align*}
% &\log\PP\left(Z_1^n\left(\zeta_n\right)>\left(1+\epsilon_n\right)n\right)\\
%      \leq  & \min_{0<\theta_n} -\theta_n\left(1+\epsilon_n\right)n+n^{1-\alpha}\int_{0}^{\zeta_n}e^{\lambda_0 s}\left(\frac{e^{\theta_n}-1}{\frac{r_1}{\lambda_1}\left(1-e^{\theta_n}\right)+\frac{r_1}{\lambda_1}e^{\theta_n}e^{-\lambda_1(\zeta_n-s)}-\frac{d_1}{\lambda_1}e^{-\lambda_1(\zeta_n-s)}}\right)ds + k_1 (\log{n})^2 n^{1-2\alpha}
% \end{align*}
\end{proof}

\begin{corollary}
\label{cor:Z1_larger_n}
For any $ \epsilon>0, \alpha<1$ and  $(l,k)$ satisfying $1<k<l<k+\frac{1-\alpha}{\alpha}   $, there exists $c>0$ so that
\begin{align*}
\limsup\limits_{n\rightarrow \infty}\frac{1}{n^{1-(l-k+1)\alpha}}\log \PP\left(Z_1^n\left(k\zeta_n\right)<n\right)\le -c,
\end{align*}
\end{corollary}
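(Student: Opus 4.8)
The plan is to establish this as a lower-tail large deviation estimate via an exponential (Chernoff) bound, closely mirroring the arguments in Lemma~\ref{lemma:smaller Z1} and Corollary~\ref{lemma:larger Z1}, but now evaluated at the later time $k\zeta_n$ with $k>1$. The key observation is that, by \eqref{eq:expec_Z1} and \eqref{Order: zeta_n}, the mean $\Phi_1^n(k\zeta_n)=\EE[Z_1^n(k\zeta_n)]$ is of order $n^{1+(k-1)\alpha}$, which for $k>1$ sits far above the threshold $n$. Hence $\{Z_1^n(k\zeta_n)<n\}$ is a deviation well below the mean, and we expect it to be unlikely on the scale of a positive power of $n$.

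First I would condition on the sensitive trajectory $\{Z_0^n(s):s\le k\zeta_n\}$. Given this trajectory, the resistant clones arrive as a non-homogeneous Poisson process of rate $Z_0^n(s)n^{-\alpha}$, and each clone evolves as an independent supercritical birth-death process whose Laplace transform $F(u,\theta)=\EE[e^{-\theta W(u)}]$ is explicit. The Laplace functional of a Poisson-driven superposition then gives
\begin{equation*}
\log\EE\left[e^{-\theta Z_1^n(k\zeta_n)}\,\middle|\,\{Z_0^n(s)\}\right]=-\int_0^{k\zeta_n}Z_0^n(s)n^{-\alpha}\left(1-F(k\zeta_n-s,\theta)\right)ds.
\end{equation*}
Replacing $Z_0^n(s)$ by its mean $ne^{\lambda_0 s}$ (the resulting error controlled exactly as in Proposition~1 of \cite{hanagal2022large}) and applying Markov's inequality to $e^{-\theta_n Z_1^n(k\zeta_n)}$ yields
\begin{equation*}
\log\PP\left(Z_1^n(k\zeta_n)<n\right)\le \theta_n n-n^{1-\alpha}\int_0^{k\zeta_n}e^{\lambda_0 s}\left(1-F(k\zeta_n-s,\theta_n)\right)ds+(\text{correction}),
\end{equation*}
which is the direct analogue of \eqref{ieq:logPZ1lessthansmallern_1}--\eqref{ieq:logPZ1lessthansmallern_2} with $\zeta_n$ replaced by $k\zeta_n$ and the threshold $(1-\epsilon_n)n$ replaced by $n$.

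Next I would choose the tilt $\theta_n=\delta_n e^{-\lambda_1 k\zeta_n}$ with $\delta_n=\delta n^{-(l-k)\alpha}$ and expand. Using $1-F(u,\theta)\approx \theta/(\tfrac{r_1}{\lambda_1}\theta+e^{-\lambda_1 u})$ for small $\theta$, the leading contribution reduces, after substitution and using $\int_0^{\infty}e^{(\lambda_0-\lambda_1)s}ds=1/(\lambda_1-\lambda_0)$, to
\begin{equation*}
-\frac{\delta_n}{\lambda_1-\lambda_0}\,n^{1-\alpha}(1+o(1))=-\frac{\delta}{\lambda_1-\lambda_0}\,n^{1-(l-k+1)\alpha}(1+o(1)),
\end{equation*}
which is negative and of exactly the claimed order. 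The tilt-cost term $\theta_n n=\Theta(n^{1-k\alpha})$ is subdominant since $k>1$; the second-order Taylor remainder is of order $n^{1-\alpha-2(l-k)\alpha-k\alpha}$ and the mean-field correction is of second order in $\delta_n$, so both are dominated by the leading term. The three hypotheses then play clean roles: $k>1$ places the observation beyond recurrence (so the event is a genuine lower deviation and $\theta_n n$ is negligible), $l>k$ guarantees $\delta_n\to0$ so the small-$\theta$ expansion is valid and the rate stays below $n^{1-\alpha}$, and $l<k+\tfrac{1-\alpha}{\alpha}$ is exactly equivalent to $1-(l-k+1)\alpha>0$, i.e.\ to the rate exponent being a genuine positive power of $n$.

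The main obstacle is the careful order bookkeeping in controlling the mean-field replacement error and the higher-order remainders over the extended horizon $[0,k\zeta_n]$, and verifying that they are all dominated by the leading term under precisely the stated constraints. One subtlety specific to this later time is that the clone-driving population $Z_0^n$ may itself go extinct before $k\zeta_n$; however, since the integrand weight $e^{(\lambda_0-\lambda_1)s}$ decays, the contribution is concentrated at small $s$ where $Z_0^n(s)\approx ne^{\lambda_0 s}$ is large, so this truncation is harmless and no condition relating $\alpha$ to $-\lambda_1/\lambda_0$ is needed. Assembling these estimates gives, for a suitable $c>0$ (e.g.\ $c=\delta/(\lambda_1-\lambda_0)$ after fixing $\delta$),
\begin{equation*}
\limsup_{n\to\infty}\frac{1}{n^{1-(l-k+1)\alpha}}\log\PP\left(Z_1^n(k\zeta_n)<n\right)\le -c,
\end{equation*}
as claimed.
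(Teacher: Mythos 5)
Your proposal follows essentially the same route as the paper's own proof: the paper applies the Chernoff bound of Lemma~\ref{lemma:smaller Z1} with horizon $k\zeta_n$ and threshold $n$, keeps the same mean-field correction term $k_1(\log n)^2 n^{1-2\alpha}$ from \cite{hanagal2022large}, and takes the tilt $\theta_n=\delta e^{-l\lambda_1\zeta_n}=\Theta(n^{-l\alpha})$, which is order-identical to your $\theta_n=\delta_n e^{-\lambda_1 k\zeta_n}$ with $\delta_n=\delta n^{-(l-k)\alpha}$, since $e^{-\lambda_1 k\zeta_n}=\Theta(n^{-k\alpha})$ by \eqref{Order: zeta_n}. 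Your leading term $-\frac{\delta}{\lambda_1-\lambda_0}n^{1-(l-k+1)\alpha}$ matches the paper's, which extracts the same constant via dominated convergence of the rescaled integral to $1/(\lambda_1-\lambda_0)$ together with $1-e^{-\theta_n}\sim\theta_n$ and $e^{k\lambda_1\zeta_n}\sim(\lambda_1-\lambda_0)^k n^{k\alpha}$; your second-order Taylor remainder bound $\Theta\bigl(n^{1-\alpha-2(l-k)\alpha-k\alpha}\bigr)$ is also correct and dominated.

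Two bookkeeping points need repair. First, with your tilt the cost term is $\theta_n n=\Theta(n^{1-l\alpha})$, not $\Theta(n^{1-k\alpha})$ — you dropped the factor $\delta_n$. This is not cosmetic: for the exponent you wrote, subdominance relative to $n^{1-(l-k+1)\alpha}$ would require $l<2k-1$, which the hypotheses do not guarantee (take $\alpha$ small, $k$ near $1$, $l$ large within the allowed range), whereas for the correct exponent, $1-l\alpha<1-(l-k+1)\alpha$ is exactly equivalent to $k>1$; so your stated reason is the right one, but only after correcting the exponent — in the paper this is precisely the observation that the rescaled cost $\delta n^{(l-k+1)\alpha}e^{-l\lambda_1\zeta_n}=\Theta(n^{(1-k)\alpha})\to 0$. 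Second, the mean-field replacement error is the additive term $k_1(\log n)^2 n^{1-2\alpha}$ of \eqref{ieq:logPZ1lessthansmallern_2}, which is \emph{not} "of second order in $\delta_n$"; its domination by the main term $n^{1-(l-k+1)\alpha}$ requires $l-k<1$, which can fail inside the stated range $l-k<\frac{1-\alpha}{\alpha}$ whenever $\alpha<1/2$. To be fair, the paper's own proof of Corollary~\ref{cor:Z1_larger_n} is also silent on this term; the statement survives because the bound is monotone in $l$ (establishing the estimate for some $l'\in(k,\,k+\min\{1,\frac{1-\alpha}{\alpha}\})$ gives a faster decay rate that implies the claimed limsup bound for every larger admissible $l$), but your write-up should attribute this error term to the $Z_0^n$-to-mean replacement and handle it explicitly rather than fold it into the Taylor expansion of the tilt.
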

\begin{proof}
    Similar with the proof of the Lemma \ref{lemma:smaller Z1}, we have 
\begin{align}
&\log\PP\left(Z_1^n\left(\zeta_n\right)<\left(1-\epsilon_n\right)n\right) \nonumber\\
\leq  & \min_{\theta_n>0} \left[ \theta_nn+\frac{1}{n^{\alpha}}\int_{0}^{k\zeta_n}ne^{\lambda_0 s}\left(\frac{e^{-\theta_n}-1}{\frac{r_1}{\lambda_1}\left(1-e^{-\theta_n}\right)+\frac{r_1}{\lambda_1}e^{-\theta_n}e^{-\lambda_1(k\zeta_n-s)}-\frac{d_1}{\lambda_1}e^{-\lambda_1(k\zeta_n-s)}}\right)ds \right] \\
& \quad + k_1 (\log{n})^2 n^{1-2\alpha}, 
\end{align}
Let $\theta_n = \delta e^{-l\lambda_1\zeta_n}$ where $l$ can be any number satisfying  $k<l<k+\frac{1-\alpha}{\alpha}$. We can obtain
\begin{align*}
    &n^{(l-k+1)\alpha-1}\left( \theta_nn+\frac{1}{n^{\alpha}}\int_{0}^{k\zeta_n}ne^{\lambda_0 s}\left(\frac{e^{-\theta_n}-1}{\frac{r_1}{\lambda_1}\left(1-e^{-\theta_n}\right)+\frac{r_1}{\lambda_1}e^{-\theta_n}e^{-\lambda_1(k\zeta_n-s)}-\frac{d_1}{\lambda_1}e^{-\lambda_1(k\zeta_n-s)}}\right)ds \right)\\
    =& \delta n^{(l-k+1)\alpha}e^{-l\lambda_1\zeta_n}+n^{(l-k)\alpha}\int_{0}^{k\zeta_n}e^{\lambda_0 s}\left(\frac{e^{-\theta_n}-1}{\frac{r_1}{\lambda_1}\left(1-e^{-\theta_n}\right)+\frac{r_1}{\lambda_1}e^{-\theta_n}e^{-\lambda_1(k\zeta_n-s)}-\frac{d_1}{\lambda_1}e^{-\lambda_1(k\zeta_n-s)}}\right)ds\\
    =& \delta n^{(l-k+1)\alpha}e^{-l\lambda_1\zeta_n} - n^{(l-k)\alpha}e^{k\lambda_0\zeta_n}\int_{0}^{k\zeta_n}e^{(\lambda_1-\lambda_0) s}\left(\frac{1-e^{-\theta_n}}{\frac{r_1}{\lambda_1}\left(1-e^{-\theta_n}\right)e^{\lambda_1s}+\frac{r_1}{\lambda_1}e^{-\theta_n}-\frac{d_1}{\lambda_1}}\right)ds\\
    =& \delta n^{(l-k+1)\alpha}e^{-l\lambda_1\zeta_n} - n^{(l-k)\alpha}e^{k\lambda_0\zeta_n}\left(1-e^{-\theta_n}\right)\int_{0}^{k\zeta_n}\left(\frac{e^{(\lambda_1-\lambda_0) (k\zeta_n-s)}}{\frac{r_1}{\lambda_1}\left(1-e^{-\theta_n}\right)e^{\lambda_1(k\zeta_n-s)}+\frac{r_1}{\lambda_1}e^{-\theta_n}-\frac{d_1}{\lambda_1}}\right)ds\\
    =& \delta n^{(l-k+1)\alpha}e^{-l\lambda_1\zeta_n} - n^{(l-k)\alpha}e^{k\lambda_0\zeta_n}\left(1-e^{-\theta_n}\right)e^{k(\lambda_1-\lambda_0) \zeta_n}\int_{0}^{k\zeta_n}\left(\frac{e^{-(\lambda_1-\lambda_0) s}}{\frac{r_1}{\lambda_1}\left(1-e^{-\theta_n}\right)e^{\lambda_1(k\zeta_n-s)}+\frac{r_1}{\lambda_1}e^{-\theta_n}-\frac{d_1}{\lambda_1}}\right)ds\\
\end{align*}
By Dominated Convergence Theorem, we have 
\begin{align*}
    \lim_{n\to\infty}\int_{0}^{k\zeta_n}\left(\frac{e^{-(\lambda_1-\lambda_0) s}}{\frac{r_1}{\lambda_1}\left(1-e^{-\theta_n}\right)e^{\lambda_1(k\zeta_n-s)}+\frac{r_1}{\lambda_1}e^{-\theta_n}-\frac{d_1}{\lambda_1}}\right)ds  = \int_0^{\infty} e^{-(\lambda_1-\lambda_0) s}ds = \frac{1}{\lambda_1-\lambda_0}
\end{align*}
Because of $1-e^{-\theta_n}\sim \theta_n =\delta n^{-l\alpha} $ and $e^{k\lambda_1\zeta_n}\sim (\lambda_1-\lambda_0)^k n^{k\alpha}$, we can find $c>0$ and 
\begin{align*}
   \limsup\limits_{n\rightarrow \infty} n^{(l-k+1)\alpha-1}\left( \theta_nn+\frac{1}{n^{\alpha}}\int_{0}^{k\zeta_n}ne^{\lambda_0 s}\left(\frac{e^{-\theta_n}-1}{\frac{r_1}{\lambda_1}\left(1-e^{-\theta_n}\right)+\frac{r_1}{\lambda_1}e^{-\theta_n}e^{-\lambda_1(k\zeta_n-s)}-\frac{d_1}{\lambda_1}e^{-\lambda_1(k\zeta_n-s)}}\right)ds \right)< -c 
\end{align*} 
\qed
\end{proof}

In Lemma \ref{lemma:Convergence_of_Rn_zeta_n}, we show a convergence in probability result for the Simpson's Index of mutant clones at the deterministic limit of cancer recurrence time. For ease of exposition, we define $\PP_{\rho_n}(A) = \PP(A\cap \rho_n)$ and $\EE_{\rho_n}\left[ X \right] = \EE\left[ X\cdot 1_{\rho_n} \right]$.
\begin{lemma}
\label{lemma:Convergence_of_Rn_zeta_n}
For any $\epsilon > 0$, when $u<\min\{\alpha/4,(1-\alpha)/4\}$, we have 
\begin{align*}
    \lim_{n\to\infty}\PP_{\rho_n}\left( n^u\left| n^{1-\alpha}R_n\left(\zeta_n\right)- \frac{2r_1\left(\lambda_1-\lambda_0\right)^2}{ \lambda_1 \left(2\lambda_1-\lambda_0\right)} \right| > \epsilon\right) = 0.
\end{align*}
\end{lemma}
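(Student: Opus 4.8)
The plan is to write the rescaled index as a ratio and treat numerator and denominator separately. Using the redefinition of $R_n$ in terms of $\tilde I_n(\zeta_n)$ (all clones generated before $\zeta_n$, with extinct clones contributing size $0$), we have
\begin{align*}
n^{1-\alpha}R_n(\zeta_n)=\frac{n^{-1-\alpha}\sum_{i=1}^{\tilde I_n(\zeta_n)}X_{i,n}(\zeta_n)^2}{\left(Z_1^n(\zeta_n)/n\right)^2}.
\end{align*}
For the denominator, Lemma \ref{lemma:smaller Z1} and Corollary \ref{lemma:larger Z1} provide, for an admissible pair $(l,k)$ with $k>u$, the large-deviation bound $\PP\left(Z_1^n(\zeta_n)\notin((1-\epsilon_n)n,(1+\epsilon_n)n)\right)\le e^{-cn^{1-\alpha-(l+k)}}$ with $\epsilon_n=\epsilon n^{-k}$; on the complementary event $(Z_1^n(\zeta_n)/n)^2=1+O(n^{-k})$ and stays bounded away from $0$, so the problem reduces to controlling the numerator $W_n:=n^{-1-\alpha}\sum_i X_{i,n}(\zeta_n)^2$. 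The admissibility constraints $l>k$ and $l+k<\min\{1-\alpha,\alpha\}$ limit $k$, and hence $u$, to below $\min\{\alpha/2,(1-\alpha)/2\}$.

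Next I would compute the mean of $W_n$ by conditioning on the sensitive-cell history $\{Z_0^n(s):s\le\zeta_n\}$. The clones arrive as a marked Poisson process of intensity $Z_0^n(s)n^{-\alpha}\,ds$, each contributing an independent birth--death size evolved for time $\zeta_n-s$, so by Campbell's formula and \eqref{Order: second moment Z},
\begin{align*}
\EE[W_n]=n^{-1-2\alpha}\int_0^{\zeta_n}\EE[Z_0^n(s)]\,\EE\!\left[Z(\zeta_n-s)^2\right]ds,
\end{align*}
with $\EE[Z_0^n(s)]=ne^{\lambda_0 s}$ and $\EE[Z(t)^2]=\tfrac{2r_1}{\lambda_1}e^{2\lambda_1 t}-\tfrac{r_1+d_1}{\lambda_1}e^{\lambda_1 t}$. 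The leading contribution is $\tfrac{2r_1}{\lambda_1}n^{-2\alpha}e^{2\lambda_1\zeta_n}\int_0^{\zeta_n}e^{(\lambda_0-2\lambda_1)s}ds$; substituting $e^{\lambda_1\zeta_n}\sim(\lambda_1-\lambda_0)n^{\alpha}$ from \eqref{eq:expec_Z1}--\eqref{Order: zeta_n} reproduces the target constant $\tfrac{2r_1(\lambda_1-\lambda_0)^2}{\lambda_1(2\lambda_1-\lambda_0)}$, while the subleading term is $O(n^{-\alpha})$.

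For the fluctuations I would use the law of total variance. Conditional on $Z_0^n$ the numerator is an integral of independent marks $X(\zeta_n-s)^2$, giving $\operatorname{Var}(W_n\mid Z_0^n)=n^{-2-2\alpha}\int_0^{\zeta_n}Z_0^n(s)n^{-\alpha}\EE[Z(\zeta_n-s)^4]\,ds$; taking expectations and using $\EE[Z(t)^4]=\Theta(e^{4\lambda_1 t})$ from \eqref{Order: kth moment Z^n} shows $\EE[\operatorname{Var}(W_n\mid Z_0^n)]=\Theta(n^{\alpha-1})$. The remaining piece $\operatorname{Var}(\EE[W_n\mid Z_0^n])$ is a double integral of $\operatorname{Cov}(Z_0^n(s),Z_0^n(t))=\Theta(ne^{\lambda_0\max(s,t)})$ (from \eqref{Order: second moment Z^n}) against the $\EE[Z(\cdot)^2]$ kernels and works out to the negligible order $\Theta(n^{-1})$. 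Hence $\operatorname{Var}(W_n)=\Theta(n^{\alpha-1})$, and Chebyshev yields $\PP(n^u|W_n-\EE W_n|>\epsilon)=O(n^{2u+\alpha-1})\to0$ for $u<(1-\alpha)/2$.

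Combining the two estimates on the event $\rho_n$ (invoking Proposition \ref{prop:rho_n}) completes the argument: the denominator is controlled to relative error $O(n^{-k})$ and the numerator to $O(n^{(\alpha-1)/2})$, and since $R_n$ is a ratio the two fluctuations must be combined, e.g.\ by a Cauchy--Schwarz estimate when transferring the numerator's second-moment bound onto the high-probability denominator event; this combination is the step that ultimately fixes the admissible range of $u$. The main obstacle is precisely this variance analysis for $W_n$: correctly identifying that the conditional fourth-moment term of order $n^{\alpha-1}$ dominates the covariance term of order $n^{-1}$, and handling the joint fluctuation of numerator and denominator without losing the rate — this is where the bookkeeping in $(l,k)$ and the factor separating $\min\{\alpha/2,(1-\alpha)/2\}$ from the stated $\min\{\alpha/4,(1-\alpha)/4\}$ enters.
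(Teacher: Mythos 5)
Your argument is correct, but it takes a genuinely different route from the paper's. The paper applies Chebyshev directly to $R_n(\zeta_n)$ at the level of second moments: it reduces everything to controlling $\EE\left[R_n(\zeta_n)^2\right]$, replaces the random denominator $Z_1^n(\zeta_n)$ by $n$ inside the square using the same large-deviation lemmas you invoke together with an eighth-moment bound $\EE_{\rho_n}\left[Z_1^n(\zeta_n)^8\right]\le Cn^8$ (itself proved via $Z_1^n\le \tilde{I}_n Z_1$ and repeated Cauchy--Schwarz), and then imports the two-term expansion $\EE_{\rho_n}\left[\tilde{R}_n(\zeta_n)^2\right]=\frac{4r_1^2(\lambda_1-\lambda_0)^4}{\lambda_1^2(2\lambda_1-\lambda_0)^2}n^{-2+2\alpha}+C_1n^{-2+\alpha}+C_2n^{-3+3\alpha}+o(n^{-2+\alpha})$ from Proposition 1 of \cite{CD2021}. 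You instead factor $n^{1-\alpha}R_n(\zeta_n)=W_n\cdot\left(n/Z_1^n(\zeta_n)\right)^2$ and run a self-contained bias--variance analysis of the numerator: Campbell's formula for $\EE[W_n]$ (bias $O(n^{-\alpha})$, recovering the correct constant) and the law of total variance, with the conditional-variance term $\Theta(n^{\alpha-1})$ dominating the $\Theta(n^{-1})$ term from $\Var\left(\EE[W_n\mid Z_0^n]\right)$ --- both orders check out against \eqref{Order: second moment Z^n}--\eqref{Order: kth moment Z^n}, and they are in fact consistent with the paper's cited expansion, since $\tilde{R}_n=n^{\alpha-1}W_n$ turns your two error orders into precisely the $C_1n^{-2+\alpha}$ and $C_2n^{-3+3\alpha}$ terms above. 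What your route buys: it needs only fourth moments of single-clone sizes rather than eighth moments of $Z_1^n$, it avoids relying on the external computation in \cite{CD2021}, and --- because the denominator swap is performed at first order (costing $u<k$ with $k<\min\{\alpha/2,(1-\alpha)/2\}$) rather than inside a squared Chebyshev bound (costing $2u<k$, which is exactly where the paper's threshold $\min\{\alpha/4,(1-\alpha)/4\}$ originates) --- it actually yields the stronger admissible range $u<\min\{\alpha/2,(1-\alpha)/2\}$, of which the stated lemma is a special case. One clarification on your closing remark: no Cauchy--Schwarz is needed to combine numerator and denominator; a union bound over $\left\{|Z_1^n(\zeta_n)-n|\le\epsilon n^{1-k}\right\}$ together with Markov's inequality applied to $W_n$ on that event suffices, and no factor-of-two loss occurs in your scheme --- the gap between your $\min\{\alpha/2,(1-\alpha)/2\}$ and the stated $\min\{\alpha/4,(1-\alpha)/4\}$ is an artifact of the paper's second-moment bookkeeping, not an obstacle your argument must overcome.
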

\begin{proof}
By Chebyshev's inequality, we have
\begin{align*}
    &\PP_{\rho_n}\left( n^u\left| n^{1-\alpha}R_n\left(\zeta_n\right)- \frac{2r_1\left(\lambda_1-\lambda_0\right)^2}{ \lambda_1 \left(2\lambda_1-\lambda_0\right)} \right| > \epsilon\right) \\
    \leq &\frac{n^{2-2\alpha+2u}}{\epsilon^2}\EE_{\rho_n}\left[\left(R_n(\zeta_n)-n^{\alpha-1}\frac{2r_1\left(\lambda_1-\lambda_0\right)^2}{ \lambda_1 \left(2\lambda_1-\lambda_0\right)}\right)^2\right]  \\
    \leq & \frac{n^{2-2\alpha+2u}}{\epsilon^2}\EE\left[\left(R_n(\zeta_n)-n^{\alpha-1}\frac{2r_1\left(\lambda_1-\lambda_0\right)^2}{ \lambda_1 \left(2\lambda_1-\lambda_0\right)}\right)^2\right].  
\end{align*}
   Combined with \ref{ciE_Simp}, it suffices to show that
\begin{align*}
    &\lim_{n\to \infty} \frac{n^{2-2\alpha+2u}}{\epsilon^2}\left( \EE\left[ R_n(\zeta_n)^2 \right] - n^{2\alpha-2}\frac{4r_1^2\left(\lambda_1-\lambda_0\right)^4}{ \lambda_1^2 \left(2\lambda_1-\lambda_0\right)^2} \right) = 0.
\end{align*}
Define
\begin{align}
\tilde{R}_n(\zeta_n)=\sum_{i=1}^{\tilde{I}_n\left(\zeta_n\right)}\left(\frac{X_{i,n}(\zeta_n)}{n}\right)^2.
\end{align}
We can obtain that 
\begin{align*}
% \label{Eq: bound_of_second_moment_Rn}
     n^{2u}\left|\EE\left[ R_n(\zeta_n)^2 \right]-\EE\left[ \tilde{R}_n(\zeta_n)^2 \right]\right|  &\leq  n^{2u}\EE\left[ \left|R_n(\zeta_n)^2-\tilde{R}_n(\zeta_n)^2\right| \right]  \\
      &=  n^{2u}\EE_{\rho_n}\left[ \left|R_n(\zeta_n)^2-\tilde{R}_n(\zeta_n)^2\right| \right]  \\
     &\leq n^{2u}\EE_{\rho_n}\left[\left|\tilde{R}_n(\zeta_n)^2\left(\frac{n^4}{\left(Z_1^n\left(\zeta_n\right)\right)^4}-1\right)\right|\right].
%    & \leq \EE\left[\tilde{R}_n(\zeta_n)^4\right]^{1/2}\EE\left[\left(\frac{n^4}{\left(Z_1^n\left(\zeta_n\right)\right)^4}-1\right)^2\right]^{1/2} + \EE\left[ \tilde{R}_n(\zeta_n)^2 \right]
\end{align*}
From Lemma \ref{lemma:smaller Z1} and Corollary \ref{lemma:larger Z1}, we know $\PP_{\rho_n}\left(Z_1^n\left(\zeta_n\right)<\left(1-\delta_n\right)n\right)\leq \PP\left(Z_1^n\left(\zeta_n\right)<\left(1-\delta_n\right)n\right)$ and $\PP_{\rho_n}\left(Z_1^n\left(\zeta_n\right)>\left(1+\delta_n\right)n\right)\leq \PP\left(Z_1^n\left(\zeta_n\right)>\left(1+\delta_n\right)n\right)$
decays exponentially fast for any $\delta_n = \delta n^{-k}$ where $k<\min\{(1-\alpha)/2, \alpha/2\}$. Therefore, 
\begin{align*}
    & \quad n^{2u}\EE_{\rho_n}\left[\left|\tilde{R}_n(\zeta_n)^2\left(\frac{n^4}{\left(Z_1^n\left(\zeta_n\right)\right)^4}-1\right)\right|\right]\\
    & = n^{2u}\EE_{\rho_n}\left[\left|\tilde{R}_n(\zeta_n)^2\left(\frac{n^4}{\left(Z_1^n\left(\zeta_n\right)\right)^4}-1\right)\right|, Z_1^n\left(\zeta_n\right)<\left(1-\delta_n\right)n\right]\\
    &\quad \quad + n^{2u}\EE_{\rho_n}\left[\left|\tilde{R}_n(\zeta_n)^2\left(\frac{n^4}{\left(Z_1^n\left(\zeta_n\right)\right)^4}-1\right)\right|, Z_1^n\left(\zeta_n\right)\ge \left(1+\delta_n\right)n\right]\\
    & \quad \quad + n^{2u}\EE_{\rho_n}\left[\left|\tilde{R}_n(\zeta_n)^2\left(\frac{n^4}{\left(Z_1^n\left(\zeta_n\right)\right)^4}-1\right)\right|,(1+\delta_n)n> Z_1^n\left(\zeta_n\right)\ge \left(1-\delta_n\right)n\right]\\
    & \leq n^{4+2u}\EE_{\rho_n}\left[\tilde{R}_n(\zeta_n)^2\vert Z_1^n\left(\zeta_n\right)<\left(1-\delta_n\right)n\right]\PP_{\rho_n}\left(Z_1^n\left(\zeta_n\right)<\left(1-\delta_n\right)n\right)\\
    &\quad\quad + n^{2u}\EE_{\rho_n}\left[\frac{Z_1^n(\zeta_n)^4}{n^4},Z_1^n\left(\zeta_n\right)\ge \left(1+\delta_n\right)n\right]\\
    &\quad\quad + n^{2u}\left(1/(1-\delta_n)^4 -1 \right) \EE_{\rho_n}\left[\tilde{R}_n(\zeta_n)^2\right]\\
    & \leq n^{4+2u}\PP_{\rho_n}\left(Z_1^n\left(\zeta_n\right)<\left(1-\delta_n\right)n\right)+ n^{2u}\left(1/(1-\delta_n)^4 -1 \right) \EE_{\rho_n}\left[\tilde{R}_n(\zeta_n)^2\right]\\
    \quad\quad &+ n^{2u-4}\EE_{\rho_n}\left[Z_1^n(\zeta_n)^8\right]^{1/2}\PP_{\rho_n}\left(Z_1^n\left(\zeta_n\right)\ge \left(1+\delta_n\right)n\right)^{1/2}  \quad\quad \quad \mbox{ Cauthy-Schwarz Inequality  }\\
    & \overset{\text{(a)}}{\sim}  n^{2u}\left(1/(1-\delta_n)^4 -1 \right) \EE_{\rho_n}\left[\tilde{R}_n(\zeta_n)^2\right]\\
    & \sim 4n^{2u}\delta_n \EE_{\rho_n}\left[\tilde{R}_n(\zeta_n)^2\right],
\end{align*}
where we use the following fact in (a):
\begin{equation}
\EE_{\rho_n}\left[Z_1^n(\zeta_n)^8\right]\le C n^8, \text{ for some constant $C>0$}. \label{eqn_bound_on_eighth_moment_Z_1}
\end{equation}
To prove \eqref{eqn_bound_on_eighth_moment_Z_1}, we first observe that
\begin{align}
    \EE_{\rho_n}\left[Z_1^n(\zeta_n)^8\right] & =  \EE\left[Z_1^n(\zeta_n)^8\right] \nonumber\\
    & \overset{\text{(a)}}{\leq} \EE\left[\tilde{I}_n(\zeta_n)^8Z_1(\zeta_n)^8\right] \nonumber\\
    & = \EE\left[\tilde{I}_n(\zeta_n)^8\right]\EE\left[Z_1(\zeta_n)^8\right], \label{eqn:I_n_and_Z_1_zeta_n}
\end{align}
where $Z_1(t)$ is the population size of a branching process starting from a single resistant cell with birth rate $r_1$ and death rate $d_1$, and the inequality (a) holds because $Z_1^n(\zeta_n)$ is bounded by the process assuming all the clones are generated at $t =0$ and remember $\tilde{I}_n(\zeta_n)$ represents the number of all mutant clones generated before time $\zeta_n$. For the first term in \eqref{eqn:I_n_and_Z_1_zeta_n}, we have
\begin{align*}
    \EE\left[\tilde{I}_n(\zeta_n)^8\right] &= \EE\left[\EE\left[\tilde{I}_n(\zeta_n)^8\vert Z^n_0(t), t \leq \zeta_n\right]\right]\\
    & \overset{\text{(a)}}{=} \Theta\left( \EE\left[ \left(\int_0^{\zeta_n}n^{-\alpha}Z_0^n(t)dt\right)^8\right]\right),
\end{align*}
where we use the fact that the eighth moment of a Poisson distribution has the same order as the eighth power of its mean in equality (a). We can further show that
\begin{align*}
    &\EE\left[ \left(\int_0^{\zeta_n} n^{-\alpha}Z_0^n(t)dt\right)^8\right] \\
    = & n^{-8\alpha}\int_0^{\zeta_n} \cdots\int_0^{\zeta_n} \EE\left[ Z_0^n(x_1)\cdots Z_0^n(x_8) \right]  dx_1\cdots dx_8\\
    \overset{\text{(a)}}{\leq} & n^{-8\alpha}\left( \int_0^{\zeta_n} \EE\left[ Z_0^n(s)^8 \right]^{1/8}  ds\right)^8 \\
    = &  \Theta(n^{8-8\alpha}),  
\end{align*}
where we use Cauchy-Schwarz Inequality repeatedly in inequality (a). As can be seen in Lemma 5 from \cite{JKJ2014}, we have $\EE\left[Z_1(\zeta_n)^8\right]=\Theta(n^{8\alpha})$, which completes the proof of \eqref{eqn_bound_on_eighth_moment_Z_1}. To conclude, we have obtained that 
\begin{align}
n^{2u}\left|\EE\left[ R_n(\zeta_n)^2 \right]-\EE\left[ \tilde{R}_n(\zeta_n)^2 \right]\right|=\Theta(4n^{2u}\delta_n \EE_{\rho_n}\left[\tilde{R}_n(\zeta_n)^2\right]). \label{eqn:lemma_5_component_1}
\end{align} 
% To conclude, we have
% \begin{align*}
%     \EE_{\rho_n}\left[Z_1^n(\zeta_n)^8\right] \leq C n^8
% \end{align*}
% Therefore, $n^{2u-4}\EE_{\rho_n}\left[Z_1^n(\zeta_n)^8 \right]^{1/2}\PP_{\rho_n}\left(Z_1^n\left(\zeta_n\right)\ge \left(1+\delta_n\right)n\right)^{1/2} \to 0$. 
From the proof of Proposition 1 in \cite{CD2021}, we can obtain that
\begin{align}
    \EE_{\rho_n}\left[\tilde{R}_n(\zeta_n)^2\right]  =   \frac{4r_1^2\left(\lambda_1-\lambda_0\right)^4}{\lambda_1^2 \left(2\lambda_1-\lambda_0\right)^2}n^{-2+2\alpha}+C_1 n^{-2+\alpha} + C_2n^{-3+3\alpha} + o(n^{-2+\alpha}), \label{eqn:lemma_5_component_2}
\end{align}
where $C_1$ and $C_2$ are some constants.
% \begin{align}
%     \EE_{\rho_n}\left[\tilde{R}_n(\zeta_n)^2\right] & =   \frac{4r_1^2\left(\lambda_1-\lambda_0\right)^4}{\lambda_1^2 \left(2\lambda_1-\lambda_0\right)^2}n^{-2+2\alpha}- \frac{4r_1^2(\lambda_1-\lambda_0)^3n^{-2+\alpha}}{\lambda_1^2(3\lambda_1-2\lambda_0)(2\lambda_1-\lambda_0)} \nonumber \\
%     & \quad + \frac{24r_1(\lambda_1-\lambda_0)^4}{\lambda_1(4\lambda_1-\lambda_0)}n^{-3+3\alpha} + o(n^{-2+\alpha}). \label{eqn:lemma_5_component_2}
% \end{align}
With \eqref{eqn:lemma_5_component_1} and \eqref{eqn:lemma_5_component_2}, we have 
\begin{align*}
     &n^{2-2\alpha+2u}\left( \left|\EE\left[ R_n(\zeta_n)^2 \right] - n^{2\alpha-2}\frac{4r_1^2\left(\lambda_1-\lambda_0\right)^4}{ \lambda_1^2 \left(2\lambda_1-\lambda_0\right)^2} \right|\right)\\
     \leq & n^{2-2\alpha+2u} \left(\left|\EE\left[ R_n(\zeta_n)^2 \right]-\EE\left[ \tilde{R}_n(\zeta_n)^2 \right]\right| + \left| \EE\left[ \tilde{R}_n(\zeta_n)^2 \right] - n^{2\alpha-2}\frac{4r_1^2\left(\lambda_1-\lambda_0\right)^4}{ \lambda_1^2 \left(2\lambda_1-\lambda_0\right)^2}  \right|\right)\\
     = & \Theta\left(  4n^{2u}\delta_n n^{2-2\alpha}\EE_{\rho_n}\left[\tilde{R}_n(\zeta_n)^2\right] + n^{2u-\alpha} + n^{2u-1+\alpha}\right).
\end{align*}
Therefore, when $u<\min\{\alpha/4,(1-\alpha)/4\} $, we have 
$$
\lim_{n\to\infty}\PP_{\rho_n}\left( n^u\left| n^{1-\alpha}R_n\left(\zeta_n\right)- \frac{2r_1\left(\lambda_1-\lambda_0\right)^2}{ \lambda_1 \left(2\lambda_1-\lambda_0\right)} \right| > \epsilon\right) = 0.
$$ \qed
\end{proof}

In Lemma \ref{lemma:Rn_gamma_to_Rn_zeta}, we show that the Simpson's Index at the cancer recurrence time is very close to that at the deterministic limit time of cancer recurrence.
\begin{lemma}
\label{lemma:Rn_gamma_to_Rn_zeta}
For any $\epsilon > 0$, when $u<(1-\alpha)/4$, we have     
\begin{align*}
    \lim_{n\to\infty}\PP_{\rho_n}\left(n^u \left|n^{1-\alpha} R_n\left(\gamma_n\right)-  n^{1-\alpha} R_n\left(\zeta_n\right)\right| > \epsilon\right) = 0.
\end{align*}
\end{lemma}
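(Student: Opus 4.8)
The plan is to transfer the convergence at the deterministic time $\zeta_n$ (Lemma \ref{lemma:Convergence_of_Rn_zeta_n}) to the random recurrence time $\gamma_n$, using the sharp rate for $|\gamma_n-\zeta_n|$ from Theorem \ref{cip_gamma_faster}, in the same spirit as the passage from $I_n(\zeta_n)$ to $I_n(\gamma_n)$ in Lemma \ref{In_gamma_converge_to_In_zeta}. First I would fix $\delta_n=\delta n^{-k}$ with $k<(1-\alpha)/2$ and split according to whether $|\gamma_n-\zeta_n|<\delta_n$. By Theorem \ref{cip_gamma_faster} the complementary probability vanishes, so it suffices to bound
\[
\PP_{\rho_n}\!\left(n^u\left|n^{1-\alpha}R_n(\gamma_n)-n^{1-\alpha}R_n(\zeta_n)\right|>\epsilon,\ |\gamma_n-\zeta_n|<\delta_n\right),
\]
on which event $\gamma_n$ lies in the deterministic window $[\zeta_n-\delta_n,\zeta_n+\delta_n]$, while on $\rho_n$ the large-deviation bounds of Lemma \ref{lemma:smaller Z1} and Corollary \ref{lemma:larger Z1} confine $Z_1^n(\zeta_n)$ (and, after a short-window estimate, $Z_1^n$ throughout the window) to $(1\pm\delta_n)n$ up to exponentially small error.

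Writing $R_n(t)=S_n(t)/Z_1^n(t)^2$ with $S_n(t)=\sum_i X_{i,n}(t)^2$, I would control the difference over the short window by observing that both $S_n$ and $Z_1^n$ change only by a relative factor $e^{O(\lambda_1\delta_n)}=1+O(\delta_n)$ in the appropriate moments, since each clone evolves as a birth--death process whose moments scale multiplicatively; combined with the confinement of $Z_1^n$ near $n$, this makes $R_n(\gamma_n)$ and $R_n(\zeta_n)$ differ by relative size $O(\delta_n)$ against $R_n(\zeta_n)=\Theta(n^{\alpha-1})$. The quantitative input is the moment formula \eqref{Order: kth moment Z^n} together with the eighth-moment control \eqref{eqn_bound_on_eighth_moment_Z_1} and the explicit second-moment asymptotics of $\tilde{R}_n(\zeta_n)$ recorded in \eqref{eqn:lemma_5_component_2}.

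To convert this into a probability estimate I would apply Chebyshev's inequality to the reduced event and then use the elementary bound $\left(R_n(\gamma_n)-R_n(\zeta_n)\right)^2\le\left|R_n(\gamma_n)^2-R_n(\zeta_n)^2\right|$, mirroring the short-window computation of Lemma \ref{lemma:Convergence_of_Rn_zeta_n}. This yields $\EE_{\rho_n}\!\left[\left(R_n(\gamma_n)-R_n(\zeta_n)\right)^2\right]=\Theta\!\left(\delta_n\,\EE_{\rho_n}[\tilde{R}_n(\zeta_n)^2]\right)=\Theta(\delta_n\,n^{-2+2\alpha})$, so that after multiplying by the $n^{2u+2-2\alpha}$ coming from Chebyshev the controlling term is $\Theta(n^{2u}\delta_n)=\Theta(n^{2u-k})$. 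Letting $k\uparrow(1-\alpha)/2$ and requiring $2u<k$ then gives exactly the stated range $u<(1-\alpha)/4$.

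The main obstacle is that, unlike the counting process $I_n(t)$ handled in Lemma \ref{In_gamma_converge_to_In_zeta}, the Simpson index $R_n(t)$ is \emph{not} monotone in $t$: both individual clone sizes and the total $Z_1^n(t)$ may rise and fall, so there is no elementary sandwich between $R_n(\zeta_n-\delta_n)$ and $R_n(\zeta_n+\delta_n)$. Consequently the increment of the sum of squares $S_n$ over the \emph{random} window must be controlled in $L^2$ rather than pathwise, which is precisely where the eighth-moment bound on $Z_1^n$ and the second-moment asymptotics of $\tilde{R}_n$ enter. The quadratic loss incurred both by the Chebyshev step and by the crude passage $(R_n(\gamma_n)-R_n(\zeta_n))^2\le|R_n(\gamma_n)^2-R_n(\zeta_n)^2|$, together with the window constraint $k<(1-\alpha)/2$, is what caps the convergence rate at $u<(1-\alpha)/4$.
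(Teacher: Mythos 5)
Your skeleton matches the paper's (split on the event $\{|\gamma_n-\zeta_n|<\delta_n\}$ via Theorem \ref{cip_gamma_faster}, separate the normalization change $n^2$ vs.\ $Z_1^n(\zeta_n)^2$ from the time change, then Markov/Chebyshev), and your rate arithmetic $2u<k<(1-\alpha)/2$ does reproduce the threshold $u<(1-\alpha)/4$. But there is a genuine gap at the crux you yourself flag: you correctly observe that $R_n$ is not monotone so the increment of $S_n(t)=\sum_i X_{i,n}(t)^2$ over the \emph{random} window must be controlled, yet you supply no mechanism for doing so. The tools you cite --- the fixed-time moment formula \eqref{Order: kth moment Z^n}, the eighth-moment bound \eqref{eqn_bound_on_eighth_moment_Z_1}, and the second-moment asymptotics \eqref{eqn:lemma_5_component_2} --- are all evaluated at the deterministic time $\zeta_n$. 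The statement that ``moments scale multiplicatively by $1+O(\delta_n)$'' at fixed times does not bound $S_n(\gamma_n)$, because $\gamma_n$ is the hitting time of level $n$ by $Z_1^n$ itself: evaluating at $\gamma_n$ conditions on a large fluctuation of the very process you are trying to control. This is exactly why the paper's proof, after sandwiching $\gamma_n\in[\zeta_n-\delta_n,\zeta_n+\delta_n]$, bounds $\sup_{0\le\Delta t\le 2\delta_n}\left|X_{i,n}(\zeta_n-\delta_n+\Delta t)^2-X_{i,n}(\zeta_n-\delta_n)^2\right|$ clone by clone using Doob's maximal inequality for the submartingale $Z(t)^2$, combined with Wald's equation and the uniformity of the Poisson arrival times, the submartingale trick $\EE\left[\left(Z(t_+)^2-Z(t_-)^2\right)^2\right]\le\EE\left[Z(t_+)^4-Z(t_-)^4\right]$, and $\EE\left[Z(t)^4\right]=\Theta\left(e^{4\lambda_1 t}\right)$, yielding the $\Theta(n^{u-v/2})$ bound. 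Your claimed estimate $\EE_{\rho_n}\left[\left(R_n(\gamma_n)-R_n(\zeta_n)\right)^2\right]=\Theta\left(\delta_n\,\EE_{\rho_n}\left[\tilde{R}_n(\zeta_n)^2\right]\right)$ is asserted by analogy with ``the short-window computation of Lemma \ref{lemma:Convergence_of_Rn_zeta_n},'' but that computation concerns replacing the normalization $Z_1^n(\zeta_n)$ by $n$ at the single time $\zeta_n$ (the factor $\delta_n$ there comes from $1/(1-\delta_n)^4-1\approx 4\delta_n$), not a time increment; it cannot be transplanted to the random-time comparison without the maximal-inequality step. You also omit the clones born inside the window, which the paper treats as a separate term via Markov's inequality with $\EE\left[\tilde{I}_n(\zeta_n+\delta_n)-\tilde{I}_n(\zeta_n)\right]\EE\left[Z(\delta_n)^2\right]$.

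A secondary problem: you propose confining $Z_1^n(\zeta_n)$ to $(1\pm\delta_n)n$ via Lemma \ref{lemma:smaller Z1} and Corollary \ref{lemma:larger Z1}, but those results require $l>k$ and $l+k<\min\{1-\alpha,\alpha\}$, hence $k<\min\{(1-\alpha)/2,\alpha/2\}$. Feeding that scale into your $n^{2u}\delta_n$ term would cap your conclusion at $u<\min\{(1-\alpha)/4,\alpha/4\}$, strictly weaker than the stated $u<(1-\alpha)/4$ whenever $\alpha<1/2$. The paper avoids this by handling the normalization change with a direct variance bound, showing $n^u\left(1-Z_1^n(\zeta_n)^2/n^2\right)\to 0$ in probability and multiplying it against $n^{1-\alpha}\tilde{R}_n(\zeta_n)$, which is already known to converge by Lemma \ref{lemma:Convergence_of_Rn_zeta_n}. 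To repair your argument, replace the LD confinement with that variance estimate, add the new-clone term, and supply the Doob/Wald control of the supremum of squared clone sizes over the window; without those three ingredients the central moment bound in your Chebyshev step is unsupported.
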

\begin{proof}
When $\zeta_n < \gamma_n$, we have
\begin{align}
&\PP_{\rho_n}\left(n^u\left|n^{1-\alpha}\left( R_n(\gamma_n)-R_n(\zeta_n)\right)\right|>\epsilon\right) \nonumber\\
=&\PP_{\rho_n}\left(n^u\left|n^{1-\alpha}\left( \sum_{i=1}^{\tilde{I}_n(\gamma_n)}\frac{X_{i,n}(\gamma_n)^2}{n^2}-\sum_{i=1}^{\tilde{I}_n(\zeta_n)}\frac{X_{i,n}(\zeta_n)^2}{Z_1^n(\zeta_n)^2} \right)\right|>\epsilon\right)\nonumber\\
%\leq & \PP\left(n^u\left|n^{1-\alpha}\left( \sum_{i=1}^{I_n(\gamma_n)}\frac{X_{i,n}(\gamma_n)^2}{n^2}-\sum_{i=1}^{I_n(\zeta_n)}\frac{X_{i,n}(\zeta_n)^2}{Z_1^n(\zeta_n)^2} \right)\right|>\epsilon,I_n(\zeta_n) =I_n( \gamma_n)\right)+\PP\left( I_n(\zeta_n)\neq I_n(\gamma_n)\right)\\
\leq & \PP_{\rho_n}\left(n^u\left|n^{1-\alpha}\left( \sum_{i=1}^{\tilde{I}_n(\zeta_n)}\frac{X_{i,n}(\gamma_n)^2}{n^2}-\sum_{i=1}^{\tilde{I}_n(\zeta_n)}\frac{X_{i,n}(\zeta_n)^2}{Z_1^n(\zeta_n)^2} \right)\right|>\epsilon/2\right)\nonumber\\
& \quad +\PP_{\rho_n}\left( \mathlarger{\sum}_{i=\tilde{I}_n( \zeta_n)+1}^{\tilde{I}_n(\gamma_n)}\frac{X_{i,n}(\gamma_n)^2}{n^{1+\alpha-u}}>\epsilon/2\right)\nonumber\\
\leq & \PP_{\rho_n}\left(n^u\left|n^{1-\alpha}\left( \sum_{i=1}^{\tilde{I}_n(\zeta_n)}\frac{X_{i,n}(\gamma_n)^2}{n^2}-\sum_{i=1}^{\tilde{I}_n(\zeta_n)}\frac{X_{i,n}(\zeta_n)^2}{Z_1^n(\zeta_n)^2} \right)\right|>\epsilon/2\right) \label{eqn:first_term_recurrence_time_deterministic}\\
& \quad +\PP\left( \mathlarger{\sum}_{i=\tilde{I}_n( \zeta_n)+1}^{\tilde{I}_n(\gamma_n)}\frac{X_{i,n}(\gamma_n)^2}{n^{1+\alpha-u}}>\epsilon/2\right). \label{eqn:second_term_recurrence_time_deterministic}
\end{align}
%\begin{align*}
%    \PP\left( I_n(\zeta_n)\neq I_n(\gamma_n)\right)&=  \PP\left( 
% \mbox{ mutation occurs between } \zeta_n \mbox{ and } \gamma_n %\end{align*}
% Note in the above formula, it's possible that $\zeta_n > \gamma_n$. In such a case, we will simply define $X_{i,n}(\gamma_n)$ to be 0 if the generation of this clone comes after $\gamma_n$.
% Therefore, for \eqref{eqn:second_term_recurrence_time_deterministic}, we only need to consider the case where $\zeta_n < \gamma_n$ and we can obtain that
Here we define $X_{i,n}(t) = 0$ if the creation time of $i_{th}$ clone is after $t$. Therefore, when $\zeta_n\geq \gamma_n$, we have 
\begin{align}
&\PP_{\rho_n}\left(n^u\left|n^{1-\alpha}\left( R_n(\gamma_n)-R_n(\zeta_n)\right)\right|>\epsilon\right) \nonumber\\
=&\PP_{\rho_n}\left(n^u\left|n^{1-\alpha}\left( \sum_{i=1}^{\tilde{I}_n(\gamma_n)}\frac{X_{i,n}(\gamma_n)^2}{n^2}-\sum_{i=1}^{\tilde{I}_n(\zeta_n)}\frac{X_{i,n}(\zeta_n)^2}{Z_1^n(\zeta_n)^2} \right)\right|>\epsilon\right)\nonumber\\
= & \PP_{\rho_n}\left(n^u\left|n^{1-\alpha}\left( \sum_{i=1}^{\tilde{I}_n(\zeta_n)}\frac{X_{i,n}(\gamma_n)^2}{n^2}-\sum_{i=\tilde{I}_n(\gamma_n)+1}^{\tilde{I}_n(\zeta_n)}\frac{X_{i,n}(\gamma_n)^2}{n^2} -\sum_{i=1}^{\tilde{I}_n(\zeta_n)}\frac{X_{i,n}(\zeta_n)^2}{Z_1^n(\zeta_n)^2} \right)\right|>\epsilon/2\right)
\label{eqn:second_term_recurrence_time_deterministic_split}
\end{align}
Because of our definition above, we have $X_{i,n}(\gamma_n) = 0$ when $\tilde{I}_n(\gamma_n)+1 \leq i \leq \tilde{I}_n(\zeta_n)$. Therefore, \ref{eqn:second_term_recurrence_time_deterministic_split} can be reduced to \ref{eqn:first_term_recurrence_time_deterministic}.

For \eqref{eqn:second_term_recurrence_time_deterministic}, we can obtain that
\begin{align*}
    & \quad \PP\left( \sum_{i=\tilde{I}_n( \zeta_n)+1}^{\tilde{I}_n(\gamma_n)}\frac{X_{i,n}(\gamma_n)^2}{n^{1+\alpha-u}}>\epsilon/2\right)\\
    & \leq \PP\left(\sum_{i=\tilde{I}_n( \zeta_n)+1}^{\tilde{I}_n(\gamma_n)}\frac{X_{i,n}(\gamma_n)^2}{n^{1+\alpha-u}}>\epsilon/2,\gamma_n\leq \zeta_n+\delta_n\right)+ \PP\left( \gamma_n>\zeta_n+\delta_n\right),
\end{align*}
where $\delta_n$ is some constant, and the summation is zero if $\zeta_n > \gamma_n$. From Theorem \ref{cip_gamma_faster}, we know that when $\delta_n = n^{-v}, v < (1-\alpha)/2 $, $ \PP\left( \gamma_n>\zeta_n+\delta_n\right)$ converges to 0. Moreover, we have
\begin{align*}
&\PP\left( \sum_{i=\tilde{I}_n( \zeta_n)+1}^{\tilde{I}_n(\gamma_n)}\frac{X_{i,n}(\gamma_n)^2}{n^{1+\alpha-u}}>\epsilon/2,\gamma_n\leq \zeta_n+\delta_n\right)\\
\leq & \PP\left( \sum_{i=\tilde{I}_n( \zeta_n)+1}^{\tilde{I}_n(\zeta_n+\delta_n)}\frac{X_{i,n}(\gamma_n)^2}{n^{1+\alpha-u}}>\epsilon/2,\gamma_n\leq \zeta_n+\delta_n\right)\\
\leq & \PP\left( \sum_{i=\tilde{I}_n( \zeta_n)+1}^{\tilde{I}_n(\zeta_n+\delta_n)}\frac{Z^{(i)}(\delta_n)^2}{n^{1+\alpha-u}}>\epsilon/2\right),
\end{align*}
% \begin{align*}
% &\PP\left( \sum_{i=\tilde{I}_n( \zeta_n)+1}^{\tilde{I}_n(\gamma_n)}\frac{X_{i,n}(\gamma_n)^2}{n^{1+\alpha-u}}>\epsilon/2,\gamma_n\leq \zeta_n+\delta_n\right)\\
% \leq & \PP\left( \sum_{i=\tilde{I}_n( \zeta_n)+1}^{\tilde{I}_n(\gamma_n)}\frac{Z^{(i)}(\gamma_n-\zeta_n)^2}{n^{1+\alpha-u}}>\epsilon/2,\gamma_n\leq \zeta_n+\delta_n\right)\\
% \leq & \PP\left( \sum_{i=\tilde{I}_n( \zeta_n)+1}^{\tilde{I}_n(\zeta_n+\delta_n)}\frac{Z^{(n)}(\delta_n)^2}{n^{1+\alpha-u}}>\epsilon/2\right)\\
% \end{align*}
where $Z^{(i)}(t)$'s are i.i.d copies of $Z(t)$, which is the population size of a death-birth process starting from a single resistant cell. By Markov's inequality and 
\begin{align*}
\EE\left[ \tilde{I}_n(\zeta_n+\delta_n)-\tilde{I}_n( \zeta_n)\right] &= n^{1-\alpha}\int_{\zeta_n}^{\zeta_n+\delta_n}e^{\lambda_0 s}ds \\
& \sim n^{1-\alpha} e^{\lambda_0\zeta_n} \delta_n \sim n^{1-\alpha+\lambda_0\alpha/\lambda_1-v},
\end{align*}
we have 
\begin{align*}
    \PP\left( \sum_{i=\tilde{I}_n( \zeta_n)+1}^{\tilde{I}_n(\zeta_n+\delta_n)}\frac{Z^{(i)}(\delta_n)^2}{n^{1+\alpha-u}}>\epsilon/2\right) &\leq \frac{2}{\epsilon n^{1+\alpha-u}} \EE\left[ \tilde{I}_n(\zeta_n+\delta_n)-\tilde{I}_n( \zeta_n)\right]\EE\left[ Z(\delta_n)^2\right]\\
    &= \Theta(n^{u - 2\alpha-v +\lambda_0\alpha/\lambda_1}) \to 0.
\end{align*}
%When $\epsilon < 1$, we have 
%\begin{align*}
%    \PP\left( I_n(\zeta_n)\neq I_n(\gamma_n)\right)&= \PP\left( \left|  I_n(\zeta_n)-I_n(\gamma_n)\right|>\epsilon \right)
%\end{align*}
%Similar with the proof of Theorem \ref{cip_In}, we have
%\begin{align*}
%    \PP\left( \left|  I_n(\zeta_n)-I_n(\gamma_n)\right|>\epsilon \right) &\leq \PP\left( \left|  I_n(\zeta_n)-I_n(\gamma_n)\right|>\epsilon, |\zeta_n-\gamma_n|< \delta n^{-v} \right) + \PP\left( |\zeta_n-\gamma_n|> \delta n^{-v} \right)\\
%    &\leq \PP\left( \left|  I_n(\zeta_n+\delta n^{-v})-I_n(\zeta_n-\delta n^{-v})\right|>\epsilon \right) + \PP\left( |\zeta_n-\gamma_n|> \delta n^{-v} \right)
%\end{align*}
%where 
%\begin{align*}
%    \PP\left( \left|  I_n(\zeta_n+\delta n^{-v})-I_n(\zeta_n-\delta n^{-v})\right|>\epsilon \right) \sim \Theta(n^{2-2\alpha-2r\alpha/\lambda_1-v})
%\end{align*}
%We can verify for any $\alpha\in \left( \frac{\lambda_1}{\lambda_1+2r},1 \right)$, we can find $v < (1-\alpha)/2$ to make $2-2\alpha-2r\alpha/\lambda_1-v < 0$. Combined with Theorem \ref{cip_gamma_faster}, we have 
%\begin{align*}
%    \lim_{n\to\infty}\PP\left( I_n(\zeta_n)\neq I_n(\gamma_n)\right) = 0
%\end{align*}
Next we analyze \eqref{eqn:first_term_recurrence_time_deterministic}. We have
\begin{align}
&\PP_{\rho_n}\left(n^u\left|n^{1-\alpha}\left( \sum_{i=1}^{\tilde{I}_n(\zeta_n)}\frac{X_{i,n}(\gamma_n)^2}{n^2}-\sum_{i=1}^{\tilde{I}_n(\zeta_n)}\frac{X_{i,n}(\zeta_n)^2}{Z_1^n(\zeta_n)^2} \right)\right|>\epsilon/2\right) \nonumber\\
\leq&\PP\left(n^u\left|n^{1-\alpha}\left( \sum_{i=1}^{\tilde{I}_n(\zeta_n)}\frac{X_{i,n}(\gamma_n)^2}{n^2}-\sum_{i=1}^{\tilde{I}_n(\zeta_n)}\frac{X_{i,n}(\zeta_n)^2}{n^2} \right)\right|>\epsilon/2\right) \label{eqn:same_n_term}\\
&+  \PP_{\rho_n}\left(n^u\left|n^{1-\alpha}\left( \sum_{i=1}^{\tilde{I}_n(\zeta_n)}\frac{X_{i,n}(\zeta_n)^2}{n^2}-\sum_{i=1}^{\tilde{I}_n(\zeta_n)}\frac{X_{i,n}(\zeta_n)^2}{Z_1^n(\zeta_n)^2} \right)\right|>\epsilon/2\right). \label{eqn:same_time_term}
\end{align}
We work on \eqref{eqn:same_time_term} first and aim to show that
\begin{align}
    \lim\limits_{n\to\infty}\PP_{\rho_n}\left(n^u\left|n^{1-\alpha}\left( \sum_{i=1}^{\tilde{I}_n(\zeta_n)}\frac{X_{i,n}(\zeta_n)^2}{n^2}-\sum_{i=1}^{\tilde{I}_n(\zeta_n)}\frac{X_{i,n}(\zeta_n)^2}{Z_1^n(\zeta_n)^2} \right)\right|>\epsilon\right)=0. \label{eqn:desired_result_same_time_term}
\end{align}
For the term within the probability, we have
\begin{align*}
    &n^{1-\alpha+u}\left| \sum_{i=1}^{\tilde{I}_n(\zeta_n)}\frac{X_{i,n}(\zeta_n)^2}{n^2}-\sum_{i=1}^{\tilde{I}_n(\zeta_n)}\frac{X_{i,n}(\zeta_n)^2}{Z_1^n(\zeta_n)^2} \right|\\
    =& n^{1-\alpha+u}\left| \frac{1}{n^2}-\frac{1}{Z_1^n(\zeta_n)^2} \right|\sum_{i=1}^{\tilde{I}_n(\zeta_n)}X_{i,n}(\zeta_n)^2\\
    =& n^u\left( 1- \frac{n^2}{Z_1^n(\zeta_n)^2} \right)n^{1-\alpha}\sum\limits_{i=1}^{\tilde{I}_n(\zeta_n)}n^{-2}X_{i,n}(\zeta_n)^2\\
    =& n^u\left( 1- \frac{n^2}{Z_1^n(\zeta_n)^2} \right)n^{1-\alpha}\tilde{R}_n(\zeta_n)
\end{align*}
In the proof of Lemma \ref{lemma:Convergence_of_Rn_zeta_n}, we have shown that $n^{1-\alpha}\tilde{R}_n(\zeta_n)$ converges to a constant in probability. Moreover, by a direct calculation of $\EE\left[Z_1^n(\zeta_n)^4\right]$ (see the proof of Proposition 1 in \cite{CD2021}), we have $\va\left(n^u\left( 1- \frac{Z_1^n(\zeta_n)^2}{n^2} \right)\right)= \Theta(n^{4\alpha-4+2u})$.
% \begin{align*}
%     \EE\left[Z_1^n(\zeta_n)^4\right] = \EE\left[\left(\sum\limits_{i=1}^{I_n\left(\zeta_n\right)}X_{i,n}\right)^4\right]
% \end{align*}
% the calculation of which is messy but almost the same with the calculation in the proof of Proposition 1 of \cite{CD2021}. We will omit the details here. 
Because $u<(1-\alpha)/4$, we have 
\begin{align*}
    \lim_{n\to \infty} \PP\left(n^u\left| 1- \frac{Z_1^n(\zeta_n)^2}{n^2} \right|>\epsilon\right) = 0.
\end{align*}
By some algebra, we have when $n^u\left| 1- \frac{Z_1^n(\zeta_n)^2}{n^2} \right|<\epsilon$, 
\begin{align*}
    \frac{-\epsilon}{1-\epsilon n^{-u}} < n^u\left( 1- \frac{n^2}{Z_1^n(\zeta_n)^2} \right) <  \frac{\epsilon}{1+\epsilon n^{-u}},
\end{align*}
which implies
\begin{align*}
    \lim_{n\to \infty} \PP_{\rho_n}\left(n^u\left| 1- \frac{n^2}{Z_1^n(\zeta_n)^2} \right|>\epsilon\right) = 0.
\end{align*}
Therefore, by the continuous mapping theorem, $n^u\left( 1- \frac{n^2}{Z_1^n(\zeta_n)^2} \right)n^{1-\alpha}\tilde{R}_n(\zeta_n)$ converges to $0$ in probability and we have shown the desired result in \eqref{eqn:desired_result_same_time_term}.

Next, we work on \eqref{eqn:same_n_term} and aim to show that
\begin{align*}
   \lim\limits_{n\to \infty} \PP\left(n^u\left|n^{1-\alpha}\left( \sum_{i=1}^{\tilde{I}_n(\zeta_n)}\frac{X_{i,n}(\zeta_n)^2-X_{i,n}(\gamma_n)^2}{n^2} \right)\right|>\epsilon\right) = 0.
\end{align*}
For simplicity, we define the event $A_n = \left\{ \omega\big| |\gamma_n-\zeta_n| < \delta_n \right\}$ where $\delta_n = \delta n^{-v}$ and $\delta$ is a fixed constant. By Theorem \ref{cip_gamma_faster}, we know that when $v<(1-\alpha)/2$, $\lim\limits_{n\to\infty}\PP\left(A_n^C\right) = 0$. Hence, we have
\begin{align}
&\quad \PP\left(n^u\left|n^{1-\alpha}\left( \sum_{i=1}^{\tilde{I}_n(\zeta_n)}\frac{X_{i,n}(\zeta_n)^2-X_{i,n}(\gamma_n)^2}{n^2} \right)\right|>\epsilon\right) \nonumber\\
&\leq \PP\left(n^u\left|n^{1-\alpha}\left( \sum_{i=1}^{\tilde{I}_n(\zeta_n)}\frac{X_{i,n}(\zeta_n)^2-X_{i,n}(\gamma_n)^2}{n^2} \right)\right|>\epsilon, A_n\right)
%    &+ \PP\left(\left|n^{1-\alpha}\left( \sum_{i=1}^{I_n(\zeta_n)}\frac{X_{i,n}(\zeta_n)^2-X_{i,n}(\gamma_n)^2}{n^2} \right)\right|>\epsilon, B_n\right)\\
+ \PP\left(A_n^C\right). \label{eqn:two_terms_for_eqn_same_n_term}
\end{align}
For the first term in \eqref{eqn:two_terms_for_eqn_same_n_term}, we have 
\begin{align*}
    & \quad \left|n^{1-\alpha+u}\left( \sum_{i=1}^{\tilde{I}_n(\zeta_n)}\frac{X_{i,n}(\zeta_n)^2-X_{i,n}(\gamma_n)^2}{n^2} \right)\right|\cdot 1_{A_n}\\
    &\leq n^{-1+\alpha+u}\sup\limits_{0\leq\Delta t\leq 2\delta_n}n^{-2\alpha}\left|\sum_{i=1}^{\tilde{I}_n(\zeta_n)} \left(X_{i,n}(\zeta_n)^2 -X_{i,n}(\zeta_n-\delta_n+\Delta t)^2\right)\right|\\
    &\leq n^{-1+\alpha+u}\sum_{i=1}^{\tilde{I}_n(\zeta_n)}\sup\limits_{0\leq\Delta t\leq 2\delta_n}n^{-2\alpha}\left| X_{i,n}(\zeta_n)^2 -X_{i,n}(\zeta_n-\delta_n+\Delta t)^2\right|\\
    &\leq n^{-1+\alpha+u}\sum_{i=1}^{\tilde{I}_n(\zeta_n)} n^{-2\alpha}\left|  X_{i,n}(\zeta_n)^2 -X_{i,n}(\zeta_n-\delta_n )^2\right|\\
    & \quad +n^{-1+\alpha+u}\sum_{i=1}^{\tilde{I}_n(\zeta_n)}\sup\limits_{0\leq\Delta t\leq 2\delta_n}n^{-2\alpha}\left| X_{i,n}(\zeta_n-\delta_n +\Delta t )^2 -X_{i,n}(\zeta_n-\delta_n )^2\right|.
\end{align*}
Therefore, by Markov's inequality,
\begin{align}
    &\PP\left(n^u\left|n^{1-\alpha}\left( \sum_{i=1}^{\tilde{I}_n(\zeta_n)}\frac{X_{i,n}(\zeta_n)^2-X_{i,n}(\gamma_n)^2}{n^2} \right)\right|>\epsilon, A_n\right) \nonumber\\
    \leq & \frac{1}{\epsilon}\EE\left[ n^{-1+\alpha+u}\sum_{i=1}^{\tilde{I}_n(\zeta_n)}\sup\limits_{0\leq\Delta t\leq 2\delta_n}n^{-2\alpha}\left| X_{i,n}(\zeta_n-\delta_n+\Delta t)^2 -X_{i,n}(\zeta_n-\delta_n)^2\right|\right] \label{eqn:markov_first_term}\\
    + & \frac{1}{\epsilon}\EE\left[ n^{-1+\alpha+u}\sum_{i=1}^{\tilde{I}_n(\zeta_n)}n^{-2\alpha}\left| X_{i,n}(\zeta_n)^2 -X_{i,n}(\zeta_n-\delta_n)^2\right|\right].\label{eqn:markov_second_term}
\end{align}
We bound \eqref{eqn:markov_first_term} first. Note that
\begin{align*}
    n^{-1+\alpha}\EE\left[ \tilde{I}_n(\zeta_n) \right]\to C,
\end{align*}
where $C$ is some constant. If we condition on the value of $\tilde{I}_n(\zeta_n)$, then by the uniformity property of Poisson process, the generation time of each clone is independently and identically distributed, and we denote by $\tau$ a random variable which follows such distribution. In addition, because $\tilde{I}_n(\zeta_n)$ and  $X_{i,n}$ are independent, by Wald's equation, we have 
% where $C = \frac{\lambda_1}{-\lambda_0 r_1}$ is a fixed constant. 
% Furthermore, if we condition on the value of $\tilde{I}_n(\zeta_n)$, then for any $i$, we have $X_{i,n}(\zeta_n-\delta_n)=Z(\zeta_n-\delta_n-\tau), X_{i,n}(\zeta_n-\delta_n+\Delta t)= Z(\zeta_n-\delta_n-\tau+\Delta t)$ where $\tau$ is a random generation time of a clone by the uniformity property of Poisson process. Therefore, we have
% Because $\tilde{I}_n(\zeta_n), X_{i,n}$ are independent, by Wald's equation, we have 

\begin{align*}
    & \quad \frac{1}{\epsilon}\EE\left[ n^{-1+\alpha+u}\sum_{i=1}^{\tilde{I}_n(\zeta_n)}\sup\limits_{0\leq\Delta t\leq 2\delta_n}n^{-2\alpha}\left| X_{i,n}(\zeta_n-\delta_n+\Delta t)^2 -X_{i,n}(\zeta_n-\delta_n)^2\right|\right]\\
    &\sim \frac{C}{\epsilon}n^{-2\alpha+u}\EE\left[ \sup\limits_{0\leq\Delta t\leq 2\delta_n}\left| Z(\zeta_n-\delta_n-\tau+\Delta t)^2 -Z(\zeta_n-\delta_n-\tau)^2\right|\right]\\
    &\leq \frac{C}{\epsilon}n^{-2\alpha+u}\EE\left[ \left(\sup\limits_{0\leq\Delta t\leq 2\delta_n}\left| Z(\zeta_n-\delta_n-\tau+\Delta t)^2 -Z(\zeta_n-\delta_n-\tau)^2\right|\right)^2\right]^{1/2}\\
    &= \frac{C}{\epsilon}n^{-2\alpha+u}\EE\left[ \sup\limits_{0\leq\Delta t\leq 2\delta_n}\left( Z(\zeta_n-\delta_n-\tau+\Delta t)^2 -Z(\zeta_n-\delta_n-\tau)^2\right)^2\right]^{1/2}\\
    &= \frac{C}{\epsilon}n^{-2\alpha+u}\EE\left[\EE\left[ \sup\limits_{0\leq\Delta t\leq 2\delta_n}\left( Z(\zeta_n-\delta_n-\tau+\Delta t)^2 -Z(\zeta_n-\delta_n-\tau)^2\right)^2\Big|\tau\right]\right]^{1/2},
\end{align*}
where $Z(t)$ is the population size of a branching process starting from a single resistant cell. It's noteworthy that $t$ can be less than 0 for $Z(t)$. As a natural extension from the definition of $X_{i,n}(t)$, we set $Z(t) = 0$ when $t<0$. By Doob's Maximal Inequality and the independence between $\tau$ and $Z(t)$, we have
\begin{align*}
    & \quad \frac{C}{\epsilon}n^{-2\alpha+u}\EE\left[\EE\left[ \sup\limits_{0\leq\Delta t\leq 2\delta_n}\left( Z(\zeta_n-\delta_n-\tau+\Delta t)^2 -Z(\zeta_n-\delta_n-\tau)^2\right)^2\Big|\tau\right]\right]^{1/2}\\
    & \leq \frac{4C}{\epsilon}n^{-2\alpha+u}\EE\left[\EE\left[ \left( Z(\zeta_n-\tau+\delta_n)^2 -Z(\zeta_n-\tau-\delta_n)^2\right)^2\Big|\tau\right]\right]^{1/2}\\
    & = \frac{4C}{\epsilon}n^{-2\alpha+u}\EE\left[\EE\left[  Z(\zeta_n-\tau+\delta_n)^4 +Z(\zeta_n-\tau-\delta_n)^4- 2 Z(\zeta_n-\tau-\delta_n)^2Z(\zeta_n-\tau+\delta_n)^2\Big|\tau\right]\right]^{1/2}\\
    & \overset{\text{(a)}}{\leq} \frac{4C}{\epsilon}n^{-2\alpha+u}\EE\left[\EE\left[  Z(\zeta_n-\tau+\delta_n)^4 -Z(\zeta_n-\tau-\delta_n)^4\Big|\tau\right]\right]^{1/2}\\
    & \overset{\text{(b)}}{\leq} \frac{4C}{\epsilon}n^{-2\alpha+u}\EE\left[  Z(\zeta_n+\delta_n)^4 -Z(\zeta_n-\delta_n)^4\right]^{1/2} \\
    & \overset{\text{(c)}}{=} \Theta(n^{u-v/2}),
    % & \leq \frac{4C}{\epsilon}n^{-2\alpha+u}\EE\left[  Z(\zeta_n+\delta_n)^4 -Z(\zeta_n-\delta_n)^4\right]^{1/2} \sim \tilde{C}n^u\left( e^{4\lambda_1\delta_n}-e^{-4\lambda_1\delta_n} \right)^{1/2} = \Theta(n^{u-v/2}).
\end{align*}
where we use the fact that $Z(t)^2$ is a submartingale in inequality (a); $\EE\left[Z(t)^4\right]$ is increasing in $t$ in inequality (b); $\EE\left[Z(t)^4\right] = \Theta\left(e^{4\lambda_1t}\right)$ and $e^{4\lambda_1\zeta_n}=\Theta(n^{4\alpha})$ in equality (c).
% The last two inequality holds because $Z(t)^2$ is a submartingale and $\EE\left[Z(t)^4\right] = O\left(e^{4\lambda_1t}\right)$ will be increasing with $t $, as well as $e^{4\lambda_1\zeta_n}\sim n^{4\alpha}$.

Finally, we bound \eqref{eqn:markov_second_term}. By a similar argument, we have
\begin{align*}
    & \quad \frac{1}{\epsilon}\EE\left[ n^{-1+\alpha+u}\sum_{i=1}^{\tilde{I}_n(\zeta_n)}n^{-2\alpha}\left| X_{i,n}(\zeta_n)^2 -X_{i,n}(\zeta_n-\delta_n)^2\right|\right]\\
    &\sim \frac{C}{\epsilon}n^{-2\alpha+u}\EE\left[  \left| Z(\zeta_n-\tau)^2 -Z(\zeta_n-\tau-\delta_n)^2\right|\right]\\
    &\leq \frac{C}{\epsilon}n^{-2\alpha+u}\EE\left[  \left( Z(\zeta_n-\tau)^2 -Z(\zeta_n-\tau-\delta_n)^2\right)^2\right]^{1/2}\\
    & \leq \frac{C}{\epsilon}n^{-2\alpha+u}\EE\left[  Z(\zeta_n-\tau)^4 +Z(\zeta_n-\tau-\delta_n)^4- 2 Z(\zeta_n-\tau-\delta_n)^2Z(\zeta_n-\tau)^2\right]^{1/2}\\
    & \leq \frac{C}{\epsilon}n^{-2\alpha+u}\EE\left[  Z(\zeta_n-\tau)^4 -Z(\zeta_n-\tau-\delta_n)^4\right]^{1/2}\\
    & \leq \frac{C}{\epsilon}n^{-2\alpha+u}\EE\left[  Z(\zeta_n)^4 -Z(\zeta_n-\delta_n)^4\right]^{1/2} \\
    & = \Theta(n^{u-v/2}).
    % & \leq \frac{C}{\epsilon}n^{-2\alpha+u}\EE\left[  Z(\zeta_n)^4 -Z(\zeta_n-\delta_n)^4\right]^{1/2} \sim \frac{\tilde{C}}{4}n^u\left( 1 -e^{-4\lambda_1\delta_n}\right)^{1/2}  \sim \Theta(n^{u-v/2}).
\end{align*}
Because $u<(1-\alpha)/4$, we can find $v<(1-\alpha)/2$ so that $u<v/2$ which, combined with the previous analysis, implies that
\begin{align*}
   \lim\limits_{n\to \infty} \PP\left(n^u\left|n^{1-\alpha}\left( \sum_{i=1}^{\tilde{I}_n(\zeta_n)}\frac{X_{i,n}(\zeta_n)^2-X_{i,n}(\gamma_n)^2}{n^2} \right)\right|>\epsilon\right) = 0.
\end{align*}
To conclude, when $u<(1-\alpha)/4 $ we have     
\begin{align*}
    \lim_{n\to\infty}\PP_{\rho_n}\left(n^u \left|n^{1-\alpha} R_n\left(\gamma_n\right)-  n^{1-\alpha} R_n\left(\zeta_n\right)\right| > \epsilon\right) = 0.
\end{align*} \qed
\subsubsection*{Proof of Theorem \ref{cip_Rn}:}
    Theorem \ref{cip_Rn} follows directly from Proposition \ref{prop:rho_n}, Lemma \ref{lemma:Convergence_of_Rn_zeta_n} and Lemma \ref{lemma:Rn_gamma_to_Rn_zeta}.
    % Combining the results of Lemma \ref{lemma:Convergence_of_Rn_zeta_n} and Lemma \ref{lemma:Rn_gamma_to_Rn_zeta}, we can get the expected theorem easily.
\end{proof}
%\textcolor{red}{
%May be the result can be improved, but need some %work
%\begin{itemize}
%    \item What's the term discarded in $\EE\left[ \tilde{R}_n(\zeta_n)^2 \right]$
%    \item Could the convergence rate of $1- \frac{n^2}{Z_1^n(\zeta_n)^2}$ be faster
%    \item Change $\delta$ to $\delta_n$ in the last part, which is easy
%\end{itemize}
%}
%\textcolor{red}{
%03/06 
%Hi Kevin
%Sorry, I find it's still hard to improve this result because it will be hard to show 
%\begin{align*}
%    \lim_{n\to\infty}n^u n^{2-2\alpha}\EE\left[ \left|R_n(\zeta_n)^2-\tilde{R}_n(\zeta_n)^2\right| \right] = 0
%\end{align*}
%for some $u$.
%}
\subsection{Proof of Theorem \ref{thm: consistent estimators}}
\label{sec: consistent estimators}
\begin{lemma}
    \label{lemma: Z0_gamma>0}
    When $\alpha<-\frac{\lambda_1}{\lambda_0}$
    \begin{align*}
\lim_{n\to\infty}\PP\left(Z_0^n(\gamma_n)=0  \right) = 0
    \end{align*}
\end{lemma}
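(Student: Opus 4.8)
The plan is to exploit the fact that, under the hypothesis $\alpha<-\lambda_1/\lambda_0$, the sensitive population is still macroscopically large at the deterministic recurrence time $\zeta_n$. Setting $\Delta = 1+\lambda_0\alpha/\lambda_1$, the condition $\alpha<-\lambda_1/\lambda_0$ is equivalent to $\Delta>0$, and from \eqref{Order: zeta_n} together with $\Phi_0^n(t)=ne^{\lambda_0 t}$ we get $\Phi_0^n(\zeta_n)\sim n^{\Delta}\to\infty$. Since the expected number of surviving sensitive cells diverges, extinction of the entire sensitive population by time $\zeta_n$ should be a rare event. The first step is therefore to pass from the random time $\gamma_n$ to a deterministic time close to $\zeta_n$.

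For this reduction I use that $\{Z_0^n(t)=0\}$ is an absorbing (hence non-decreasing in $t$) event: once the sensitive population dies out it stays extinct. Fix $u<(1-\alpha)/2$ and set $\delta_n=\epsilon n^{-u}$. Splitting on whether $\gamma_n$ has overshot $\zeta_n+\delta_n$ gives
\begin{align*}
\PP\left(Z_0^n(\gamma_n)=0\right)\le \PP\left(Z_0^n(\gamma_n)=0,\ \gamma_n\le \zeta_n+\delta_n\right)+\PP\left(\gamma_n>\zeta_n+\delta_n\right).
\end{align*}
By Theorem \ref{cip_gamma_faster} the second term tends to $0$. For the first term, the monotonicity of the extinction event yields the inclusion $\{Z_0^n(\gamma_n)=0,\ \gamma_n\le\zeta_n+\delta_n\}\subseteq\{Z_0^n(\zeta_n+\delta_n)=0\}$, so it is bounded by $\PP\left(Z_0^n(\zeta_n+\delta_n)=0\right)$, a quantity depending on $Z_0^n$ only at a deterministic time. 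This is the step that decouples the estimate from the dependence between $\gamma_n$ and the sensitive population.

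It then remains to show $\PP\left(Z_0^n(\zeta_n+\delta_n)=0\right)\to 0$. The sensitive population stochastically dominates a collection of $n$ independent subcritical birth–death lineages with rates $r_0,d_0$ (mutation events only add cells and never remove a sensitive cell), so $\PP(Z_0^n(t)=0)\le p_0(t)^n$, where $p_0(t)=\frac{d_0(e^{\lambda_0 t}-1)}{r_0e^{\lambda_0 t}-d_0}$ is the extinction probability by time $t$ of a single such lineage. Writing $q_0(t)=1-p_0(t)=\frac{\lambda_0 e^{\lambda_0 t}}{r_0e^{\lambda_0 t}-d_0}$ for the survival probability, one has $q_0(t)\sim \frac{|\lambda_0|}{d_0}e^{\lambda_0 t}$ as $t\to\infty$. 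Since $\delta_n\to 0$ and $\zeta_n\sim(\alpha/\lambda_1)\log n$ by \eqref{Order: zeta_n}, this gives $q_0(\zeta_n+\delta_n)=\Theta\!\left(e^{\lambda_0\zeta_n}\right)=\Theta\!\left(n^{\Delta-1}\right)$, hence $n\,q_0(\zeta_n+\delta_n)=\Theta(n^{\Delta})$. Therefore
\begin{align*}
\PP\left(Z_0^n(\zeta_n+\delta_n)=0\right)\le\left(1-q_0(\zeta_n+\delta_n)\right)^n\le \exp\!\left(-n\,q_0(\zeta_n+\delta_n)\right)=\exp\!\left(-\Theta(n^{\Delta})\right)\to 0,
\end{align*}
using $\Delta>0$. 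Combining the two steps yields $\PP(Z_0^n(\gamma_n)=0)\to 0$.

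I expect the reduction from the random time $\gamma_n$ to a deterministic time to be the only genuinely delicate point: because $\gamma_n$ is determined by the resistant population $Z_1^n$, which is itself driven by $Z_0^n$, the two are dependent, and one cannot directly evaluate $\PP(Z_0^n(\gamma_n)=0)$. The absorbing nature of $\{Z_0^n=0\}$, together with the convergence rate in Theorem \ref{cip_gamma_faster}, is precisely what circumvents this; the remaining single-lineage extinction asymptotics are classical, and the role of the hypothesis $\alpha<-\lambda_1/\lambda_0$ is exactly to guarantee $\Delta>0$ so that $n\,q_0(\zeta_n+\delta_n)$ diverges.
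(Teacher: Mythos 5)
Your proof is correct, and its first half coincides exactly with the paper's argument: the paper also splits on $\{\gamma_n\leq\zeta_n+\epsilon\}$, uses the absorbing nature of sensitive-cell extinction to pass to $\PP\left(Z_0^n(\zeta_n+\epsilon)=0\right)$, and kills the overshoot term with Theorem \ref{cip_gamma_faster} (the paper uses a fixed $\epsilon$ rather than your shrinking window $\delta_n=\epsilon n^{-u}$; the shrinking window is harmless but unnecessary, since only convergence, not a rate, is claimed). Where you genuinely diverge is the estimate of the deterministic-time extinction probability. The paper proceeds by Chebyshev: it bounds $\PP\left(Z_0^n(\zeta_n+\epsilon)=0\right)$ by $\PP\left(\left|Z_0^n(\zeta_n+\epsilon)-ne^{\lambda_0(\zeta_n+\epsilon)}\right|>ne^{\lambda_0(\zeta_n+\epsilon)}/2\right)$ and uses the variance formula \eqref{Order: second moment Z^n}, obtaining a bound of order $\left(ne^{\lambda_0\zeta_n}\right)^{-1}=\Theta\left(n^{-\Delta}\right)$ with $\Delta=1+\lambda_0\alpha/\lambda_1>0$. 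You instead exploit the exact branching structure: since mutation events never remove sensitive cells, $Z_0^n$ is (dominated by, in fact equal in law to) $n$ independent subcritical $(r_0,d_0)$ lineages, and the classical survival probability $q_0(t)\sim\frac{|\lambda_0|}{d_0}e^{\lambda_0 t}$ gives $\PP\left(Z_0^n(\zeta_n+\delta_n)=0\right)\leq\left(1-q_0(\zeta_n+\delta_n)\right)^n\leq\exp\left(-\Theta\left(n^{\Delta}\right)\right)$. Your bound is therefore much stronger — exponentially small rather than polynomially small — at the cost of using the explicit linear birth–death extinction formula, whereas the paper's Chebyshev argument needs only the first two moments and would survive model perturbations that destroy the product structure. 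Both routes use the hypothesis $\alpha<-\lambda_1/\lambda_0$ in the same way, namely to ensure $ne^{\lambda_0\zeta_n}=\Theta\left(n^{\Delta}\right)\to\infty$, and both are complete proofs of the lemma.
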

\begin{proof}
    \begin{align*}
        \PP\left(Z_0^n(\gamma_n)=0  \right) & \leq \PP\left(Z_0^n(\gamma_n)=0 , \gamma_n\leq\zeta_n+\epsilon \right) +\PP\left(\gamma_n>\zeta_n+\epsilon  \right) \\
&\leq \PP\left(Z_0^n(\zeta_n +\epsilon)=0  \right) +\PP\left(\gamma_n>\zeta_n+\epsilon  \right)
    \end{align*}
    From Theorem \ref{cip_gamma_faster} we know $\PP\left(\gamma_n>\zeta_n+\epsilon  \right)\to 0$  and we have
    \begin{align*}
        \PP\left(Z_0^n(\zeta_n +\epsilon)=0  \right) &\leq \PP\left(Z_0^n(\zeta_n +\epsilon)<ne^{\lambda_0(\zeta_n+\epsilon)}/2 \right)\\
        &\leq \PP\left(\left|Z_0^n(\zeta_n +\epsilon)-ne^{\lambda_0(\zeta_n+\epsilon)}\right|>ne^{\lambda_0(\zeta_n+\epsilon)}/2 \right)\\
        &\leq 4\var(Z_0^n(\zeta_n+\epsilon))/n^2e^{2\lambda_0(\zeta_n+\epsilon)} 
    \end{align*}
    From \ref{Order: second moment Z^n}, we know $$4\var(Z_0^n(\zeta_n+\epsilon))/n^2e^{2\lambda_0(\zeta_n+\epsilon)} = \frac{4n(r_0+d_0)e^{\lambda_0(\zeta_n+\epsilon)}(e^{\lambda_0(\zeta_n+\epsilon)}-1)}{\lambda_0n^2e^{2\lambda_0(\zeta_n+\epsilon)}} \sim \frac{1}{ne^{\lambda_0(\zeta_n+\epsilon)}}\to 0$$
    \qed
\end{proof}
\subsubsection*{Proof of Theorem \ref{thm: consistent estimators}}
\begin{proof}
    \begin{itemize}
        \item \boldsymbol{$\hat{\lambda}^{(n)}_0$}: We show a stronger convergence result for $\hat{\lambda}^{(n)}_0$. That is for $\epsilon > 0$ and $u<\min\left\{  (1-\alpha)/2, (1+\lambda_0\alpha/\lambda_1)/2\right\}$,
        \begin{align*}
           \lim_{n\to \infty}\PP\left( n^{u}\left| \hat{\lambda}^{(n)}_0-\lambda_0  \right| > \epsilon\right) =0.
        \end{align*}
Let $\epsilon_n = \epsilon n^{-v} $, where $u<v<(1-\alpha)/2$ By Theorem \ref{cip_gamma_faster} and Lemma \ref{lemma: Z0_gamma>0} we have
\begin{align*}
& \lim_{n\to\infty}\PP\left(\left|\zeta_n-\gamma_n\right|>\epsilon_n\right)= 0, \text{and}\\
& \lim_{n\to\infty}\PP\left(Z_0^n(\gamma_n)=0  \right) = 0.
\end{align*}
        Define $A_n = \left\{ \omega | \left|\zeta_n-\gamma_n\right|<\epsilon_n, 
 Z_0^n(\gamma_n)>0  \right\}$ and we have $\lim\limits_{n\to\infty}\PP\left(A_n^C  \right) = 0$. When $\omega \in A_n$, we have $\zeta_n-\epsilon_n<\gamma_n$ and $ 
 Z_0^n(\gamma_n)>0$, and therefore we must have $Z_0^n(\zeta_n-\epsilon_n)>0$ and
        \begin{align*}
           & n^{u}\left| \hat{\lambda}^{(n)}_0-\lambda_0  \right| \\
           = & n^{u} \left| \frac{1}{\gamma_n}\log{\frac{Z^n_0(\gamma_n)}{n}}-\lambda_0  \right|\\
             = &n^{u} \bigg|\frac{1}{\gamma_n}\log{\frac{Z^n_0(\gamma_n)}{n}}-\frac{1}{\gamma_n}\log{\frac{Z^n_0(\zeta_n-\epsilon_n)}{n}}+\frac{1}{\gamma_n}\log{\frac{Z^n_0(\zeta_n-\epsilon_n)}{n}}-\frac{1}{\zeta_n}\log{\frac{Z^n_0(\zeta_n-\epsilon_n)}{n}}  \\
             &+\frac{1}{\zeta_n}\log{\frac{Z^n_0(\zeta_n-\epsilon_n)}{n}}-\lambda_0\bigg|\\
             \leq & \frac{n^{u}}{\gamma_n}\left| \log \frac{Z^n_0(\gamma_n)}{Z^n_0(\zeta_n-\epsilon_n)} \right|+\frac{n^u\epsilon_n}{\gamma_n \zeta_n}  \log{\frac{n}{Z^n_0(\zeta_n-\epsilon_n)}} + \frac{n^u}{\zeta_n}\left|\log{\frac{Z^n_0(\zeta_n-\epsilon_n)}{ne^{\lambda_0\zeta_n}}}\right|.
        \end{align*}
        Because $\zeta_n = \Theta(\log n)$, $\zeta_n-\gamma_n$ converges to $0$ in probability, and $\log{\frac{n}{Z^n_0(\zeta_n-\epsilon_n)}}< \log{n} $, the second term converges to $0$ in probability. It remains to show that $\frac{n^{u}}{\log{n}}\left| \log \frac{Z^n_0(\gamma_n)}{Z^n_0(\zeta_n-\epsilon_n)} \right| $  and $\frac{n^{u}}{\log{n}}\left| \log \frac{Z^n_0(\zeta_n-\epsilon_n)}{ne^{\lambda_0\zeta_n}} \right| $ converge to $0$ in probability, which are equivalent to
        \begin{align}
            & \left(\left(\frac{Z^n_0(\gamma_n)}{Z^n_0(\zeta_n-\epsilon_n)}\vee \frac{Z^n_0(\zeta_n-\epsilon_n)}{Z^n_0(\gamma_n)}\right)-1 +1\right)^{n^u/\log{n}}\xrightarrow{p} 1, \label{converge_1_twoterm_1}\\ 
            & \left(\left(\frac{Z^n_0(\zeta_n-\epsilon_n)}{ne^{\lambda_0\zeta_n}}\vee \frac{ne^{\lambda_0\zeta_n}}{Z^n_0(\zeta_n-\epsilon_n)}\right)-1 +1\right)^{n^u/\log{n}} \xrightarrow{p} 1. \label{converge_1_twoterm_2}
        \end{align}
        
        % $\left(\frac{Z^n_0(\gamma_n)}{Z^n_0(\zeta_n-\epsilon_n)}-1 +1\right)^{n^u/\log{n}}$ and $\left(\frac{Z^n_0(\zeta_n-\epsilon_n)}{ne^{\lambda_0\zeta_n}}-1 +1\right)^{n^u/\log{n}}$ converge to 1 in probability.

        Consider a non-negative random variable $X(n)$ such that there exists a positive constant $C$ such that $\lim\limits_{n\rightarrow \infty} \PP( n^uX(n) \leq  C)=1$. Then on the event  $\{n^uX(n)\leq C\}$, 
        % $\omega \in B_n = \left\{ \omega: n^u|X(u)|(\omega)\leq 2C \right\}$, 
        we have
        \begin{align*}
            &\left(X(n)+1\right)^{n^u/\log{n}} \\
            =& \left(\frac{n^uX(n)}{\log n}\frac{\log n}{n^u}+1\right)^{n^u/\log{n}}\\
            \leq& \exp\left[\frac{C}{\log n}\right],
        \end{align*}
        which implies that $\left(X(n)+1\right)^{n^u/\log{n}}$ converges to $1$ in probability. With this result, to prove \eqref{converge_1_twoterm_1} and \eqref{converge_1_twoterm_2}, it suffices to show that
        \begin{align}
        n^u\left(\left(\frac{Z^n_0(\gamma_n)}{Z^n_0(\zeta_n-\epsilon_n)}\vee \frac{Z^n_0(\zeta_n-\epsilon_n)}{Z^n_0(\gamma_n)}\right)-1\right)  \label{eqn:estimator_lambda_0_bound_1}
        \end{align}
        and 
        \begin{align}
        n^u\left(\left(\frac{Z^n_0(\zeta_n-\epsilon_n)}{ne^{\lambda_0\zeta_n}}\vee \frac{ne^{\lambda_0\zeta_n}}{Z^n_0(\zeta_n-\epsilon_n)}\right)-1\right) \label{eqn:estimator_lambda_0_bound_2}
        \end{align}
        are bounded by a constant in probability. We analyze \eqref{eqn:estimator_lambda_0_bound_2} first. From \eqref{Order: second moment Z^n} and the condition that $u<1/2 +\lambda_0 \alpha/2\lambda_1$, we have 
        \begin{align*}
            &\var\left(n^u\frac{Z^n_0(\zeta_n-\epsilon_n)}{ne^{\lambda_0\zeta_n}}\right)
            =\Theta \left(n^{2u-1}e^{-\lambda_0 (\zeta_n+\epsilon_n)}\right) = \Theta(n^{-\frac{\lambda_0 \alpha}{\lambda_1}-1+2u})\to 0, \text{ and} \\
            &\EE\left[ n^u\left(\frac{Z^n_0(\zeta_n-\epsilon_n)}{ne^{\lambda_0\zeta_n}}-1\right)\right] =n^u\left( e^{-\lambda_0 \epsilon_n}-1\right) \to 0.
        \end{align*}
        Therefore, $n^u\left(Z^n_0(\zeta_n-\epsilon_n)/ne^{\lambda_0\zeta_n}-1\right)$ converges to $0$ in probability. By a similar argument to that in the proof of Lemma \ref{lemma:Rn_gamma_to_Rn_zeta}, we can obtain that \eqref{eqn:estimator_lambda_0_bound_2} is bounded by a constant in probability.
        
        We then analyze \eqref{eqn:estimator_lambda_0_bound_1}. 
        \begin{align*}
            &n^u\left(\frac{Z^n_0(\gamma_n)}{Z^n_0(\zeta_n-\epsilon_n)} -1\right) \\
            =& n^u\left(\frac{Z^n_0(\gamma_n)}{Z^n_0(\zeta_n-\epsilon_n)} -1\right)1_{\{\gamma_n\in \left(\zeta_n-\epsilon_n, \zeta_n+\epsilon_n\right)\}}+n^u\left(\frac{Z^n_0(\gamma_n)}{Z^n_0(\zeta_n-\epsilon_n)} -1\right)1_{\{\gamma_n\notin \left(\zeta_n-\epsilon_n, \zeta_n+\epsilon_n\right)\}}.
        \end{align*}
        By Theorem \ref{cip_gamma_faster}, we can safely discard the second term. For the first term, we have
        \begin{align*}
            & \quad n^u\left(\frac{Z^n_0(\gamma_n)}{Z^n_0(\zeta_n-\epsilon_n)} -1\right)1_{\{\gamma_n\in \left(\zeta_n-\epsilon_n, \zeta_n+\epsilon_n\right)\}}\\
            & \le 1_{\{\gamma_n\in \left(\zeta_n-\epsilon_n, \zeta_n+\epsilon_n\right)\}} \frac{1}{Z^n_0(\zeta_n-\epsilon_n)}\sum_{i = 1}^{Z^n_0(\zeta_n - \epsilon_n)}n^u\left(\sup\limits_{t\in [0, 2\epsilon_n]}B_0^{(i)}(t)-1\right), \text{ and}
        \end{align*}
        \begin{align*}
            & \quad n^u\left(\frac{Z^n_0(\gamma_n)}{Z^n_0(\zeta_n-\epsilon_n)} -1\right)1_{\{\gamma_n\in \left(\zeta_n-\epsilon_n, \zeta_n+\epsilon_n\right)\}}\\
            & \ge 1_{\{\gamma_n\in \left(\zeta_n-\epsilon_n, \zeta_n+\epsilon_n\right)\}} \frac{1}{Z^n_0(\zeta_n-\epsilon_n)}\sum_{i = 1}^{Z^n_0(\zeta_n - \epsilon_n)}n^u\left(\inf\limits_{t\in [0, 2\epsilon_n]}B_0^{(i)}(t)-1\right),
        \end{align*}
        where $\{B_0^{(i)}(t)\}$'s are i.i.d copies of the size of a branching process starting from a single sensitive cell; the upper bound is non-negative and the lower bound is non-positive. Notice that the right-continuous process $\{B_0^i(t)\}$'s are  independent with $Z^n_0(\zeta_n-\epsilon_n)$. Hence, we have 
        \begin{align*}
           &\lim_{n\to\infty} \EE\left[\frac{1}{Z^n_0(\zeta_n-\epsilon_n)}\sum_{i = 1}^{Z^n_0(\zeta_n - \epsilon_n)}n^u\left(\sup\limits_{t\in [0, 2\epsilon_n]}B_0^{(i)}(t)-1\right)   \right] \\
           =& \lim_{n\to\infty} \EE\left[ n^u\left(\sup\limits_{t\in [0, 2\epsilon_n]}B_0^{(i)}(t)-1\right)  \right] \\
           \le &  \lim_{n\to\infty}  n^u\left(\EE\left[\sup\limits_{t\in [0, 2\epsilon_n]}\hat{B}_0^{(i)}(t)\right] -1\right)\\
           =& \lim_{n\to\infty}  n^u\left(\EE\left[\hat{B}_0^{(i)}(2\epsilon_n)\right] -1\right)\\
        =&  \lim_{n\to\infty}  n^u\left(e^{2r_0\epsilon_n} -1\right)\\
        =& 0,
        \end{align*}
       where $\hat{B}_0^{(i)}(t)$'s are i.i.d copies of the size of branching process starting from a cell with birth rate $r_0$ and death rate $0$. Meanwhile,
        \begin{align*}
           &\lim_{n\to\infty} \EE\left[\frac{1}{Z^n_0(\zeta_n-\epsilon_n)}\sum_{i = 1}^{Z^n_0(\zeta_n - \epsilon_n)}n^u\left(1-\inf\limits_{t\in [0, 2\epsilon_n]}B_0^{(i)}(t)\right)   \right] \\
           =& \lim_{n\to\infty} \EE\left[ n^u\left(1- \inf\limits_{t\in [0, 2\epsilon_n]}B_0^{(i)}(t)\right)  \right] \\
           \geq&  \lim_{n\to\infty}  n^u\left(1-\EE\left[\inf\limits_{t\in [0, 2\epsilon_n]}\tilde{B}_0^{(i)}(t)\right] \right)\\
           =& \lim_{n\to\infty}  n^u\left(1-\EE\left[\tilde{B}_0^{(i)}(2\epsilon_n)\right] \right)\\
            =&  \lim_{n\to\infty}  n^u\left(1-e^{-2d_0\epsilon_n} \right)\\
            =& 0,
        \end{align*}
        where $\tilde{B}_0^{(i)}(t)$'s are i.i.d copies of the size of branching process starting from a cell with birth rate $0$ and death rate $d_0$. Therefore, by Markov's inequality, the upper bound and lower bound of $n^u\left(Z_0^n(\gamma_n)/Z_0^n(\zeta_n-\epsilon_n)-1\right)$ both converge to $0$ in probability, which implies that $n^u\left(Z_0^n(\gamma_n)/Z_0^n(\zeta_n-\epsilon_n)-1\right)$ converges to $0$ in probability as well. Combined with the previous analysis, we have shown that 
    \begin{align*}
        \lim_{n\to\infty}P\left(n^u \left|\hat{\lambda}^{(n)}_0 - \lambda_0 \right|>\epsilon\right) = 0.
    \end{align*}
    \item \boldsymbol{$\hat{\lambda}^{(n)}_1$}:
    Similar to $\hat{\lambda}^{(n)}_0$, we can prove a stronger convergence result for $\hat{\lambda}^{(n)}_1$. Specifically, we can show that for any $u < \min\{(1-\alpha)/4,  -\lambda_0\alpha/2\lambda_1, (1+\lambda_0\alpha/\lambda_1)/2 \}$ and $\epsilon > 0$,
        \begin{align*}
           \lim_{n\to \infty}\PP\left( n^{u}\left| \hat{\lambda}^{(n)}_1-\lambda_1  \right| > \epsilon\right) =0.
        \end{align*}
    Recall that $U_n =  \frac{\sqrt{I_n\cdot R_{n}}}{\sqrt{I_n\cdot R_{n}-2}} - 1$. By Theorem \ref{cip_In} and Theorem \ref{cip_Rn}, we can show that for any $v<\min\{(1-\alpha)/4,  -\lambda_0\alpha/2\lambda_1 \}$,
        \begin{align}
           \lim_{n\to \infty}\PP\left( n^{v}\left| U_n-U  \right| > \epsilon\right) =0,\label{eqn:u_n_and_u_in_probability}
        \end{align}
    where $U=-\frac{\lambda_0}{\lambda_1}$. Note that
     \begin{align*}
       \hat{\lambda}^{(n)}_1 - \lambda_1  &= -\frac{\hat{\lambda}^{(n)}_0}{U_n} + \frac{\hat{\lambda}^{(n)}_0}{U} - \frac{\hat{\lambda}^{(n)}_0}{U} + \frac{\lambda_0}{U}\\
       & = -\hat{\lambda}^{(n)}_0\left( \frac{1}{U_n}-\frac{1}{U} \right)  - \frac{1}{U}\left( \hat{\lambda}^{(n)}_0 - \lambda_0\right).
    \end{align*}
    By \eqref{eqn:u_n_and_u_in_probability} and the result on $\hat{\lambda}^{(n)}_0$ such that for any $w<\min\left\{  (1-\alpha)/2, (1+\lambda_0\alpha/\lambda_1)/2\right\}$,
    % \begin{align*}
    %     U_n \sim -\lambda_0/\lambda_1  + o(n^{-v}), 
    % \end{align*}
    % where $v$ can be any number less than $\min\{(1-\alpha)/4,  r\alpha/2\lambda_1 \}$. Then
    % $n^v\left(U_n-U\right)$ converges to 0 in probability where $U$ is defined to be $ -\lambda_0/\lambda_1$. 
    % Combined with when $w< (1-\alpha)/2$
        \begin{align*}
        \lim_{n\to\infty}P\left(n^w \left|\hat{\lambda}^{(n)}_0 - \lambda_0 \right|>\epsilon\right) = 0,
    \end{align*}
    we can obtain that for any $u < \min\{(1-\alpha)/4,  -\lambda_0\alpha/2\lambda_1 ,(1+\lambda_0\alpha/\lambda_1)/2\} $, we have
    \begin{align}
          \lim_{n\to\infty}P\left(n^u \left|\hat{\lambda}^{(n)}_1 - \lambda_1 \right|>\epsilon\right) = 0. \label{eqn:lambda_1_hat_and_lambda_probability}
    \end{align}
    %By continuous mapping theorem, we know $\hat{\lambda}^{(n)}_1$ converges to $\lambda_1$ in probability.

\item \boldsymbol{$\hat{r}^{(n)}_1$}: Note that
\begin{align*}
\hat{r}^{(n)}_1 &= \left( \frac{1}{U_n}+1 \right)\frac{n\hat{\lambda}^{(n)}_1}{I_ne^{\hat{\lambda}^{(n)}_1\gamma_n}}
\\
& =  \left( \frac{1}{U_n}+1 \right)\hat{\lambda}^{(n)}_1\frac{n^{1-\alpha}}{I_n}e^{\hat{\lambda}^{(n)}_1(\zeta_n-\gamma_n)}n^{\alpha}e^{-\hat{\lambda}^{(n)}_1\zeta_n}\\
& =  \left( \frac{1}{U_n}+1 \right)\hat{\lambda}^{(n)}_1\frac{n^{1-\alpha}}{I_n}e^{\hat{\lambda}^{(n)}_1(\zeta_n-\gamma_n)}\left(n^{\alpha}e^{-\lambda_1\zeta_n}\right)e^{(\lambda_1-\hat{\lambda}^{(n)}_1)\zeta_n}.
\end{align*}
By \eqref{eqn:u_n_and_u_in_probability}, \eqref{eqn:lambda_1_hat_and_lambda_probability}, Theorem \ref{cip_In}, Theorem \ref{cip_gamma_faster}, and \eqref{Order: zeta_n}, we can obtain that each term in the above expression $(\left( \frac{1}{U_n}+1 \right), \hat{\lambda}^{(n)}_1, \frac{n^{1-\alpha}}{I_n}, e^{\hat{\lambda}^{(n)}_1(\zeta_n-\gamma_n)}, \left(n^{\alpha}e^{-\lambda_1\zeta_n}\right), e^{(\lambda_1-\hat{\lambda}^{(n)}_1)\zeta_n})$ converges in probability to the corresponding limit  $(\frac{\lambda_0-\lambda_1}{\lambda_0}, \lambda_1, -\frac{\lambda_0 r_1}{\lambda_1}, 1, \frac{1}{\lambda_1-\lambda_0}, 1)$, respectively. Hence, $\hat{r}^{(n)}_1 $ converges in probability to $r_1$ by the Continuous Mapping Theorem.

% Notice $\zeta_n = o(n^u)$ for any $u>0$. Therefore each term above converges to its corresponding constant limitation respectively. Therefore, $\hat{r}^{(n)}_1 $ converges to $r_1$ by Continuous Mapping Theorem.
    \item \boldsymbol{$\hat{\alpha}^{(n)}_1$}: By Theorem \ref{cip_In}, we have 
    \begin{align*}
         \lim_{n\to\infty}\PP\left(\left| \frac{1}{n^{1-\alpha}}I_n\left(\gamma_n\right)+\frac{\lambda_1}{\lambda_0 r_1} \right| <\epsilon \right) = 1
    \end{align*}
    which implies that 
    $$
   \lim_{n\to\infty} P\left(c n^{1-\alpha}\leq  I_n(\gamma_n) \leq C n^{1-\alpha}\right) =1,
    $$
    where $c$ and $C$ are some positive constants. It follows that
    $$
    \hat{\alpha}^{(n)} - \alpha = 1 - \log_nI_n -\alpha + \log_n\left( \frac{\hat{\lambda}^{(n)}_1}{- \hat{\lambda}^{(n)}_0\hat{r}^{(n)}_1} \right) \xrightarrow{p} 0.
    $$
    \end{itemize}
\end{proof}

\newpage
\bibliographystyle{plain}
\bibliography{ref}

\begin{thebibliography}{10}

\bibitem{bp}
K.B. Athreya and P.E. Ney.
\newblock {\em Branching Processes}.
\newblock Springer\&Verlag, 1972.

\bibitem{avanzini2019cancer}
Stefano Avanzini and Tibor Antal.
\newblock Cancer recurrence times from a branching process model.
\newblock {\em PLoS computational biology}, 15(11):e1007423, 2019.

\bibitem{Brouard2023genetic}
Vianney Brouard.
\newblock Genetic composition of supercritical branching populations under
  power law mutation rates.
\newblock {\em arXiv preprint arXiv:2309.12055}, 2023.

\bibitem{Cheek2020genetic}
David Cheek and Tibor Antal.
\newblock Genetic composition of an exponentially growing cell population.
\newblock {\em Stochastic Processes and their Applications},
  130(11):6580--6624, 2020.

\bibitem{Cibulskis2013}
Kristian Cibulskis, Michael~S Lawrence, Scott~L Carter, Andrey Sivachenko,
  David Jaffe, Carrie Sougnez, Stacey Gabriel, Matthew Meyerson, Eric~S Lander,
  and Gad Getz.
\newblock {Sensitive detection of somatic point mutations in impure and
  heterogeneous cancer samples}.
\newblock {\em Nature Biotechnology}, 31(3):213--219, mar 2013.

\bibitem{Dentro2017}
Stefan~C. Dentro, David~C. Wedge, and Peter Van~Loo.
\newblock Principles of reconstructing the subclonal architecture of cancers.
\newblock {\em Cold Spring Harbor Perspectives in Medicine}, 7(8):a026625,
  2017.

\bibitem{ding2012clonal}
Li~Ding, Timothy~J Ley, David~E Larson, Christopher~A Miller, Daniel~C Koboldt,
  John~S Welch, Julie~K Ritchey, Margaret~A Young, Tamara Lamprecht, Michael~D
  McLellan, et~al.
\newblock Clonal evolution in relapsed acute myeloid leukaemia revealed by
  whole-genome sequencing.
\newblock {\em Nature}, 481(7382):506--510, 2012.

\bibitem{durrett2016spatial}
Richard Durrett, Jasmine Foo, and Kevin Leder.
\newblock Spatial {M}oran models, ii: cancer initiation in spatially structured
  tissue.
\newblock {\em Journal of mathematical biology}, 72:1369--1400, 2016.

\bibitem{Durrett2011intratumor}
Rick Durrett, Jasmine Foo, Kevin Leder, John Mayberry, and Franziska Michor.
\newblock Intratumor heterogeneity in evolutionary models of tumor progression.
\newblock {\em Genetics}, 188(2):461--477, 2011.

\bibitem{Durrett2011}
Rick Durrett, Jasmine Foo, Kevin Leder, John Mayberry, and Franziska Michor.
\newblock Intratumor heterogeneity in evolutionary models of tumor progression.
\newblock {\em Genetics}, 188(2):461--477, 2011.

\bibitem{elowitz2002stochastic}
Michael~B Elowitz, Arnold~J Levine, Eric~D Siggia, and Peter~S Swain.
\newblock Stochastic gene expression in a single cell.
\newblock {\em Science}, 297(5584):1183--1186, 2002.

\bibitem{feinerman2008variability}
Ofer Feinerman, Jo{\"e}l Veiga, Jeffrey~R Dorfman, Ronald~N Germain, and
  Gr{\'e}goire Altan-Bonnet.
\newblock Variability and robustness in t cell activation from regulated
  heterogeneity in protein levels.
\newblock {\em Science}, 321(5892):1081--1084, 2008.

\bibitem{JK2013}
Jasmine Foo and Kevin Leder.
\newblock Dynamics of cancer recurrence.
\newblock {\em The Annals of Applied Probability}, 23(4):1437--1468, 2013.

\bibitem{foo2013cancer}
Jasmine Foo, Kevin Leder, and Shannon~M Mumenthaler.
\newblock Cancer as a moving target: understanding the composition and rebound
  growth kinetics of recurrent tumors.
\newblock {\em Evolutionary Applications}, 6(1):54--69, 2013.

\bibitem{JKJ2014}
Jasmine Foo, Kevin Leder, and Junfeng Zhu.
\newblock Escape times for branching processes with random mutational fitness
  effects.
\newblock {\em Stochastic Processes and Their Applications},
  124(11):3661--3697, 2014.

\bibitem{giuliano1995improved}
Armando~E Giuliano, Paul~S Dale, Roderick~R Turner, Donald~L Morton, Sheila~W
  Evans, and David~L Krasne.
\newblock Improved axillary staging of breast cancer with sentinel
  lymphadenectomy.
\newblock {\em Annals of surgery}, 222(3):394, 1995.

\bibitem{gunnarsson2023statistical}
Einar~Bjarki Gunnarsson, Jasmine Foo, and Kevin Leder.
\newblock Statistical inference of the rates of cell proliferation and
  phenotypic switching in cancer.
\newblock {\em Journal of Theoretical Biology}, 568:111497, 2023.

\bibitem{gunnarsson2021exact}
Einar~Bjarki Gunnarsson, Kevin Leder, and Jasmine Foo.
\newblock Exact site frequency spectra of neutrally evolving tumors: A
  transition between power laws reveals a signature of cell viability.
\newblock {\em Theoretical Population Biology}, 142:67--90, 2021.

\bibitem{gunnarsson2023limit}
Einar~Bjarki Gunnarsson, Kevin Leder, and Xuanming Zhang.
\newblock Limit theorems for the site frequency spectrum of neutral mutations
  in an exponentially growing population.
\newblock {\em arXiv preprint arXiv:2307.03346}, 2023.

\bibitem{Ha2014}
Gavin Ha, Andrew Roth, Jaswinder Khattra, Julie Ho, Damian Yap, Leah~M.
  Prentice, Nataliya Melnyk, Andrew McPherson, Ali Bashashati, Emma Laks,
  Justina Biele, Jiarui Ding, Alan Le, Jamie Rosner, Karey Shumansky, Marco~A.
  Marra, C.~Blake Gilks, David~G. Huntsman, Jessica~N. McAlpine, Samuel
  Aparicio, and Sohrab~P. Shah.
\newblock {TITAN: Inference of copy number architectures in clonal cell
  populations from tumor whole-genome sequence data}.
\newblock {\em Genome Research}, 24(11):1881--1893, nov 2014.

\bibitem{hanagal2022large}
Pranav Hanagal, Kevin Leder, and Zicheng Wang.
\newblock Large deviations of cancer recurrence timing.
\newblock {\em Stochastic Processes and their Applications}, 147:1--50, 2022.

\bibitem{Housman2014}
Genevieve Housman and et~al.
\newblock Drug resistance in cancer: An overview.
\newblock {\em Cancers}, 6(3):1769--1792, 2014.

\bibitem{Hunter2006}
Chris Hunter and et~al.
\newblock A hypermutation phenotype and somatic msh6 mutations in recurrent
  human malignant gliomas after alkylator chemotherapy.
\newblock {\em Cancer Research}, 66(8):3987--3991, 2006.

\bibitem{ilic2011pi3k}
Nina Ilic, Tamara Utermark, Hans~R Widlund, and Thomas~M Roberts.
\newblock Pi3k-targeted therapy can be evaded by gene amplification along the
  myc-eukaryotic translation initiation factor 4e (eif4e) axis.
\newblock {\em Proceedings of the National Academy of Sciences},
  108(37):E699--E708, 2011.

\bibitem{jagers2007markovian}
Peter Jagers, Fima~C Klebaner, and Serik Sagitov.
\newblock Markovian paths to extinction.
\newblock {\em Advances in applied probability}, 39(2):569--587, 2007.

\bibitem{jasmine2012rare}
Leder~K Jasmine and Kevin Leder.
\newblock Rare events in cancer recurrence timing.
\newblock In {\em Proceedings title: proceedings of the 2012 winter simulation
  conference (WSC)}, pages 1--10, 2012.

\bibitem{CD2021}
Kevin Leder and Zicheng Wang.
\newblock Clonal diversity at cancer recurrence.
\newblock {\em arXiv preprint arXiv:2108.13472}, 2021.

\bibitem{lee2022inferring}
Nathan~D Lee and Ivana Bozic.
\newblock Inferring parameters of cancer evolution in chronic lymphocytic
  leukemia.
\newblock {\em PLOS Computational Biology}, 18(11):e1010677, 2022.

\bibitem{li2023comparison}
Aaron Li, Danika Kibby, and Jasmine Foo.
\newblock A comparison of mutation and amplification-driven resistance
  mechanisms and their impacts on tumor recurrence.
\newblock {\em Journal of Mathematical Biology}, 87(4):59, 2023.

\bibitem{liang2004modeling}
Hua Liang and Naijun Sha.
\newblock Modeling antitumor activity by using a non-linear mixed-effects
  model.
\newblock {\em Mathematical Biosciences}, 189(1):61--73, 2004.

\bibitem{Miller2014}
Christopher~A. Miller, Brian~S. White, Nathan~D. Dees, Malachi Griffith,
  John~S. Welch, Obi~L. Griffith, Ravi Vij, Michael~H. Tomasson, Timothy~A.
  Graubert, Matthew~J. Walter, Matthew~J. Ellis, William Schierding, John~F.
  DiPersio, Timothy~J. Ley, Elaine~R. Mardis, Richard~K. Wilson, and Li~Ding.
\newblock {SciClone: Inferring Clonal Architecture and Tracking the Spatial and
  Temporal Patterns of Tumor Evolution}.
\newblock {\em PLoS Computational Biology}, 10(8):e1003665, aug 2014.

\bibitem{narod2012disappearing}
SA~Narod.
\newblock Disappearing breast cancers, 2012.

\bibitem{Nicholson2023sequential}
Michael~D Nicholson, David Cheek, and Tibor Antal.
\newblock Sequential mutations in exponentially growing populations.
\newblock {\em PLOS Computational Biology}, 19(7):e1011289, 2023.

\bibitem{Roth2014}
Andrew Roth, Jaswinder Khattra, Damian Yap, Adrian Wan, Emma Laks, Justina
  Biele, Gavin Ha, Samuel Aparicio, Alexandre Bouchard-Côté, and Sohrab~P
  Shah.
\newblock Pyclone: statistical inference of clonal population structure in
  cancer.
\newblock {\em Nature Methods}, 11:396--398, 4 2014.

\bibitem{salichos2020estimating}
Leonidas Salichos, William Meyerson, Jonathan Warrell, and Mark Gerstein.
\newblock Estimating growth patterns and driver effects in tumor evolution from
  individual samples.
\newblock {\em Nature communications}, 11(1):732, 2020.

\bibitem{Stead2013}
Lucy~F. Stead, Kate~M. Sutton, Graham~R. Taylor, Philip Quirke, and Pamela
  Rabbitts.
\newblock Accurately identifying low-allelic fraction variants in single
  samples with next-generation sequencing: Applications in tumor subclone
  resolution.
\newblock {\em Human Mutation}, 34:1432--1438, 10 2013.

\bibitem{suda2009egfr}
Kenichi Suda, Ryoichi Onozato, Yasushi Yatabe, and Tetsuya Mitsudomi.
\newblock Egfr t790m mutation: a double role in lung cancer cell survival?
\newblock {\em Journal of Thoracic Oncology}, 4(1):1--4, 2009.

\bibitem{sze2004primary}
Wai-Man Sze, Anne~WM Lee, Tsz-Kok Yau, Rebecca~MW Yeung, Kam-Ying Lau,
  Samuel~KC Leung, Albert~WM Hung, Michael~CH Lee, Rick Chappell, and Kuen
  Chan.
\newblock Primary tumor volume of nasopharyngeal carcinoma: prognostic
  significance for local control.
\newblock {\em International Journal of Radiation Oncology* Biology* Physics},
  59(1):21--27, 2004.

\bibitem{VanLoo2010}
Peter Van~Loo, Silje~H. Nordgard, Ole~Christian Lingjærde, Hege~G. Russnes,
  Inga~H. Rye, Wei Sun, Victor~J. Weigman, Peter Marynen, Anders Zetterberg,
  Bjørn Naume, Charles~M. Perou, Anne-Lise Børresen-Dale, and Vessela~N.
  Kristensen.
\newblock Allele-specific copy number analysis of tumors.
\newblock {\em Proceedings of the National Academy of Sciences},
  107(39):16910--16915, 2010.

\bibitem{werner2020measuring}
Benjamin Werner, Jack Case, Marc~J Williams, Ketevan Chkhaidze, Daniel Temko,
  Javier Fern{\'a}ndez-Mateos, George~D Cresswell, Daniel Nichol, William
  Cross, Inmaculada Spiteri, et~al.
\newblock Measuring single cell divisions in human tissues from multi-region
  sequencing data.
\newblock {\em Nature communications}, 11(1):1035, 2020.

\end{thebibliography}

\newpage

\vspace{0.5in}

% \noindent Kevin Leder and Zicheng Wang \\
% Department of Industrial and Systems Engineering\\
% University of Minnesota\\
% Minneapolis, MN 55455, USA \\
% kevin.leder@isye.umn.edu, wang2569@umn.edu

%\vspace{0.2in}

\end{document}